\newtheorem{definition}{Definition}
\newtheorem{notation}{Notation}
\newtheorem{remark}{Remark}
\newtheorem{lemma}{Lemma}
\newtheorem{proposition}{Proposition}
\newtheorem{theoremA}{\,\,\,\,\,\, \normalfont\scshape Theorem}
\newtheorem{theoremB}{\,\,\,\,\,\, \normalfont\scshape Theorem}
\newtheorem{theoremC}{\,\,\,\,\,\, \normalfont\scshape Theorem}
\definecolor{myblue}{rgb}{0,0,0.5}
\numberwithin{equation}{section}
\patchcmd{\endmyassumption}{\@endpefalse}{}{}{}
\providecommand{\keywords}[1]{\textbf{\textit{Keywords:}} #1}
\date{August, 2024}
\title{Cycle-Star Motifs: Network Response to Link Modifications\thanks{This paper was published in Journal of Nonlinear Science, Volume 34, article number 60, (2024).}}
\author[1,2]{Sajjad Bakrani}
\author[1,2]{Narcicegi Kiran}
\author[1]{Deniz Eroglu}
\author[2,3]{Tiago Pereira}
\affil[1]{Faculty of Engineering and Natural Sciences, Kadir Has University, 34083, Istanbul, Turkey}
\affil[2]{Instituto de Ci\^encias Matem\'aticas e Computac\~ao, Universidade de S\~ao Paulo, São Carlos, 13566-590, Brazil}
\affil[3]{Department of Mathematics, Imperial College London, London SW7 2AZ, United Kingdom}
\begin{document}
\maketitle

\begin{abstract}
Understanding efficient modifications to improve network functionality is a fundamental problem of scientific and industrial interest. We study the response of network dynamics against link modifications on a weakly connected directed graph consisting of two strongly connected components: an undirected star and an undirected cycle. We assume that there are directed edges starting from the cycle and ending at the star (master-slave formalism). We modify the graph by adding directed edges of arbitrarily large weights starting from the star and ending at the cycle (opposite direction of the cutset). We provide criteria (based on the sizes of the star and cycle, the coupling structure, and the weights of cutset and modification edges) that determine how the modification affects the spectral gap of the Laplacian matrix. We apply our approach to understand the modifications that either enhance or hinder synchronization in networks of chaotic Lorenz systems as well as Roessler. Our results show that the hindrance of collective dynamics due to link additions is not atypical as previously anticipated by modification analysis and thus allows for better control of collective properties.

\keywords{Laplacian matrix, Spectral gap, Braess's paradox, Eigenvalue modification, Eigenvalue perturbation, Global perturbation, Network modification, Spectral analysis}
\end{abstract}
\newpage
\tableofcontents

\section{Introduction}

Many systems in nature are modelled as networks of interacting units with examples ranging from neuroscience \cite{ermentrout2010mathematical} to engineering  \cite{newman2018networks}. Recent work has revealed that the network interaction structure plays a crucial role in the network emergent dynamics \cite{Eroglu2017synchronisation, Prasad2010amplitude, Louodop2019extreme, Aguiar2019feedforward, Field2015heteroclinic}. Predicting the impact of the network structure on the dynamics is an intricate nonlinear problem that leads to many unexpected results. Indeed, in some situations improving the network structure may lead to functional failures such as Braess's paradox \cite{Eldan2017braess} and synchronization loss \cite{Pade2015improving, nishikawa2010network}. In large networks depending on the interaction function and isolated dynamics of the nodes, a topological hub may fail to be a functional hub \cite{TanziJEMS2020, eguiluz2005scale}.

The effects of network topology on dynamical phenomena, such as synchronization, diffusion and random walks can be related to spectral properties of the graph, see for instance \cite{Chung1997spectral,Eroglu2017synchronisation}. Indeed, to predict the consequences of network modification on the dynamics, one needs to investigate the highly nonlinear changes in the spectrum of the graph Laplacian \cite{poignard2019effects, Pade2015improving,biyikoglu2007laplacian}

Although certain correlations between network structure and dynamics have been observed in experimental \cite{Hart2015adding} and theoretical \cite{Pade2015improving, Milanese2010approximating} investigations, most of these results are concerned with small modifications to the network. There is a lack of rigorous results to determine the relationship between the network structure and its dynamic properties for arbitrary size modifications. Most of the results in this direction rely on the modification theory of eigenvalues to determine which structural changes are detrimental to the network dynamics. However, previous results relying on perturbation theory suggest that desynchronizing the network by adding new links is unusual \cite{poignard2019effects}. To understand this problem, we need to unveil the full nonlinear picture and deal with large changes in the topology.

Networks are a combination of motifs that dictate dynamical behavior and provide resilience to the overall system~\cite{ma2008ordered, kashtan2005spontaneous}. We focus on two motifs of complex networks -- a cycle and a star -- since they are the main constituents of important networks. 
Indeed, cycles are typical components in the nervous system \cite{alexander1986} and orientation tuning in visual cortex \cite{ben1997traveling}. Also, in the context of neuroscience highly connected nodes, called hubs, play a fundamental role in the network \cite{bonifazi2009gabaergic}. These networks with hubs are modeled as a collection of star motifs, and each star motif is capable of generating intricate dynamics \cite{vlasov2015explosive}, as well as their overall interaction \cite{tonjes2021coherence}.

Although both cycle and star motifs were investigated for noteworthy network dynamics such as collective behavior \cite{mersing2021novel,muni2020chimera,kantner2013bifurcation,manik2017cycle,corder2023emergence}  and both motifs have a fully developed spectral theory \cite{Brouwer2011spectra}  their eigenvectors and eigenvalues can be fully described (as in the case of rings where the matrices are circulant), when these motifs are coupled,  the eigenvalues problem becomes an intricate nonlinear problem that remains open.

In this paper, we consider models of networks consisting of cycles and stars coupled in a master-slave topology. Although our problem is dynamics-motivated, we state our main results in a graph theoretic form and consider the synchronization as an application. This is because, in a broader sense, the spectral properties of the graph Laplacian are important in the study of graph connectedness and, hence, any phenomena related to this concept \cite{mohar1997some}.



\subsection{Informal statements of our results}

We consider three models illustrated in Figures \ref{Fig8767rt6r5jytlo7ro6se}, \ref{Figi867ssvsutie7r57trlr8trwlegkfyuga}, and \ref{Fig9798v7tc5extdht4ebrdbgrstgds}. All these three models have a master-slave structure, a cycle \(C_{n}\), a star \(S_m\), and cutset edge(s) starting from the cycle and ending at the hub of the star. We modify these networks and break the master-slave structure by adding directed links from the star to the cycle (red-color edges in the figures).

\begin{figure}[h]
\begin{subfigure}[]{0.5\textwidth}
\center
\includegraphics[scale=0.12]{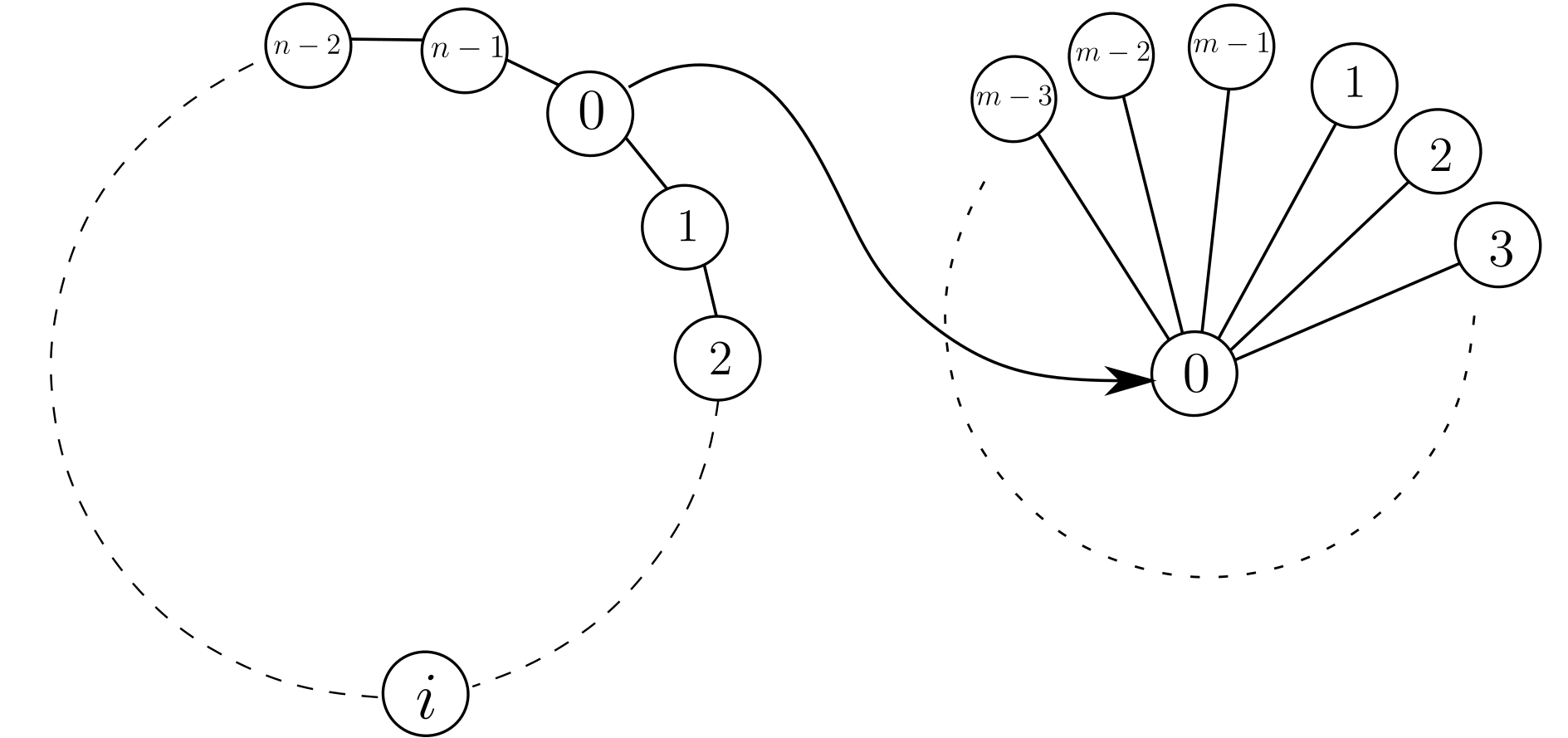}
\caption{The unmodified graph \(G\).}
\label{Fig345uru5rtjyrdtrxgdxw}
\end{subfigure}
\begin{subfigure}[]{0.5\textwidth}
\center
\includegraphics[scale=0.12]{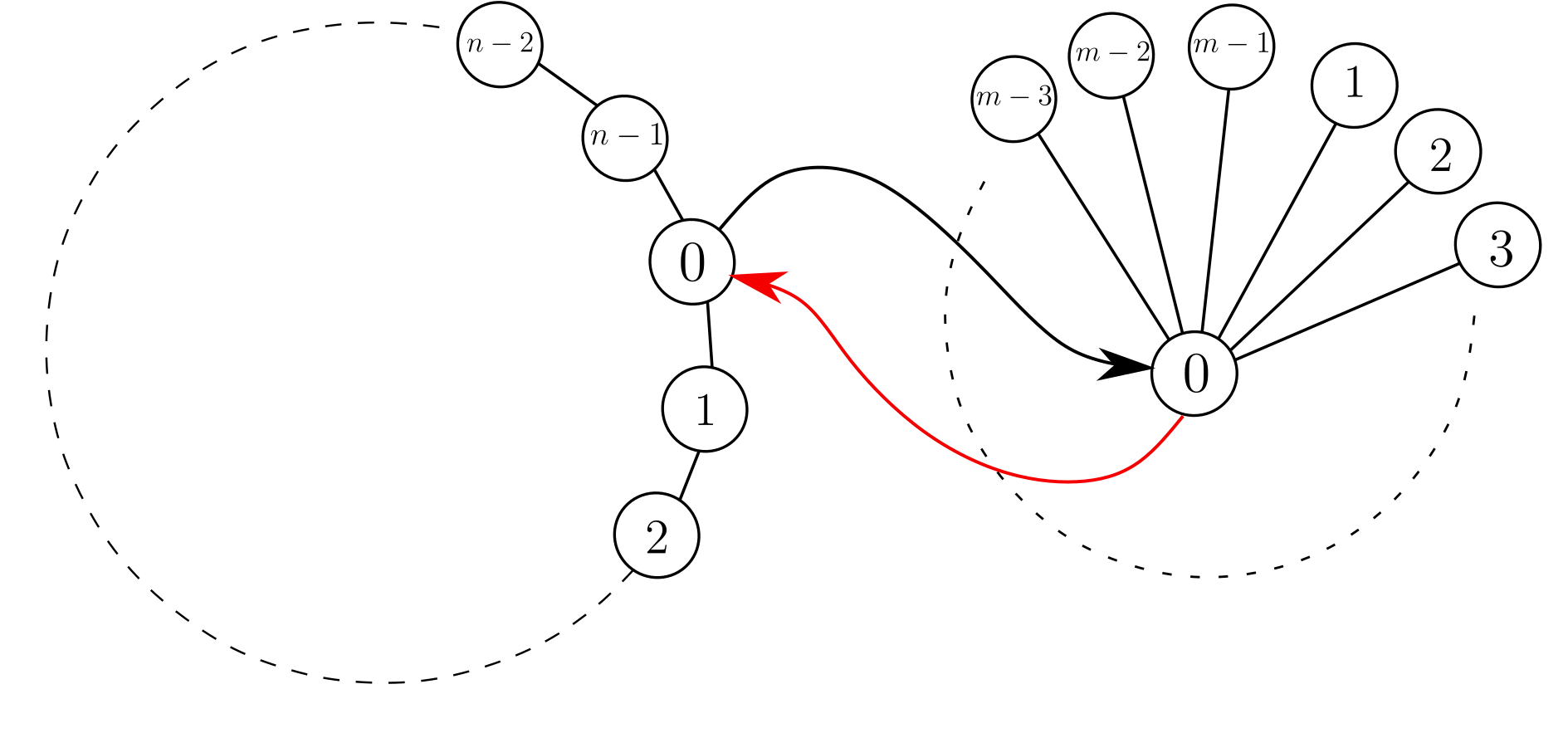}
\caption{The modified graph \(G_{p}\).}
\label{Fig54kugguiyou038byaas}
\end{subfigure} 
\caption{Model I: Breaking the master-slave through hub coupling. We add a directed link from the hub of the star to the cutset node (the red-color edge) where the cutset node refers to the node which the cutset edge starts from and the weakly connected graph becomes strongly connected. The weight of each of the black-color edges is one while the weight of the red-color edge (modification edge) is arbitrary.}
\label{Fig8767rt6r5jytlo7ro6se}
\end{figure}
\begin{figure}[h]
\begin{subfigure}[]{0.5\textwidth}
\center
\includegraphics[scale=0.11]{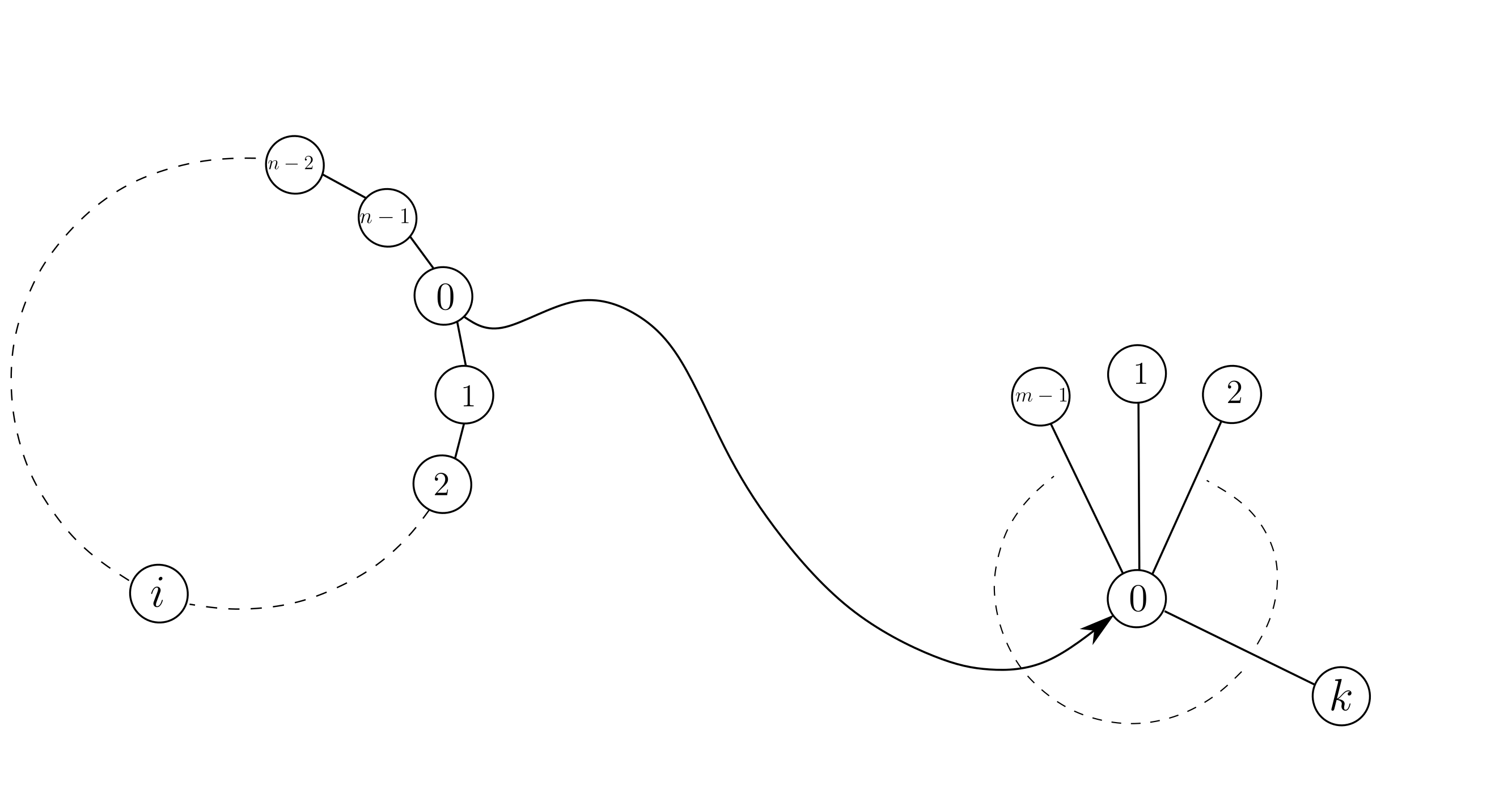}
\caption{The unmodified graph \(G\).}
\label{Fig97o8gy7tuc7itca6rekrkwuayes}
\end{subfigure}
\begin{subfigure}[]{0.5\textwidth}
\center
\includegraphics[scale=0.11]{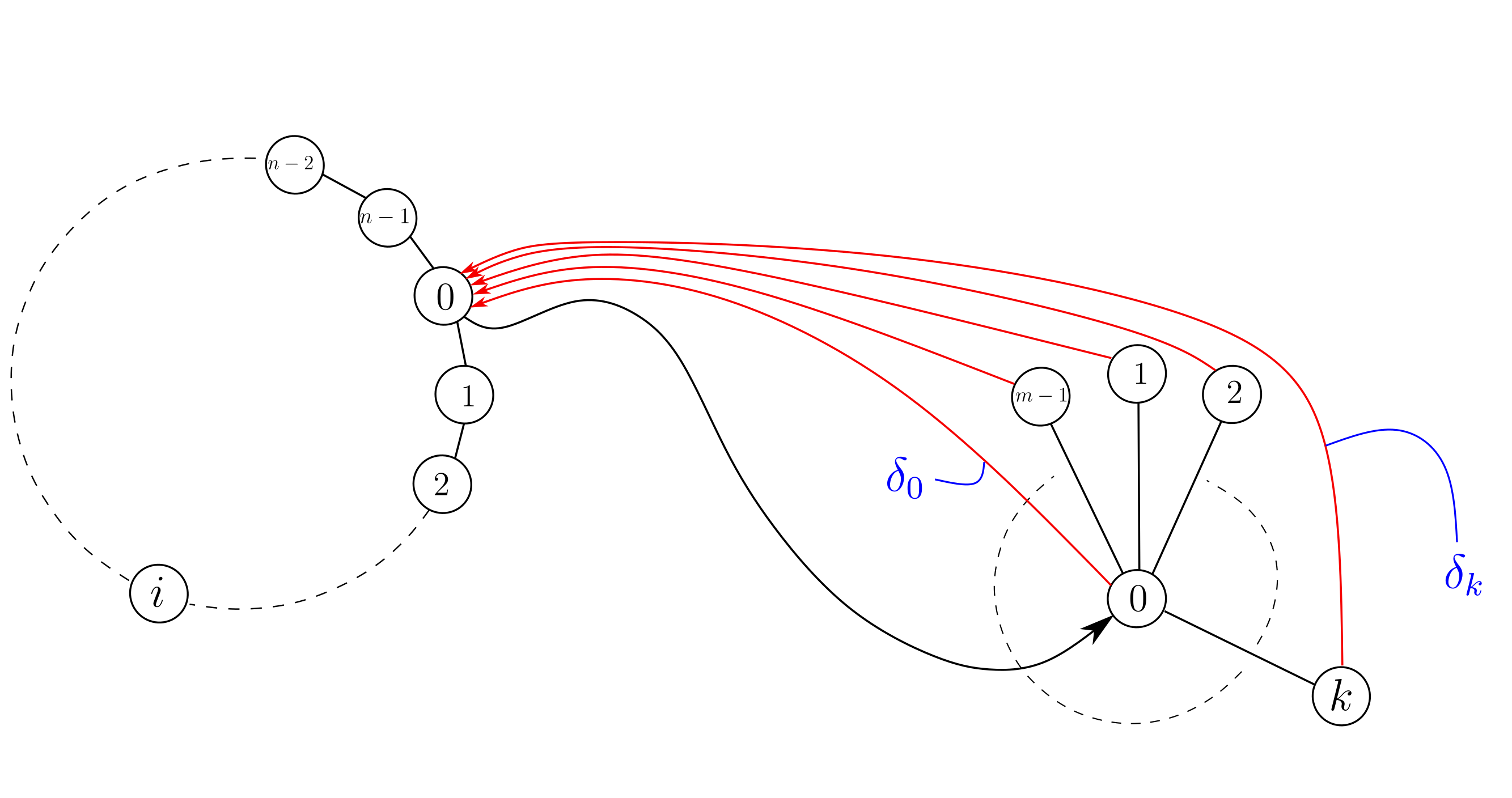}
\caption{The modified graph \(G_{p}\).}
\label{Figkiut765j6ra8rekyefk7wrkweewdfsf}
\end{subfigure}
\caption{Model II: Breaking the master-slave through multiple couplings. We add links from some nodes of the star to the cutset node of the cycle. The weight of each of the black-color edges is one while the weights of the red-color edges (modification edges) are arbitrary.}
\label{Figi867ssvsutie7r57trlr8trwlegkfyuga}
\end{figure}

\begin{figure}[h!]
\begin{subfigure}[]{0.5\textwidth}
\center
\includegraphics[scale=0.11]{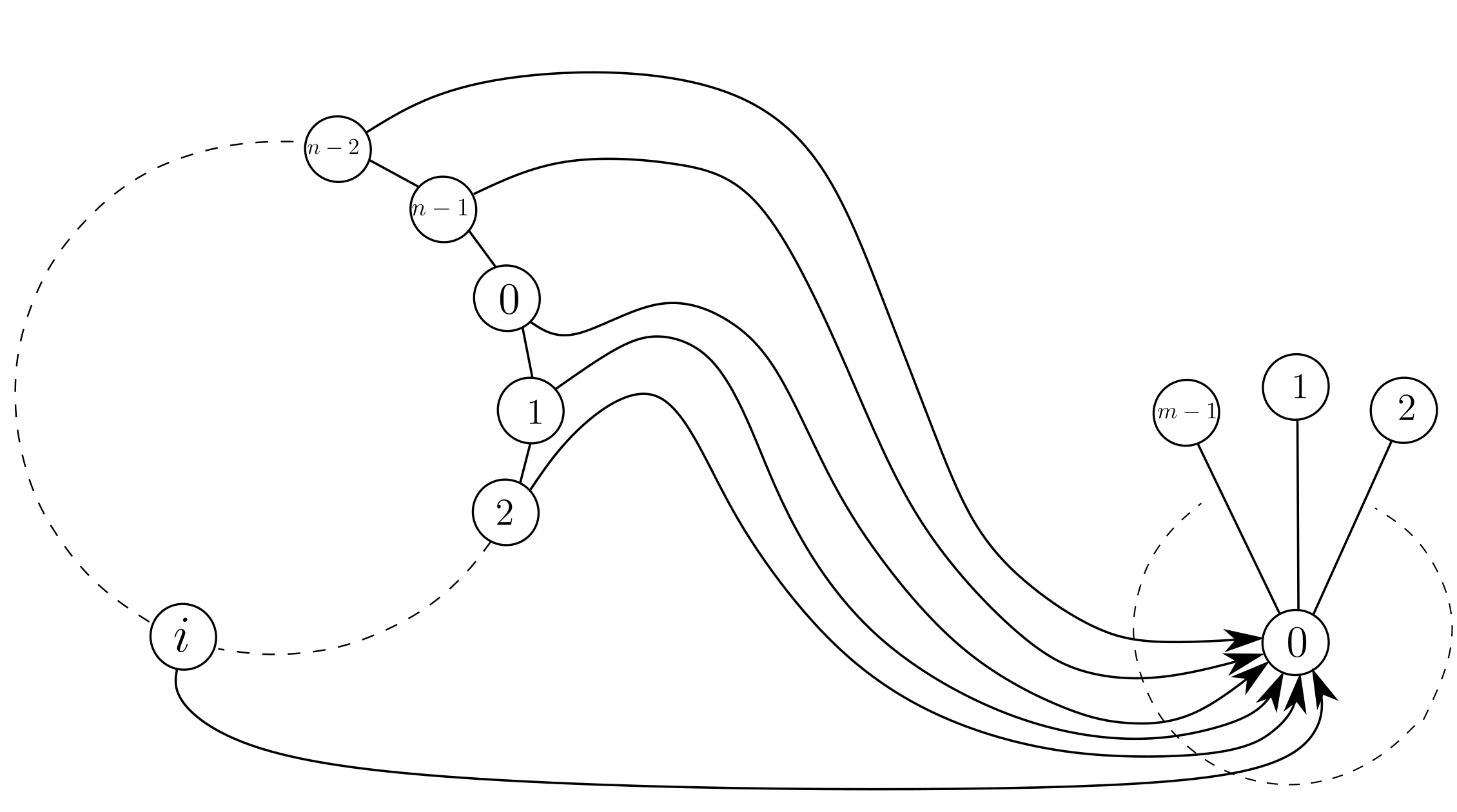}
\caption{The unmodified graph \(G\).}
\label{Fig9798v7tc5extdht4ebrdbgrstgds111}
\end{subfigure}
\begin{subfigure}[]{0.5\textwidth}
\center
\includegraphics[scale=0.11]{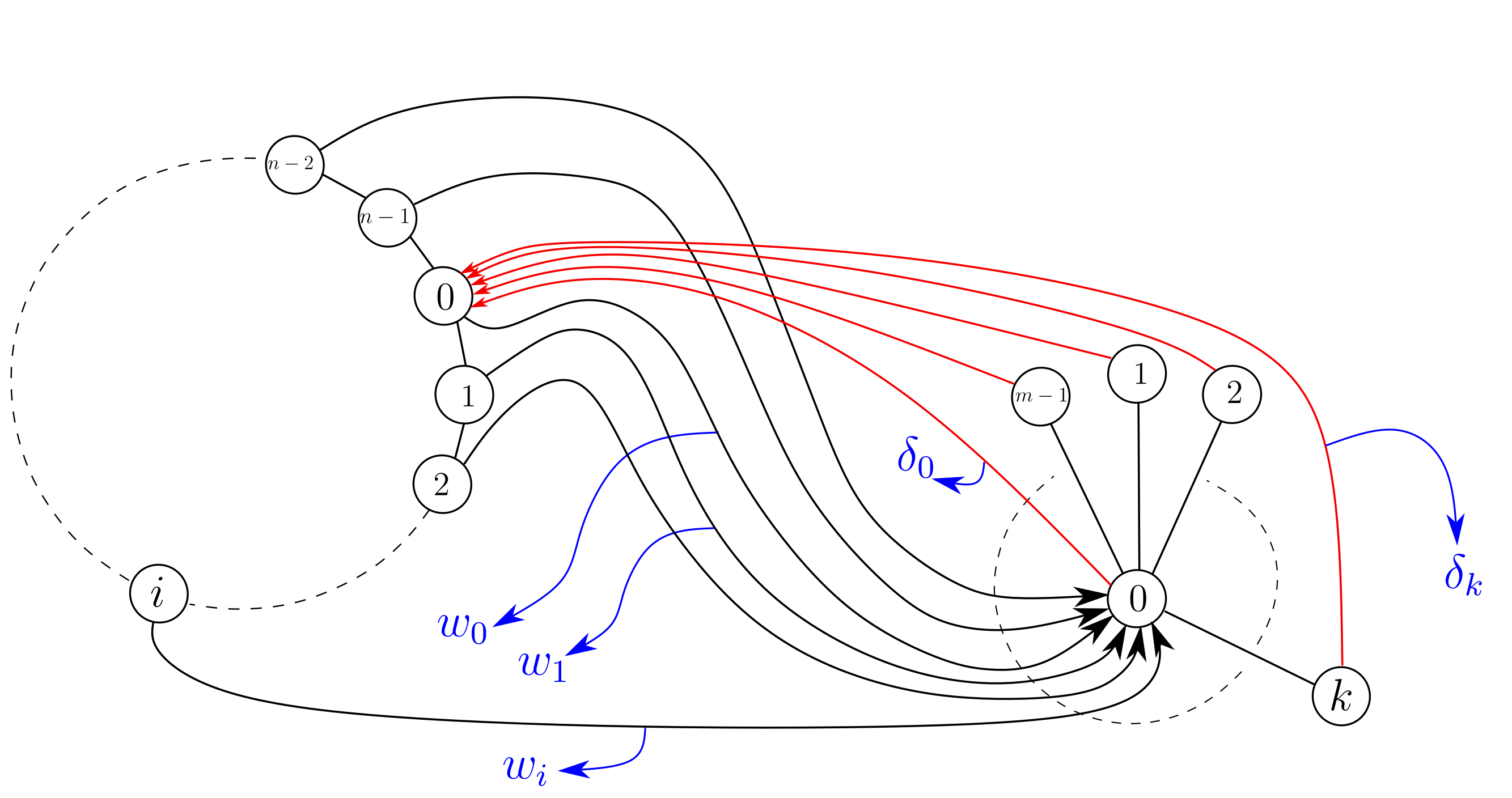}
\caption{The modified graph \(G_p\).}
\label{Fig9798v7tc5extdht4ebrdbgrstgds222}
\end{subfigure}
\caption{Model III: Breaking the generalized master-slave through multiple couplings. We add links from nodes of the star to the one cutset node of the cycle. The weight of each of the black-color edges is one while the weights of the red-color edges (modification edges) are arbitrary.}
\label{Fig9798v7tc5extdht4ebrdbgrstgds}
\end{figure}

\subsubsection{Adjacency matrices and graph Laplacians}

Let \(G\) be a weighted directed graph (digraph) whose nodes are labelled by \(1,\ldots, n\). We define the adjacency matrix of \(G\) by \(A_{G} = (A_{ij})\), where \(A_{ij}\geq 0\) is the weight of the directed edge starting from node \(j\) and ending at node \(i\). The in-degree of a node is the sum of the weights of the edges that the node receives from other nodes, i.e., the in-degree of the node $i$ is $\sum_{j} A_{ij}$. We define the Laplacian matrix of \(G\) by \(L_{G} := D_{G} - A_{G}\), where \(D_G\) is a diagonal matrix whose \((i,i)\)-entry is the in-degrees of the node \(i\) of \(G\). Let \(L_{G}\) and \(L_{G_p}\) represent the Laplacians of the unmodified and modified graphs, respectively. Let \(\lambda_{2}(L_{G})\) and \(\lambda_{2}(L_{G_{p}})\) be the associated second minimum eigenvalues, so-called spectral gap. Our results explain how the modification affects the spectral gap of the Laplacian matrices of these models. We provide more details in Section \ref{LaplacianMatrices}.

\subsubsection{Results (informal version)}

Assume \(\delta_{0}\geq 0\) is the weight of the modification edge starting from the hub and \(\delta\geq 0\) is the sum of the weights of all the modification edges. In model I, we have \(\delta_{0} = \delta\), and in the other two models, \(\delta_{0} \leq \delta\). Let \(m\) and \(n\) be the sizes of the star and cycle, respectively. The term \(o(1)\) in the informal statements of the following Theorems \ref{Thm8oyrq8yr4q} and \ref{Thm8y8v7rti473rt} (resp. Theorem \ref{Thm233yisu3y3rqu}) stands for a function of \(m\) (resp. \((m,w)\)) that converges to \(0\) as \(m\rightarrow \infty\) (resp. \(\frac{m}{w}\rightarrow \infty\)).
When the weight of the modification is small, we will call this modification local. This is because the results follow from local analysis of the eigenvalues. If the weight of the modification is large we called it global, as the analysis require global techniques to gain insights on the eigenvalues. 
These models are discussed precisely in Section \ref{Section87y7gq3gid}. Here, we give an informal version of our main results.

\begin{theoremA}[Informal statement]
Consider model I illustrated in Figure \ref{Fig8767rt6r5jytlo7ro6se}. Let the modification \(\delta > 0\) be arbitrary (it does not need to be sufficiently small). We have
\begin{enumerate}
\item Although \(L_{G_{p}}\) is not necessarily symmetric, all of its eigenvalues are real.
\item There exists a critical cycle size \(n_{c} = \pi\sqrt{m+1}[1 + o(1)]\) such that \(\lambda_{2}\left(L_{G_{p}}\right) < \lambda_{2}\left(L_{G}\right)\) if and only if \(n \geq n_{c}\).
\end{enumerate}
\label{Thm8oyrq8yr4q}
\end{theoremA}

We illustrate Theorem A in Figure \ref{curveintensity}.

\begin{figure}[h!]
\centering
\includegraphics[scale=0.58]{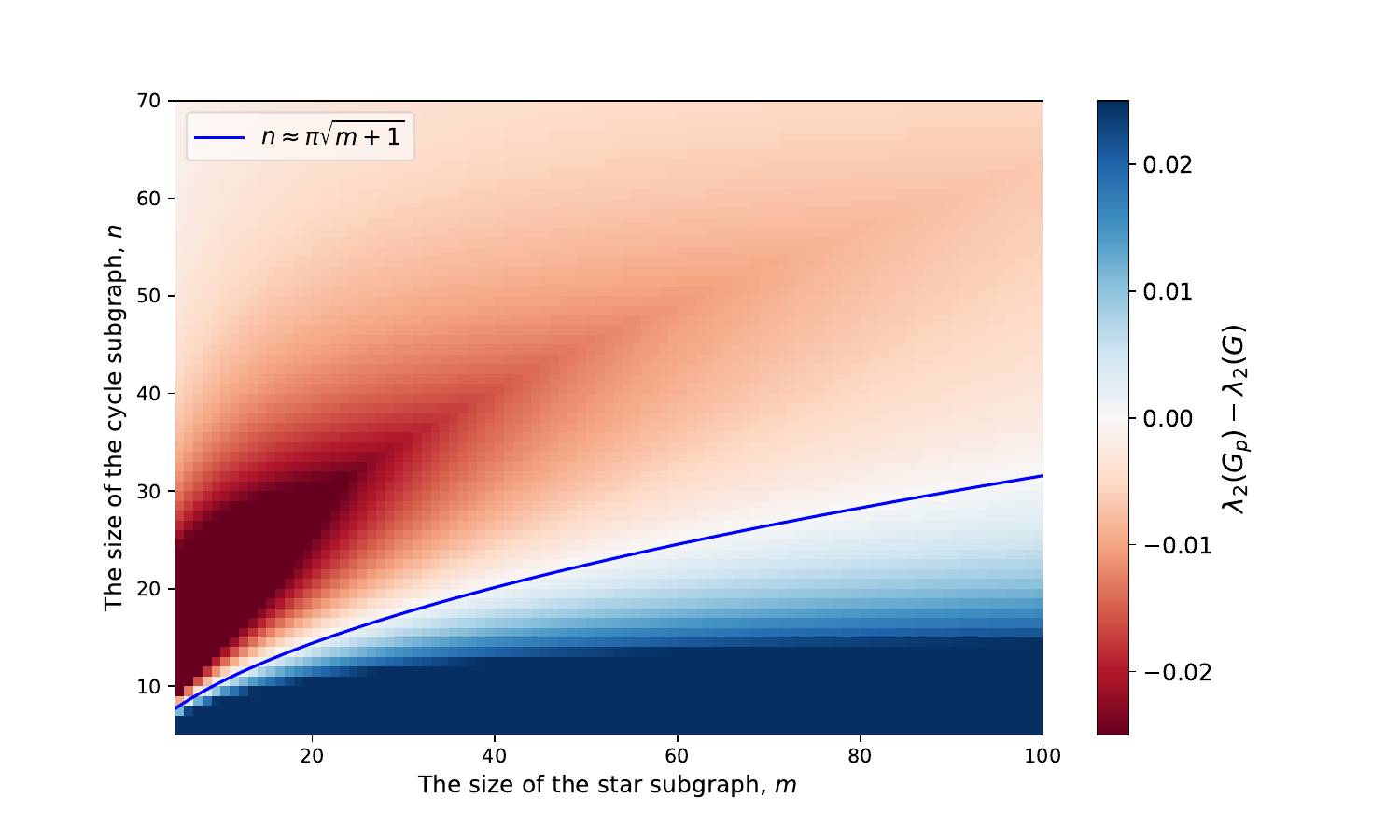}
\caption{A comparison of the cases in Theorem A and computations of $\lambda_2(G_p)-\lambda_2(G)$. We create a graph $G$ as described in model \ref{Fig345uru5rtjyrdtrxgdxw} with cycle $C_n$ and star $S_m$ subgraphs whose sizes are $n$ and $m$, respectively. Then we modify the network as shown in model \ref{Fig54kugguiyou038byaas} where $\delta_0=1$ and calculate the difference $\lambda_2(G_p)-\lambda_2(G)$ to characterize the behavior of the second minimum eigenvalue after modification. The red color of grids corresponds to a decrease of the second minimum eigenvalue after modification, and the blue color of grids corresponds to an increase of the second minimum eigenvalue after modification, where the intensity of the color at each grid shows the size of the difference $\lambda_2(G_p)-\lambda_2(G)$. Simultaneously, the blue curve given by $n_c=\pi \sqrt{m+1}$ is shown. Thus, regions separated by the blue curve manifest the signature of $\lambda_2(G_p)-\lambda_2(G)$ and it corresponds to the critical transition between the cases stated in Theorem A, i.e., decreasing or increasing behavior of the second minimum eigenvalue after modification.}
\label{curveintensity}
\end{figure}

To give the informal statement of Theorem \ref{Theorem18na75t87ckq7tyrm3wa}, let \(\rho := \frac{\delta_{0}}{\delta}\). This ratio can be seen as a measure of the modification that the cycle receives from the hub of the star relative to the modification it receives from the leaves of the star. We have \(\rho \leq 1\), and by setting \(\rho = 1\), the model II reduces to model I. 

\begin{theoremB}[Informal statement]
Consider model II illustrated in Figure \ref{Figi867ssvsutie7r57trlr8trwlegkfyuga}. We have
\begin{enumerate}
\item Under a local modification, the statement of Theorem \ref{Thm8oyrq8yr4q} is valid for model II. When \(\delta > 0\) is sufficiently small, all the eigenvalues of \(L_{G_p}\) are real and, the modification decreases the spectral gap if and only if the size of the cycle is larger than the critical value \(n_{c}(m)\).
\item Under a global modification (\(\delta\) be arbitrary), the statement of Theorem \ref{Thm8oyrq8yr4q} is valid for model II when \(\rho > K\), where \(0 < K < 1\) is a constant given in Section \ref{Section87y7gq3gid}.
\item Under a global modification (\(\delta\) be arbitrary), we have \(\mathrm{Re}\left(\lambda_{2}\left(L_{G_{p}}\right)\right) < \lambda_{2}\left(L_{G}\right)\) if the size of the cycle is larger than a critical value \(n^{*}_{c}(m) = 2 \pi\sqrt{m+1}[1 + o(1)]\).
\end{enumerate}
\label{Thm8y8v7rti473rt}
\end{theoremB}

\begin{theoremC}[Informal statement]
Consider model III illustrated in Figure \ref{Fig9798v7tc5extdht4ebrdbgrstgds}. Let \(w\) be the sum of the weights of all cutset edges. There exist two critical \(n_{c}(m,w)\) and \(n^{*}_{c}(m,w)\) such that \(n_{c}^{*}\approx 2 n_{c} = 2 \pi\sqrt{\frac{m}{w} + 1}[1 + o(1)]\), and
\begin{enumerate}
\item Under a local modification, we have
\begin{enumerate}
\item if \(n > n^{*}_{c}(m,w)\), then \(\lambda_{2}(L_{G_p})\leq \lambda_{2}(L_G)\).
\item if \(n_{c}(m,w) < n < n^{*}_{c}(m,w)\), then both increasing and decreasing in the spectral gap can happen. See Theorem \ref{Thm274i7tigfdyf} for the distinction between cases.
\item if \(n < n_{c}(m,w)\), then \(\lambda_{2}(L_{G_p}) > \lambda_{2}(L_G)\).
\end{enumerate}
\item Under a global modification, if \(n > n^{*}_{c}(m,w)\), then \(\mathrm{Re}(\lambda_{2}(L_{G_p}))\leq \lambda_{2}(L_G)\).
\end{enumerate}
\label{Thm233yisu3y3rqu}
\end{theoremC}

In all three mentioned models, we consider the scenario in which the cutset edges start from the cycle and end at the hub of the star, briefly called the hub connection. Another scenario that can be considered is where the cutset edges end at the leaves of the star instead of its hub, briefly called leaf connection. Our numerical investigation shows that there exists a critical \(n_c'\) analogous to $n_c$ in Theorem A and a critical $n'^{*}_c(m)$ analogous to $n^{*}_{c}(m)$ in Theorem B for the leaf connection. However, $n_c'$ is bounded below by $n_c$, likewise $n'^{*}_c(m)$ is bounded below by $n^{*}_c(m)$. In other words, if we compare the incidence number of \(\lambda_{2}\left(L_{G_{p}}\right) < \lambda_{2}\left(L_{G}\right)\) for hub and leaf connection, hub connection maximizes the incidence number of \(\lambda_{2}\left(L_{G_{p}}\right) < \lambda_{2}\left(L_{G}\right)\) for the same parameter set.

The Laplacian matrix \(L_{G}\) of the unmodified graph in all the models I, II, and III is a block lower-triangular matrix, see the form (\ref{triangularstructure}). However, adding a modification in the opposite direction of the cutset breaks the triangular structure of \(L_{G}\), which turns the analysis of its spectral gap into a non-trivial problem. Our approach to analyzing the changes in the spectral gap consequent to the graph modification is to investigate a secular equation of the Laplacian matrix and its roots. Our analysis is not restricted to the local modification\footnote{note that even for investigating local changes in the spectral gap, the standard approach, see e.g. Theorem 6.3.12 in \cite{horn2012matrix}, cannot be applied since the spectral gaps in our models are not necessarily simple}; we indeed analyze the change in the spectral gap under modification of arbitrary size. This requires further work on not only analyzing the modification of spectral gap but also understanding the modifications and distribution of the whole spectrum of the Laplacian matrix.

\section{Applications to synchronization}

We consider synchronization in networks of diffusively coupled oscillators as an application. Consider a triplet \(\mathcal{G} = (G, f, H)\), where \(G\) is a weighted digraph, and \(f, H\in \mathcal{C}^{1}(\mathbb{R}^{l})\) for \(l\geq 1\). The triplet \(\mathcal{G}\) defines a system of the form
\begin{equation}\label{eq87tvk7vksyueeagt5to}
\dot{x}_{i} = f\left(x_{i}\right) + \Theta\sum_{j=1}^{N} A_{ij} H (x_{j} - x_{i}),\qquad i = 1, 2,\ldots, N,
\end{equation}
where \(\Theta\geq 0\) is called the coupling strength. Each variable \(x_{i}\) represents the state of the $i$th node of \(G\), the function \(f\) describes the isolated dynamics at each node, and the function \(H\) is called the coupling function. We call the triplet \(\mathcal{G}\) or its associated system (\ref{eq87tvk7vksyueeagt5to}) a network of diffusively coupled (identical) systems.

We define the synchronization manifold as
\begin{equation}\label{eq23kugkerferiwqrhuwehg}
M := \left\lbrace \left(x_{1}, \ldots, x_{N}\right): \quad x_1 = \cdots = x_N \in U\right\rbrace.
\end{equation}
We say that a network \(\mathcal{G}\) synchronizes if there exists an open neighborhood \(V\) of \(M\) such that the forward orbit of any point in \(V\) converges to \(M\). It is shown \cite{Pereira2014towards} that for a network \(\mathcal{G}\) with a coupling strength \(\Theta\), if
\begin{enumerate}
\item the graph \(G\) has a spanning diverging tree, and
\item there exists an inflowing open ball \(U\subset \mathbb{R}^{l}\) which is invariant with respect to the flow of the isolated system \(\dot{x} = f(x)\), and we have \(\| Df(x)\| \leq K\) for some \(K > 0\) and all \(x \in U\), and
\item we have \(H(0) = 0\); moreover, all the eigenvalues of \(DH(0)\) are real and positive,
\end{enumerate}
then there exists \(\Theta_{c} \geq 0\) such that when \(\Theta \geq \Theta_{c}\), \(\mathcal{G}\) synchronizes. We call
\begin{equation}\label{eq08o8hyh7eqot43er}
\Theta_{c} = \frac{\rho}{ \mathrm{Re}(\lambda_{2})},
\end{equation}
the critical coupling strength where \(\rho = \rho(f, DH(0))\) is a constant. Note that if the third assumption is not fulfilled, the synchronization condition (\ref{eq08o8hyh7eqot43er}) may no longer be valid. However, in this case, new synchronization conditions may be obtained under the framework of master stability function formalism \cite{Eroglu2017synchronisation}. Relation (\ref{eq08o8hyh7eqot43er}) with assumptions stated above gives us a criterion to compare synchronizability in networks. More precisely,
\begin{definition}
Consider two networks \(\mathcal{G}_{1} = (G_{1}, f_{1}, H_{1})\) and \(\mathcal{G}_{2} = (G_{2}, f_{2}, H_{2})\) that satisfy the assumptions above. Let \(\Theta_{c}(\mathcal{G}_{1})\) and \(\Theta_{c}(\mathcal{G}_{2})\) be the critical coupling strengths of \(\mathcal{G}_{1}\) and \(\mathcal{G}_{2}\), respectively. We say that \(\mathcal{G}_{1}\) is more synchronizable than \(\mathcal{G}_{2}\) if \(\Theta_{c}(\mathcal{G}_{1}) < \Theta_{c}(\mathcal{G}_{2})\).
\end{definition}
Having \(\Theta_{c}(\mathcal{G}_{1}) < \Theta_{c}(\mathcal{G}_{2})\)  means that \(\mathcal{G}_{1}\) synchronizes for a larger range of \(\Theta\) than \(\mathcal{G}_{2}\). Let us now consider the case that two networks \(\mathcal{G}_{1}\) and \(\mathcal{G}_{2}\) only differ in their topology, i.e., having the same isolated dynamics and coupling functions while the graph structures can be different. In this case, following (\ref{eq08o8hyh7eqot43er}), the spectral gaps of the underlying graphs of the networks determine which one is more synchronizable. 

Let consider two networks \(\mathcal{G}_{1} = (G_{1}, f, H)\) and \(\mathcal{G}_{2} = (G_{2}, f, H)\) that satisfy the assumptions above. Moreover, let \(\lambda_{2}(G_{1})\) and \(\lambda_{2}(G_{2})\) be the spectral gaps of \(G_{1}\) and \(G_{2}\), respectively. The network \(\mathcal{G}_{1}\) is more synchronizable than \(\mathcal{G}_{2}\) if and only if \(\lambda_{2}(G_{1}) > \lambda_{2}(G_{2})\). 

\subsection{Synchronization of coupled Lorenz oscillators} \label{cycle-star-numerics}
We consider the following settings for model II given in Figure \ref{Figi867ssvsutie7r57trlr8trwlegkfyuga}:
Two networks \(\mathcal{G} = (G, f, H)\) and \(\mathcal{G}_{p} = (G_{p}, f, H)\) are generated where $G$ and $G_p$ are the unmodified and the modified graphs, respectively. The chosen isolated dynamics $f$ is the Lorenz oscillator
\begin{equation}
\begin{array}{ll}
\dot{x}=\sigma(y-x),\\
\dot{y}=x(\gamma-z)-y,\\
\dot{z}=xy-\beta z,
\end{array}
\end{equation}
where $\sigma=10$, $\gamma=28$, $\beta=8/3$. Here, \(H\) is the identity function on \(\mathbb{R}^3\). For the described setting, equation (\ref{eq08o8hyh7eqot43er}) can be written as $\Theta_c=\frac{\kappa}{\mathrm{Re}(\lambda_2)}$, where $\kappa$ is defined as in Section 5 of \cite{Eroglu2017synchronisation}.  We numerically find that $\kappa\approx0.9$. So, the expected values of $\Theta_c(\mathcal{G})$ and $\Theta_c(\mathcal{G}_p)$ are calculated accordingly. We examine two experiments to reveal how link addition can lead to synchronization in the network $\mathcal{G}_p$ or break the synchronization in the initial network $\mathcal{G}$ (see Figures \ref{sync111} and \ref{sync112}). The network of coupled Lorenz  oscillators in model II is simulated to show the synchronization error
\begin{equation}
\braket{E} = \sum_{i\neq j}\frac{\|x_i - x_j\|}{(n+m)(n+m-1)}.
\end{equation}

It is worth mentioning that the same simulations that we have done for the Lorenz system can be done for other systems as well. Indeed, similar results hold as long as the initial conditions lead to an attractor in the synchronization manifold that is contained in a compact set. 

\subsection{Hindering synchronization}

To examine the hindrance of synchronization due to link addition, the overall coupling constant $\Theta$ is selected such that $\Theta_c(\mathcal{G})<\Theta<\Theta_c(\mathcal{G}_p)$ (see Figure \ref{sync111}). Note that such $\Theta$ values only exist when \(\lambda_{2}(G) > \lambda_{2}(G_{p})\) due to the order relations of synchronizability stated above. When the selected $\Theta$ is above the $\Theta_c(\mathcal{G})$, the trajectories synchronize for the network $\mathcal{G}$. Then, the system is modified by adding links at a given time $t$. Since the selected $\Theta$ is below the $\Theta_c(\mathcal{G}_p)$, the system loses its synchronization thereafter.

In model II, the sizes of the cycle and star subgraphs are set to $n=15$ and $m = 15$.  The weights of the cutset and modification edges are $w_0=1$ and $\delta_i=1$, where $i= 0, 1,\ldots, m-1 $. All initial states are randomly selected from the uniform distribution over $[3.5, 5)$.

\begin{figure}[h!]
\centering
\includegraphics[scale=0.5]{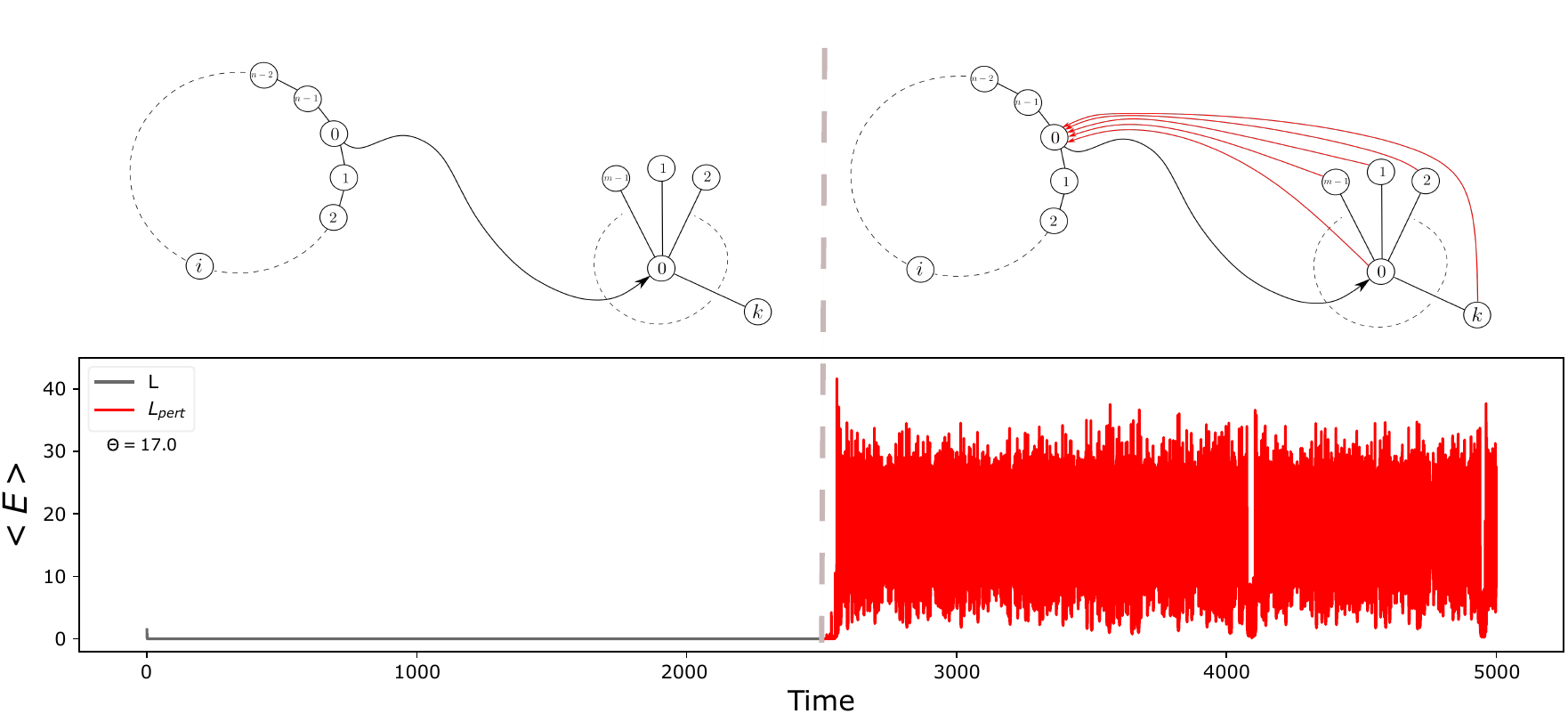}
\caption{Hindrance of synchronization due to link addition: Networks of coupled Lorenz oscillators in model II are simulated to show the synchronization error. The sizes of the cycle and star subgraphs are set to $n = 15$ and $m = 15$, and subgraphs are connected via a directed link from the cycle subgraph to the star subgraph where \(w_0 = 1\). We consider $H=\mathbf{I}$ as the coupling function. We choose initial conditions randomly selected from the uniform distribution over \([3.5, 5)\), and integrate the network until time $t=2500s$. The system goes into synchronization after some transient. At $t=2500s$, we add the red links to the system,  i.e., \(\delta_i = 1\), where \(i = 0,1,\hdots,m - 1\), and perturb the system by adding noise randomly selected from the uniform distribution over \([0.01, 0.02)\) to each state, then the synchronization loss occurs and the system doesn't return into synchronization after transient time. Note that, \(\alpha_1=0.04, \beta_{15,1}^-=0.06\) in Theorem B.}
\label{sync111}
\end{figure}

\subsection{Enhancing synchronization}
To examine the enhancement of synchronization due to link addition, the overall coupling constant $\Theta$ is selected such that $\Theta_c(\mathcal{G}_p)<\Theta<\Theta_c(\mathcal{G})$ (see Figure \ref{sync112}). Note that such $\Theta$ values only exist when \(\lambda_{2}(G_p) > \lambda_{2}(G)\). When the selected $\Theta$ is below the $\Theta_c(\mathcal{G})$, the trajectories cannot synchronize for the network $\mathcal{G}$. Then, the system is modified by adding links at a given time $t$. Since the selected $\Theta$ is above the $\Theta_c(\mathcal{G}_p)$, the system synchronizes.

\begin{figure}[h!]
\centering
\includegraphics[scale=0.5]{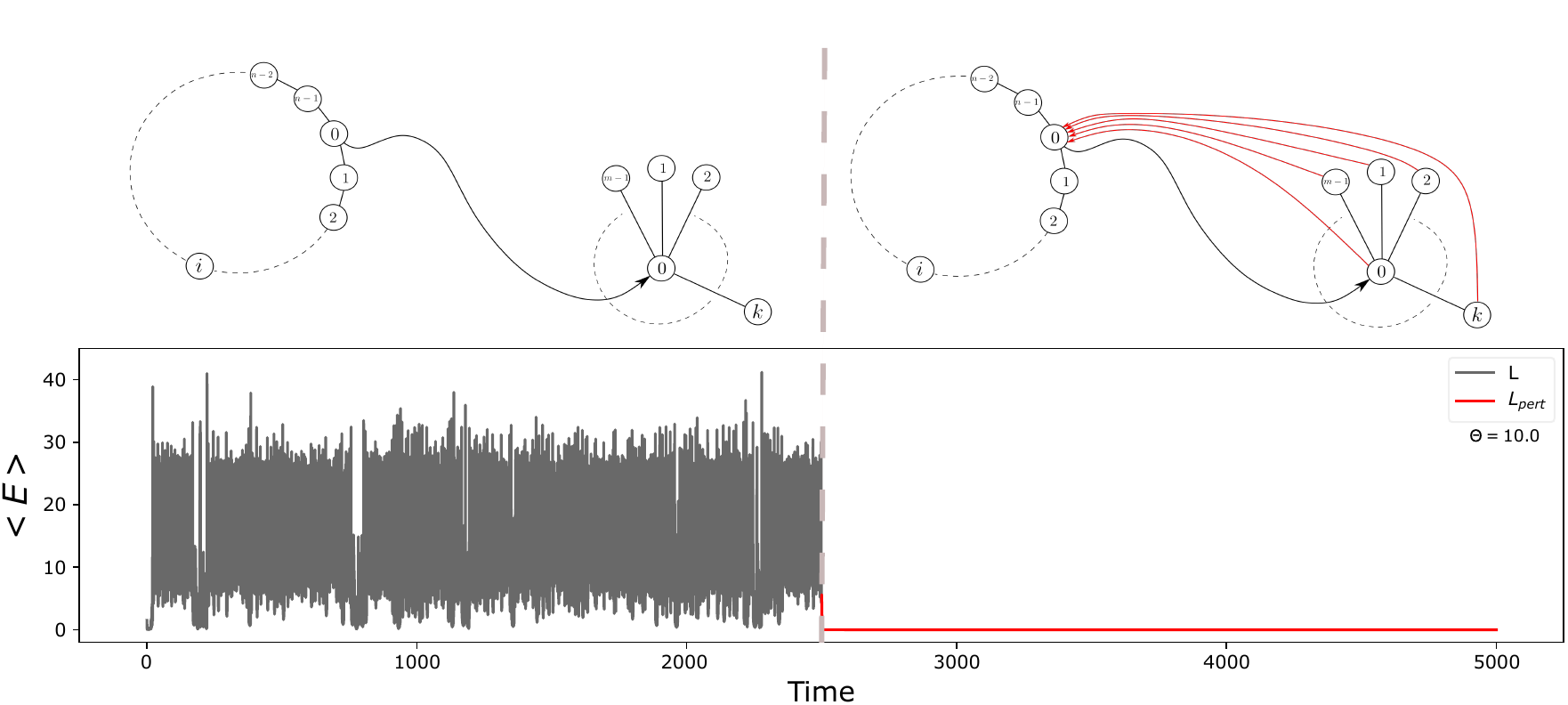}
\caption{Enhance of synchronization due to link addition: Networks of coupled Lorenz oscillators in model II are simulated to show the synchronization error. The sizes of the cycle and star subgraphs are set to $n = 9$ and $m = 15$, and subgraphs are connected via a directed link from the cycle subgraph to the star subgraph where \(w_0 = 1\). We consider $H=\mathbf{I}$ as the coupling function. We choose initial conditions randomly selected from the uniform distribution over \([3.5, 5)\), and integrate the network until time $t=2500s$.  After the red links are added to the system at $t=2500s$,  i.e., \(\delta_i = 1\), where \(i = 0,1,\hdots,m - 1\), the synchronization occurs where the mean error $\langle E\rangle$ goes to zero.  Note that, \(\alpha_1=0.12,\) \(\beta_{15,1}^-=0.06\) in Theorem B.}
\label{sync112}
\end{figure}

In model II, the sizes of the cycle and star subgraphs are set to $n=9$ and $m = 15$. The weights of the cutset and modification edges are $w_0=1$ and $\delta_i=1$, where $i= 0,1,\ldots,m-1 $. 
All initial states are randomly selected from the uniform distribution over $[3.5,5)$. Therefore, hindrance and enhancement of synchronization due to link addition manifest themselves in simulations, and it perfectly agrees with the findings of our theorems.

\section{Problem setting and results}\label{Section87y7gq3gid}

Let \(G\) be a weighted directed graph (digraph) whose nodes are labelled by \(1,\ldots, n\). We assume that \(G\) is unilaterally connected (a digraph is unilaterally connected if for any two arbitrary nodes \(i\) and \(j\), there exists a directed path from \(i\) to \(j\) or \(j\) to \(i\)). This implies that the zero eigenvalue of \(L_{G}\) is simple \cite{Veerman2020}. It is also an easy consequence of Gershgorin theorem that the real parts of all the non-zero eigenvalues of \(L_G\) are positive. Let \(\lambda_{1},\ldots, \lambda_{n}\) be the eigenvalues of \(L_G\), ordered according to their real parts, i.e.
\begin{equation*}
0 = \lambda_{1} < \mathrm{Re}\left(\lambda_{2}\right) \leq \mathrm{Re}\left(\lambda_{3}\right) \leq \cdots \leq \mathrm{Re}\left(\lambda_{n}\right).
\end{equation*}
The second minimum (with respect to the real-part ordering) eigenvalue, i.e. \(\lambda_{2}\), is called the spectral gap of \(G\). In this paper, we are interested in how modifying \(G\) can affect its spectral gap for models I, II and III.

Let us start with model I (see Figure \ref{Fig345uru5rtjyrdtrxgdxw}). We define
\begin{definition}\label{Defn75u6ru6ru452e5}
Consider arbitrary integers \(n\geq 3\) and \(m\geq 4\), and an arbitrary real number \(w\geq 0\). Then
\begin{enumerate}
\item for any integer \(0 \leq l \leq n\), we define 
\begin{equation}\label{eq942hk2tyaurg4frgaeyra}
\alpha_{l}:= 2\left(1-\cos\frac{l\pi}{n}\right).
\end{equation}
\item we define \(\beta_{m, w}^{-}\) and \(\beta_{m, w}^{+}\) as the roots of the quadratic polynomial \(\lambda^{2} - \left(m+w\right)\lambda + w\), i.e.
\begin{equation}\label{eq43eo8qu84t43t45t}
\beta_{m, w}^{\pm} = \frac{1}{2}\left[m + w \pm \sqrt{\left(m+w\right)^{2}-4w}\right]. 
\end{equation}
\end{enumerate}
\end{definition}

\begin{remark}
By virtue of Taylor's theorem, we can approximate \(\alpha_{l}\) for sufficiently small \(\frac{l}{n}\) by \(\alpha_{l} \approx \frac{l^{2}\pi^{2}}{n^{2}}\). Regarding \(\beta^{\pm}_{m}\), when \(\left(m+w\right)^{2} \gg 4w\), we can approximate \(\beta^{+}_{m}\) by \(m+w\), and \(\beta^{-}_{m}\) by
\begin{equation}
\beta_{m, w}^{-} = \frac{\beta_{m, w}^{-} \beta_{m, w}^{+}}{\beta_{m, w}^{+}} \approx \frac{w}{m+w}.
\end{equation}
\end{remark}
Before we proceed to our first result, let us give some intuition about this definition. The parameter $w$ in $\beta^{\pm}_{m, w}$ stands for the sum of the weights of all the cutset edges starting from the cycle and ending at the star. In the case of model I and II, we assume \(w = 1\), but for model III, we deal with arbitrary \(w\). As it is shown later (see Proposition \ref{Prop87986kiuglru3aw3er}), the spectrum of the unmodified Laplacian \(L_G\) is \( \{\alpha_{l}: \mathrm{ where }\,\, 0\leq l \leq n \,\,\mathrm{ and }\,\, l \,\,\mathrm{ is \,\,even}\} \cup \lbrace \beta_{m, w}^{-}, 1, \beta_{m, w}^{+}\rbrace\). Thus, the spectral gap of \(L_{G}\) is given by \(\min \{\alpha_{2}, \beta^{-}_{m,w}\}\). Although the \(\alpha_{l}\)s for odd \(l\) do not appear as the eigenvalues of \(L_G\), they play an important role in our theory. In particular, \(\alpha_{1}\) appears in the formulation of all the three main results of this paper.\\

Here is our main result on model I:
\begin{theoremA}[Model I]
Assume \(\beta_{m, 1}^{-}\notin\{\alpha_{l}:\, 0\leq l \leq n\}\). Consider an arbitrary modification \(\delta_{0} > 0\) and the corresponding Laplacian \(L_{G_{p}} = L_{G}\left(\delta_{0}\right)\). Then, all the eigenvalues of \(L_{G_{p}}\) are real. Moreover, we have
\begin{enumerate}[(i)]
\item if \(\alpha_{1} < \beta^{-}_{m, 1}\), then \(\lambda_{2}\left(L_{G_{p}}\right) < \lambda_{2}\left(L_{G}\right)\).
\item if \(\beta^{-}_{m, 1} < \alpha_{1}\), then \(\lambda_{2}\left(L_{G_{p}}\right) > \lambda_{2}\left(L_{G}\right)\).
\end{enumerate}
\label{Thm838o8rauger}
\end{theoremA}

\begin{remark}
The assumption \(\beta_{m, 1}^{-}\notin\{\alpha_{l}:\, 0\leq l<n\}\) in this theorem (and also in the next theorem) typically holds for arbitrary \(m\) and \(n\).
\end{remark}

Let us now discuss model II (see Figure \ref{Fig97o8gy7tuc7itca6rekrkwuayes}). Let \(\delta_{i}\geq 0\) be the weight of the edge starting from node \(i\) (see Figure \ref{Figkiut765j6ra8rekyefk7wrkweewdfsf}). Thus, model II is reduced to model I by setting \(\delta_{i} = 0\) for \(i=1,\ldots, m-1\). In this strand, we define

\begin{definition}\label{Defn76u7ku6rjtykjyts}
Let \(\delta_{i}\geq 0\), \(i= 0, \ldots, m-1\), be the weight of the modification edge starting from node \(i\) of the star and ending at node \(0\) of the cycle. We define \(\overline{\delta} := (\delta_{0}, \ldots, \delta_{m-1})\), and \(\delta := \delta_{0} + \delta_{1} + \cdots + \delta_{m-1}\).
\end{definition}

Obviously, \(\overline{\delta} = 0\) if and only if \(\delta = 0\). Note also that \(\delta = 0\) corresponds to the unmodified graph \(G\). We now state our next main result:

\begin{theoremB}[Model II]
Assume \(\beta_{m, 1}^{-}\notin\{\alpha_{l}:\, 0\leq l<n\}\). Consider a modification \(\overline{\delta} \neq 0\) and let \(L_{G_{p}} = L_{G}\left(\overline{\delta}\right)\) be the corresponding Laplacian. Then, the following hold.
\begin{enumerate}[(i)]
\item\label{Item46gktcytrxjt00} (Local modification) Let \(\overline{\delta}\neq 0\) be a sufficiently small modification. Then, all the eigenvalues of \(L_{G_{p}}\) are real, and
\begin{enumerate}
\item \label{Item46gktcytrxjt11} If \(\alpha_{1} < \beta^{-}_{m, 1}\), then \(\lambda_{2}\left(L_{G_{p}}\right) < \lambda_{2}\left(L_{G}\right)\).
\item\label{Item46gktcytrxjt33} If \(\beta^{-}_{m, 1} < \alpha_{1}\), then \(\lambda_{2}\left(L_{G_{p}}\right) > \lambda_{2}\left(L_{G}\right)\).
\end{enumerate}
\item\label{Item46gktcytrxjt44} (Global modification) Let \(\overline{\delta}\neq 0\) be an arbitrary modification. We have
\begin{enumerate}
\item\label{Item46gktcytrxjt55} If \(\alpha_{2} < \beta^{-}_{m, 1}\), then \(\mathrm{Re}\left(\lambda_{2}\left(L_{G_{p}}\right)\right) < \lambda_{2}\left(L_{G}\right)\).
\item \label{Item46gktcytrxjt22} Assume the condition \(\delta < \delta_{0}\beta_{m,1}^{+}\) is satisfied. Then, all the eigenvalues of \(L_{G_{p}}\) are real, and the statements (\ref{Item46gktcytrxjt11}) and (\ref{Item46gktcytrxjt33}) of this theorem also hold for the modification \(\overline{\delta}\).
\end{enumerate}
\end{enumerate}
\label{Theorem18na75t87ckq7tyrm3wa}
\end{theoremB}

This theorem is proved in Section \ref{Sec378uty7tdguw}. Let us mention a few remarks.

\begin{remark}
Note that, by setting \(\delta = \delta_{0}\), Theorem \ref{Thm838o8rauger}  directly follows from Theorem \ref{Theorem18na75t87ckq7tyrm3wa}.
\end{remark}

\begin{remark}
In spite of Theorem \ref{Thm838o8rauger} for which the main statements hold for a modification of arbitrary size, in Theorem \ref{Theorem18na75t87ckq7tyrm3wa}, we require a condition on the modification, i.e. \(\delta < \delta_{0}\beta_{m,1}^{+}\), to make the statements for modifications of arbitrary size. Roughly speaking, this is due to the possibility of the emergence of non-real eigenvalues. Indeed, as it is shown in the proof of Theorem \ref{Theorem18na75t87ckq7tyrm3wa}, for small modification \(\overline{\delta} \neq 0\), the modified Laplacian \(L_{G_{p}}\) has two real eigenvalues in the interval \((\alpha_{n-1}, \infty)\). However, as \(\overline{\delta}\) varies and gets larger in size, these two real eigenvalues may collide and become a pair of complex conjugates. In this case, we can think of the scenario in which the real part of these eigenvalues decreases such that for some sufficiently large modification \(\overline{\delta}\), these eigenvalues become the spectral gap of \(L_{G_{p}}\). By assuming \(\delta < \delta_{0}\beta_{m,1}^{+}\), we indeed avoid this scenario.
\end{remark}

We now discuss model III (see Figure \ref{Fig9798v7tc5extdht4ebrdbgrstgds111}). Let \(w_{i}\geq 0\), where \(i=0,\ldots, n-1\), be the weight of the edge starting from node \(i\) of the cycle. Without loss of generality, assume \(w_{0} > 0\). We also define
\begin{definition}
Let \(w_{i}\) be as mentioned above. We define \(\overline{w} = (w_{0}, \ldots, w_{n-1})\) and \(w = w_{0} + w_{1} + \cdots + w_{n-1}\).
\end{definition}

We show later that \(\lambda_{2}\left(L_{G}\right) = \min \{\alpha_{2}, \beta^{-}_{m, w}\}\). Regarding the modification in the case of model III, we consider the same family of modifications as we considered in model II: for every \(0\leq i\leq m-1\), there exists a modification edge with weight \(\delta_{i}\geq 0\) starting from node \(i\) of the star and ending at node \(0\) of the cycle (see Figure \ref{Fig9798v7tc5extdht4ebrdbgrstgds222}). Let \(\overline{\delta}\) and \(\delta\) be as in Definition \ref{Defn76u7ku6rjtykjyts}. For given \(m\), \(n\), \(\overline{w}\), \(\delta_{0}\) and \(\delta\), in the case that \(\alpha_{2} \neq \beta^{-}_{m,w}\), we also define
\begin{equation}\label{eq285tguiuugftabigsipewsf}
S = S\left(m,n,\overline{w}, \delta_{0}, \delta\right) := \delta - \frac{\delta - \delta_{0} \alpha_{2}}{\alpha_{2}^{2} - \left(m+w\right)\alpha_{2} + w}\sum_{i=0}^{n-1} w_{i}\cos\frac{2i\pi}{n}.
\end{equation}
As it is shown later, the sign of \(S\) determines if the characteristic polynomial of \(L_{G_p}\), i.e. \(\mathrm{det}\left(L_{G_p} - \lambda I\right)\), decreases or increases at the point \(\lambda = \alpha_{2}\). Our last main result is as follows.
\begin{theoremC}[Model III] 
Assume \(\beta_{m, w}^{-}\notin\{\alpha_{l}:\, 0\leq l \leq n\}\). Consider a modification \(\overline{\delta} \neq 0\) and let \(L_{G_{p}} = L_{G}\left(\overline{\delta}\right)\) be the corresponding Laplacian. Then, the following hold.
\begin{enumerate}[(i)]
\item\label{Item456rcuarl6kuwer} (Local modification) Let \(\overline{\delta}\neq 0\) be sufficiently small. Then, all the eigenvalues of \(L_{G_{p}}\) are real, and we have
\begin{enumerate}
\item \label{Item42iyo8yaor4gtwf1} If \(\alpha_{2} < \beta^{-}_{m,w}\) and \(S < 0\), then \(\lambda_{2}\left(L_{G_{p}}\right) < \lambda_{2}\left(L_{G}\right)\).
\item \label{Item42iyo8yaor4gtwf2} If \(\alpha_{2} < \beta^{-}_{m,w}\) and \(S > 0\), then \(\lambda_{2}\left(L_{G_{p}}\right) = \lambda_{2}\left(L_{G}\right)\).
\item \label{Item75988tugqyefriyer111} If \(0 < \beta^{-}_{m,w} < \alpha_{1}\), then \(\lambda_{2}\left(L_{G_{p}}\right) > \lambda_{2}\left(L_{G}\right)\).
\item \label{Item75988tugqyefriyer222} If \(\alpha_{1} < \beta^{-}_{m,w} < \alpha_{2}\) and \(\sum_{i=0}^{n-1} w_{i}\cos\left(\frac{n}{2} - i\right)\theta > 0\), where \(\theta = \pi - \cos^{-1}\left(\frac{\beta^{-}_{m,w} -2}{2}\right)\), then \(\lambda_{2}\left(L_{G_{p}}\right) > \lambda_{2}\left(L_{G}\right)\).
\item \label{Item75988tugqyefriyer333} If \(\alpha_{1} < \beta^{-}_{m,w} < \alpha_{2}\) and \(\sum_{i=0}^{n-1} w_{i}\cos\left(\frac{n}{2} - i\right)\theta < 0\), where \(\theta = \pi - \cos^{-1}\left(\frac{\beta^{-}_{m,w} -2}{2}\right)\), then \(\lambda_{2}\left(L_{G_{p}}\right) < \lambda_{2}\left(L_{G}\right)\).
\end{enumerate}
\item\label{Item89a8grye8gfaergeww} (Global modification) Let \(\overline{\delta}\neq 0\) be an arbitrary modification and assume \(\alpha_{2} < \beta^{-}_{m,w}\).
\begin{enumerate}
\item \label{Item538youkugaliyfyejr11} If \(S < 0\), then \(\mathrm{Re}\left(\lambda_{2}\left(L_{G_{p}}\right)\right) < \lambda_{2}\left(L_{G}\right)\).
\item \label{Item538youkugaliyfyejr22} If \(S > 0\), then \(\mathrm{Re}\left(\lambda_{2}\left(L_{G_{p}}\right)\right) \leq \lambda_{2}\left(L_{G}\right)\).
\end{enumerate}
\end{enumerate}
\label{Thm274i7tigfdyf}
\end{theoremC}

\section{The Laplacian matrices}\label{LaplacianMatrices}

\subsection{The Laplacian $L_{G}$ of the unmodified graph and its spectrum}\label{Sec64o8ov7t52hyd}

In this section, we investigate the spectrum of the unmodified Laplacian matrix \(L_{G}\). Denote the Laplacian matrices of the cycle \(C_{n}\) and the star \(S_{m}\) by \(L_{C_{n}}\) and \(L_{S_{m}}\), respectively. Then
\begin{equation} \label{triangularstructure}
L_{G} := \left(\begin{array}{cc}
L_{C_n} & 0\\
-C & L_{S_m} + D_{C}
\end{array}\right),
\end{equation}
where
\begin{equation*}
L_{C_{n}} = \left(\begin{array}{cccccc}
2 & -1 & & -1\\
-1 & 2 & \ddots & \\
 & \ddots & \ddots & -1\\
-1 & & -1 & 2
\end{array}\right) \qquad \mathrm{and}\qquad L_{S_{m}} = \left(\begin{array}{cc}
m-1 & -\mathbf{1}_{m-1}^{\top}\\
-\mathbf{1}_{m-1} & I_{m-1}
\end{array}\right).
\end{equation*}
Moreover, for models I and II, we have 
\begin{equation}\label{eq998yo97h1dkigeodvekfge22kfhqge}
C=\left(\begin{array}{cc}
1 & 0_{1\times\left(n-1\right)}\\
0_{\left(m-1\right)\times 1} & 0_{\left(m-1\right)\times \left(n-1\right)}
\end{array}\right) \quad \mathrm{and}\quad
D_{C} =\left(\begin{array}{cc}
1 & 0_{1\times\left(m-1\right)}\\
0_{\left(m-1\right)\times 1} & 0_{\left(m-1\right)\times \left(m-1\right)}
\end{array}\right),
\end{equation}
and for model III, we have
\begin{equation}\label{eq801yjhyjfdufe87t32pg1eo30uhus}
C = \left(\begin{array}{c}
\begin{array}{cccc}
w_{0} & w_{1} & \cdots & w_{n-1}
\end{array}\\
0_{(m-1)\times n}
\end{array}\right)\quad\mathrm{and}\quad D_{C} =\left(\begin{array}{cc}
w & 0_{1\times\left(m-1\right)}\\
0_{\left(m-1\right)\times 1} & 0_{\left(m-1\right)\times \left(m-1\right)}
\end{array}\right).
\end{equation}

The block triangular form of \(L_{G}\) implies \(\sigma(L_{G}) = \sigma(L_{C_{n}}) \cup \sigma(L_{S_{m}}+D_{C})\). Thus, to study \(\sigma(L_{G})\), we need to investigate each of \(\sigma(L_{C_{n}})\) and \(\sigma(L_{S_{m}}+D_{C})\) individually. In this strand, we have the following lemmas.

\begin{lemma}\label{Lem616uihugygjyhvkjgoigsaougsiky}
Recall Definition \ref{Defn75u6ru6ru452e5}. We have \(\sigma(L_{C_{n}}) = \{\alpha_{l}: \mathrm{where}\,\, 0\leq l \leq n \,\,\mathrm{and}\,\, l\,\, \mathrm{is\,\, even}\}\). Moreover, the multiplicity of all the eigenvalues except for \(0\) and \(4\) (the eigenvalue \(4\) appears only when \(n\) is even) is \(2\).
\end{lemma}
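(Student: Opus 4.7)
The plan is to exploit the fact that \(L_{C_{n}}\) is a circulant matrix, since its first row is \((2, -1, 0, \ldots, 0, -1)\), making every other row a cyclic shift of the previous. The eigenvalues and eigenvectors of any circulant matrix are explicit: with \(\omega = e^{2\pi i/n}\), the Fourier vectors \(v_{k} = \left(1,\omega^{k},\omega^{2k},\ldots,\omega^{(n-1)k}\right)^{\top}\), for \(k = 0, 1, \ldots, n-1\), form a complete set of eigenvectors of \(L_{C_{n}}\), and the eigenvalue associated to \(v_{k}\) is
\begin{equation*}
\mu_{k} \;=\; 2 - \omega^{k} - \omega^{-k} \;=\; 2\left(1 - \cos\frac{2\pi k}{n}\right) \;=\; \alpha_{2k}.
\end{equation*}
This gives an explicit diagonalization of \(L_{C_{n}}\), from which the rest is bookkeeping.

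Next, I would identify the set of distinct eigenvalues. Using the parity \(\cos(\theta) = \cos(-\theta)\), we have \(\alpha_{2k} = \alpha_{2(n-k)}\), so as \(k\) ranges over \(\{0,1,\ldots,n-1\}\) the values \(\mu_{k}\) coincide for \(k\) and \(n-k\). Reparameterising by \(l = 2k\) for \(k \leq n/2\) shows that every distinct eigenvalue has the form \(\alpha_{l}\) with \(l\) even and \(0 \leq l \leq n\); conversely every such \(l\) arises from some \(k \in \{0,1,\ldots,\lfloor n/2\rfloor\}\). Hence \(\sigma(L_{C_{n}}) = \{\alpha_{l} : 0 \leq l \leq n,\ l\text{ even}\}\), as claimed.

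For the multiplicity statement, observe that \(v_{k}\) and \(v_{n-k}\) are linearly independent for any \(k \not\equiv n-k \pmod{n}\) (they are distinct columns of the Vandermonde/Fourier matrix), so together they span a two-dimensional eigenspace for \(\alpha_{2k}\). The only self-paired indices are \(k = 0\), yielding the eigenvalue \(0\) with eigenvector \(\mathbf{1}\) (multiplicity one), and, when \(n\) is even, \(k = n/2\), yielding \(\alpha_{n} = 2(1-\cos\pi) = 4\) with the alternating-sign eigenvector (multiplicity one). Summing \(1 + 2\cdot\#\{\text{paired classes}\} + \epsilon\), where \(\epsilon \in \{0,1\}\) records whether \(n\) is even, recovers the total dimension \(n\), which confirms that the listed multiplicities are exact (no eigenvalue can have higher multiplicity than the pairing already provides).

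There is no real obstacle here; the main care needed is in matching up the two indexings (\(k \in \{0,\ldots,n-1\}\) for the Fourier basis versus the even \(l\) in \([0,n]\) in the statement) and in separating the cases \(n\) even and \(n\) odd when accounting for the eigenvalue \(4\).
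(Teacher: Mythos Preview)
Your argument is correct and complete: the circulant structure of \(L_{C_{n}}\) gives the Fourier eigenbasis, the eigenvalues \(2(1-\cos\tfrac{2\pi k}{n})=\alpha_{2k}\), and the pairing \(k\leftrightarrow n-k\) yields the multiplicity statement, with the dimension count confirming exactness. The paper itself does not supply a proof but simply cites the reference \cite{Brouwer2011spectra}; your write-up is precisely the standard derivation that reference contains, so the approaches coincide --- you have just made explicit what the paper delegates.
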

\begin{proof}
See \cite{Brouwer2011spectra}.
\end{proof}

\begin{lemma}
Let \(C\) and \(D_{C}\) be as in (\ref{eq801yjhyjfdufe87t32pg1eo30uhus}). Then, \(\sigma(L_{S_{m}}+D_{C}) = \lbrace \beta_{m, w}^{-}, 1, \beta_{m, w}^{+}\rbrace\), where \(\beta_{m, w}^{\pm}\) are as in (\ref{eq43eo8qu84t43t45t}). Moreover, the eigenvalues \(\beta_{m, w}^{-}\) and \(\beta_{m, w}^{+}\) are simple, and the eigenvalue \(1\) is of multiplicity \(m-2\).
\end{lemma}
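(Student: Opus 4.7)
The plan is to block-diagonalize $L_{S_m} + D_C$ by identifying two invariant subspaces whose direct sum is $\mathbb{R}^m$: a large subspace on which the operator acts as the identity, and a two-dimensional subspace that captures the remaining spectrum. This reduces the full eigenvalue problem to analyzing a single $2 \times 2$ matrix.

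First I would introduce
\[
W_1 := \bigl\{(0,v)^\top : v\in\mathbb{R}^{m-1},\; \mathbf{1}_{m-1}^\top v = 0\bigr\},
\qquad
W_2 := \operatorname{span}\bigl\{e_1,\; (0,\mathbf{1}_{m-1})^\top\bigr\},
\]
so that $\dim W_1 = m-2$, $\dim W_2 = 2$, and $\mathbb{R}^m = W_1 \oplus W_2$ (orthogonally, since $L_{S_m}+D_C$ is symmetric). Using the block form of $L_{S_m}+D_C$, for any $(0,v)^\top \in W_1$ one computes $(L_{S_m}+D_C)(0,v)^\top = (-\mathbf{1}_{m-1}^\top v,\, v)^\top = (0, v)^\top$ because $\mathbf{1}_{m-1}^\top v = 0$. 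Hence $W_1$ is invariant and the restriction is the identity, yielding the eigenvalue $1$ with multiplicity at least $m-2$.

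Similarly, applying $L_{S_m}+D_C$ to each of $e_1$ and $(0,\mathbf{1}_{m-1})^\top$ shows that $W_2$ is invariant, and in the ordered basis $\{e_1,\,(0,\mathbf{1}_{m-1})^\top\}$ the restricted operator is represented by the matrix
\[
M = \begin{pmatrix} m-1+w & -(m-1) \\ -1 & 1 \end{pmatrix}.
\]
Its characteristic polynomial expands as $(m-1+w-\lambda)(1-\lambda) - (m-1) = \lambda^2 - (m+w)\lambda + w$, whose two roots are, by Definition \ref{Defn75u6ru6ru452e5}, precisely $\beta_{m,w}^\pm$. Taken together, the $W_1$ and $W_2$ pieces account for all $m$ eigenvalues.

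To finish I would verify simplicity. The discriminant of the quadratic equals $(m+w)^2 - 4w = m^2 + w^2 + 2w(m-2)$, which is strictly positive for $m \geq 2$ and $w \geq 0$, so $\beta_{m,w}^+ \neq \beta_{m,w}^-$. Substituting $\lambda = 1$ gives $1 - m$, which is nonzero for $m \geq 4$, so neither of $\beta_{m,w}^\pm$ coincides with $1$. Combined with the multiplicity $m-2$ already established on $W_1$, this yields the full spectral description $\sigma(L_{S_m}+D_C) = \{\beta_{m,w}^-,\,1,\,\beta_{m,w}^+\}$ with the claimed multiplicities. There is no serious obstacle here: the whole proof hinges on guessing the invariant decomposition, after which one only needs a short determinant computation and two sanity checks. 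Conceptually, the rank-one coupling $-\mathbf{1}_{m-1}$ between hub and leaves sees only the constant mode of the leaf block $I_{m-1}$, which is exactly why the problem collapses to a two-dimensional one.
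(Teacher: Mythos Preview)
Your proof is correct. The invariant decomposition $\mathbb{R}^m = W_1 \oplus W_2$ works as you describe, and the $2\times 2$ block yields exactly the quadratic $\lambda^2 - (m+w)\lambda + w$.

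The paper takes a slightly different, more computational route: it invokes its Lemma~\ref{Lem391q0kieb6fb2j2iwbjyfu5ewwe}, which analyzes a general $m\times m$ matrix of the form
\[
M(\lambda)=\begin{pmatrix} m-\lambda+X_0(\lambda) & -1+X_1(\lambda)\ \cdots\ -1+X_{m-1}(\lambda)\\ \mathbf{1}_{m-1} & (1-\lambda)I\end{pmatrix}
\]
by writing out the equations $M(\lambda)v=0$ row by row. For $\lambda=1$ this pins down an $(m-2)$-dimensional kernel (your $W_1$, when the $X_i$ vanish), and for $\lambda\neq 1$ one eliminates $v_1,\dots,v_{m-1}$ in terms of $v_0$ and lands on the same quadratic. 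Your argument is the cleaner, more geometric version for this specific symmetric matrix: you exploit that $L_{S_m}+D_C$ is symmetric and that the rank-one coupling only sees the constant leaf mode, so block-diagonalization is immediate. The paper's version is less elegant here but is stated in enough generality (arbitrary functions $X_i(\lambda)$) to be reused later for the Schur complement $M_1(\lambda)$, where the extra entries are nontrivial and the matrix is no longer symmetric.
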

\begin{proof}
This lemma is a special case of Lemma \ref{Lem391q0kieb6fb2j2iwbjyfu5ewwe}, which is proved in \ref{Section4311ohbiwubwavs}.
\end{proof}

The previous two lemmas give the spectrum of the unmodified Laplacian \(L_{G}\):
\begin{proposition}\label{Prop87986kiuglru3aw3er}
We have \(\sigma(L_{G}) = \{\alpha_{l}: \mathrm{ where }\,\, 0\leq l \leq n \,\,\mathrm{ and }\,\, l \,\,\mathrm{ is \,\,even}\} \cup \lbrace \beta_{m, w}^{-}, 1, \beta_{m, w}^{+}\rbrace\).
\end{proposition}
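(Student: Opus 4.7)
The plan is straightforward: this proposition is a direct corollary of the two preceding lemmas together with the block lower-triangular structure of $L_G$ given in display (\ref{triangularstructure}).

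First I would invoke the structural identity
\begin{equation*}
L_{G} = \left(\begin{array}{cc}
L_{C_n} & 0\\
-C & L_{S_m} + D_{C}
\end{array}\right),
\end{equation*}
and apply the standard linear-algebra fact that the determinant of a block triangular matrix factors as the product of the determinants of its diagonal blocks. Applying this to $L_G - \lambda I$, which is again block lower-triangular with diagonal blocks $L_{C_n} - \lambda I$ and $L_{S_m} + D_C - \lambda I$, one obtains
\begin{equation*}
\det(L_G - \lambda I) = \det(L_{C_n} - \lambda I)\cdot \det(L_{S_m} + D_C - \lambda I),
\end{equation*}
so that $\sigma(L_G) = \sigma(L_{C_n}) \cup \sigma(L_{S_m} + D_C)$ as a multiset.

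Next I would quote Lemma \ref{Lem616uihugygjyhvkjgoigsaougsiky} to substitute $\sigma(L_{C_n}) = \{\alpha_l : 0 \leq l \leq n,\ l \text{ even}\}$, and the subsequent lemma (valid for $C$ and $D_C$ as in either (\ref{eq998yo97h1dkigeodvekfge22kfhqge}) or (\ref{eq801yjhyjfdufe87t32pg1eo30uhus}), since the star block of $D_C$ is identical in both cases up to replacing $1$ by $w$) to substitute $\sigma(L_{S_m} + D_C) = \{\beta^-_{m,w},\,1,\,\beta^+_{m,w}\}$. Taking the union gives exactly the claimed spectrum.

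There is no real obstacle here; the proposition is essentially a bookkeeping statement assembling the two lemmas. The only minor point worth mentioning explicitly is that the claim is about the set-theoretic spectrum rather than multiplicities, so even if some $\alpha_l$ happens to coincide with $1$, $\beta_{m,w}^-$, or $\beta_{m,w}^+$, the set union is still correct. The proof may therefore be written in a single line citing the two lemmas and the block-triangular determinant identity.
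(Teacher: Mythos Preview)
Your proposal is correct and matches the paper's approach exactly: the paper simply notes that the block lower-triangular form of $L_G$ yields $\sigma(L_G) = \sigma(L_{C_n}) \cup \sigma(L_{S_m}+D_C)$ and then cites the two preceding lemmas, without even opening a proof environment.
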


\begin{remark}\label{Rem95jyru6rmtdhtlirstw}
We assume that \(m\geq 4\), i.e. the star \(S_m\) has at least four nodes. It is straightforward to show that for any \(m\geq 4\) and \(w > 0\), we have \(\beta_{m, w}^{-} < 1\) and \(4 < \beta_{m, w}^{+}\). On the other hand, \(0\leq \alpha_{l} = 2\left(1-\cos \frac{l\pi}{n}\right) \leq 4\), for all \(0\leq l \leq n\). This means that \(\beta^{+}_{m,w}\) is a simple eigenvalue of \(L_{G}\).
\end{remark}

\subsection{The Laplacian $L_{G_p}$ of the modified graph}

Consider model III and observe that the modified Laplacian matrix \(L_{G_{p}}\) is given by
\begin{equation}\label{eq6g7k6iq75ckt7rer}
L_{G_p} := \left(\begin{array}{cc}
L_{C_n} + D_{\Delta} & -\Delta\\
-C & L_{S_m} + D_{C}
\end{array}\right),
\end{equation}
where \(C\) and \(D_{C}\) are as in (\ref{eq801yjhyjfdufe87t32pg1eo30uhus}),
\begin{equation}\label{eq72r887tiygrjfcyr4}
\Delta = \left(\begin{array}{c}
\begin{array}{cccc}
\delta_{0} & \delta_{1} & \cdots & \delta_{m-1}
\end{array}\\
0_{\left(n-1\right)\times m}
\end{array}\right)\qquad
\mathrm{ and }\qquad
D_{\Delta} =\left(\begin{array}{cc}
\delta & 0_{1\times\left(n-1\right)}\\
0_{\left(n-1\right)\times 1} & 0_{\left(n-1\right)\times \left(n-1\right)}
\end{array}\right).
\end{equation}

\begin{notation}\label{Notation94kyvtw6r4eu6t3r}
For the sake of convenience, we set \(L_{1}:= L_{C_{n}} + D_{\Delta}\) and \(L_{2} := L_{S_m} + D_{C}\).
\end{notation}
Using this notation, Laplacian (\ref{eq6g7k6iq75ckt7rer}) is written as
\begin{equation}\label{eq9786itgkug54ourgkuw4geugf}
L_{G_{p}} = \left(\begin{array}{cc}
L_{1} & -\Delta\\ -C & L_2
\end{array}\right).
\end{equation}

The Laplacian \(L_{G_{p}}\) of the modified graph of model II is of the form (\ref{eq9786itgkug54ourgkuw4geugf}), where \(C\) and \(D_{C}\) are as in (\ref{eq998yo97h1dkigeodvekfge22kfhqge}), and \(\Delta\) and \(D_{\Delta}\) are given by (\ref{eq72r887tiygrjfcyr4}).

The Laplacian \(L_{G_{p}}\) of the unmodified graph of model I is also of the form (\ref{eq9786itgkug54ourgkuw4geugf}), where \(C\) and \(D_{C}\) are as in (\ref{eq998yo97h1dkigeodvekfge22kfhqge}), and \(\Delta\) and \(D_{\Delta}\) are given by
\begin{equation}\label{eq9kf5utu64u6rut3e}
\Delta = \left(\begin{array}{cc}
\delta_{0} & 0_{1\times\left(m-1\right)}\\
0_{\left(n-1\right)\times 1} & 0_{\left(n-1\right)\times \left(m-1\right)}
\end{array}\right)\qquad
\mathrm{ and }\qquad
D_{\Delta} = \left(\begin{array}{cc}
\delta & 0_{1\times\left(n-1\right)}\\
0_{\left(n-1\right)\times 1} & 0_{\left(n-1\right)\times \left(n-1\right)}
\end{array}\right).
\end{equation}
Here (model I), we have \(\delta_{0} = \delta\).

Notice that, in all these three models, despite the unmodified Laplacian \(L_{G}\), the modified Laplacian \(L_{G_{p}}\) does not have a triangular form. Due to this reason, analysis of the spectrum of \(L_{G_p}\) requires further work. We deal with this analysis in the next section.

\section{Proofs of the main results}

In this section, we prove our main results: Theorems \ref{Theorem18na75t87ckq7tyrm3wa} and \ref{Thm274i7tigfdyf} (Theorem \ref{Thm838o8rauger} follow from Theorem \ref{Theorem18na75t87ckq7tyrm3wa}). Note that model II can be considered as a special case of model III. Thus, it is reasonable to introduce the main concepts and notations of the proofs in this section mainly based on model III. This section is organized as follows. We first discuss some preliminaries, definitions and notations in Section \ref{Sec902u3h2be}. In Section \ref{Sec6o87tg2eigwiugd}, we discuss the techniques that are used in the proofs of the theorems. We then prove Theorem \ref{Theorem18na75t87ckq7tyrm3wa} in Section \ref{Sec378uty7tdguw}. Finally, we prove Theorem \ref{Thm274i7tigfdyf} in Section \ref{Sec12384o8y8agrit4}.

\subsection{Preliminaries, definitions and notations}\label{Sec902u3h2be}

In this section, we discuss some preliminaries, and introduce some concepts and notations which are used throughout the proofs.
\begin{notation}
Throughout, \(\mathbf{1}_{k}\) stands for the \(k\)-dimensional vector whose entries are all \(1\). We may drop \(k\) when it is clear from the context.
\end{notation}

\begin{definition}\label{Defn537oiyb8yavoyvyfceluiftluitfa}
Let \(w\), \(\delta_{0}\) and \(\delta\) be real, and \(m\) and \(k\) be positive integers. Consider \(\lambda\in\mathbb{R}\).
\begin{enumerate}[(i)]
\item We define \(\mu: \lambda \mapsto \mu (\lambda)\) by
\begin{equation}\label{eq88yzz8qr7t4rygfea}
\mu = \mu (\lambda) = \frac{1- \lambda}{\lambda^{2} - \left(m+w\right)\lambda + w},
\end{equation}
and \(y: \lambda \mapsto y(\lambda)\) by
\begin{equation}\label{e2e1q7983q75c6r3ytfurwrks}
y = y(\lambda) = \frac{\delta - \delta_{0} \lambda}{\lambda^{2} - \left(m+w\right)\lambda + w}.
\end{equation}
\item For any \(k\geq 3\), we define
\begin{equation}\label{eq13587itu1v6rtkuyrj6ytdcs}
Q_{k} = Q_{k}\left(\lambda\right) = \left(\begin{array}{ccccc}
\lambda - 2 & 1\\
1 & \lambda - 2 & 1 & & 0\\
 & 1 & \ddots & \ddots \\
0 & & \ddots & \ddots & 1\\
 & & & 1 & \lambda-2
\end{array}\right)_{k\times k}.
\end{equation}
\end{enumerate}
\end{definition}

The next two lemmas investigate the matrix \(Q_{k}(\lambda)\) for different values of \(\lambda > 0\). See \cite{hu1996analytical} for the proofs\footnote{Regarding Lemma \ref{Lem092uy7tcrmyftdrrdfdfdfw}, the formulas in \cite{hu1996analytical} are not totally correct. In this current paper, we have used the corrected ones. Note also that \(\theta\) in this current paper is not the same as in \cite{hu1996analytical}}.

\begin{lemma}\label{Lem092uy7tcrmyftdrrdfdfdfw}
Assume \(0 < \lambda < 4\) and let \(\theta = \pi - \cos^{-1}(\frac{\lambda -2}{2})\). We have
\begin{enumerate}[(i)]
\item \(\mathrm{det}(Q_{k}) = \frac{(-1)^{k}\sin\left(k+1\right)\theta}{\sin \theta}\).
\item the matrix \(R = Q_{k}^{-1}\) exists for \(\theta \neq \frac{l\pi}{k+1}\) (\(l = 1, \ldots, k\)), and is given by
\begin{equation}\label{eq49nhuotqi3f2q3dut3drt11}
R_{ij} = \frac{\cos\left(k+1-\left\vert i-j\right\vert\right)\theta - \cos\left(k+1-i-j\right)\theta}{2\sin \theta \sin\left(k+1\right)\theta}, \qquad \mathrm{for\,\, } 1\leq i, j\leq k.
\end{equation}
\end{enumerate}
\end{lemma}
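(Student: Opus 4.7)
The plan is to handle (i) by a standard three-term recurrence and (ii) by Usmani's formula for inverses of tridiagonal matrices, followed by a product-to-sum identity.

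For part (i), I would expand $\det(Q_k)$ along its first row to derive the three-term recurrence
$$D_k = (\lambda - 2)\, D_{k-1} - D_{k-2}, \qquad D_0 = 1,\quad D_1 = \lambda - 2,$$
where $D_k := \det(Q_k)$. Since $0 < \lambda < 4$, the substitution $\theta = \pi - \cos^{-1}\!\bigl(\tfrac{\lambda-2}{2}\bigr)$ yields $\lambda - 2 = -2\cos\theta$ with $\theta \in (0,\pi)$, so the characteristic equation becomes $r^{2} + 2\cos\theta\, r + 1 = 0$ with roots $-e^{\pm i\theta}$. The general solution has the form $D_k = (-1)^k\bigl[A\cos k\theta + B\sin k\theta\bigr]$; imposing $D_0 = 1$ and $D_1 = -2\cos\theta$ determines $A$ and $B$, and the identity $\sin\theta \cos k\theta + \cos\theta \sin k\theta = \sin(k+1)\theta$ collapses the result to the claimed
$$D_k = \frac{(-1)^k \sin(k+1)\theta}{\sin\theta}.$$

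For part (ii), under the stated nondegeneracy $\theta \neq \tfrac{l\pi}{k+1}$ (equivalently $D_k \neq 0$), $Q_k$ is invertible. I would apply the classical Usmani-type formula for a tridiagonal matrix with unit off-diagonals: for $1 \le i \le j \le k$,
$$(Q_k^{-1})_{ij} = (-1)^{i+j}\,\frac{D_{i-1}\, D_{k-j}}{D_k},$$
using that the trailing principal $(k-j) \times (k-j)$ minor equals $D_{k-j}$ because $Q_k$ has constant diagonal. The case $j \le i$ follows from symmetry of $Q_k$. Substituting the closed form from (i), the sign factors telescope to $(-1)^{i+j + (i-1) + (k-j) - k} = -1$, giving
$$(Q_k^{-1})_{ij} = -\,\frac{\sin(i\theta)\,\sin((k-j+1)\theta)}{\sin\theta\,\sin((k+1)\theta)}.$$
Finally, applying $-2\sin A\sin B = \cos(A+B) - \cos(A-B)$ with $A = i\theta$ and $B = (k-j+1)\theta$, and using $|i-j| = j-i$ in this range together with evenness of cosine, yields exactly
$$R_{ij} = \frac{\cos((k+1 - |i-j|)\theta) - \cos((k+1 - i - j)\theta)}{2\sin\theta\,\sin(k+1)\theta}.$$

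The only real obstacle is bookkeeping rather than anything conceptual: one must carefully track the parities in the factors $(-1)^{i-1}$, $(-1)^{k-j}$, $(-1)^k$ and $(-1)^{i+j}$, and verify that the two arguments $(k+1 - |i-j|)\theta$ and $(k+1 - i - j)\theta$ emerge with the correct signs after the product-to-sum step. Confirming that the formula is symmetric in $(i,j)$ is automatic from the symmetry of $Q_k$, so it suffices to write out the case $i \le j$ and invoke that symmetry for the other case.
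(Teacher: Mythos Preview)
Your argument is correct in both parts. The three-term recurrence and its solution via the characteristic equation are standard and your bookkeeping is right: with $\lambda-2=-2\cos\theta$ one obtains $D_k=(-1)^k\sin(k+1)\theta/\sin\theta$. For part (ii), your use of Usmani's formula is also correct; the sign simplification $(-1)^{i+j}(-1)^{i-1}(-1)^{k-j}(-1)^{-k}=(-1)^{2i-1}=-1$ and the product-to-sum step both check out and give the stated closed form.

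As a point of comparison, the paper does not actually prove this lemma: it simply cites \cite{hu1996analytical} (with a footnote noting that some formulas there are stated incorrectly and have been corrected here). So your write-up is not an alternative to the paper's proof but rather a self-contained replacement for an external reference. This is a net gain in clarity: the recurrence plus Usmani's formula are exactly the elementary ingredients one would expect, and your derivation makes the dependence on the parametrisation $\theta=\pi-\cos^{-1}((\lambda-2)/2)$ transparent, whereas the cited source uses a different convention for $\theta$ (hence the paper's corrective footnote). The only thing you might add for completeness is the observation that $\theta\in(0,\pi)$ guarantees $\sin\theta\neq 0$, so the formula in (i) is genuinely well defined throughout the range $0<\lambda<4$.
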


\begin{lemma}\label{Lem092uy7tcriugygeyavdjghesvfw}
Assume \(\lambda \geq 4\) and let \(\theta = \cosh^{-1}(\frac{\lambda -2}{2})\). Then
\begin{enumerate}[(i)]
\item for \(\lambda > 4\), we have \(\mathrm{det}(Q_{k}) = \frac{\sinh\left(k+1\right)\theta}{\sinh \theta}\).
\item for \(\lambda = 4\), we have \(\mathrm{det}(Q_{k}) = k+1\).
\item The inverse matrix \(R = Q_{k}^{-1}\) exists for all \(\lambda \geq 4\), and is given by
\begin{equation}\label{eq49nhuotqi3f2q3dut3drt22}
R_{ij} = \left(-1\right)^{i+j} \cdot \frac{\cosh\left(k+1-\left\vert i-j\right\vert\right)\theta - \cosh\left(k+1-i-j\right)\theta}{2\sinh \theta \sinh\left(k+1\right)\theta}, \qquad \mathrm{for\,\, } 1\leq i, j\leq k.
\end{equation}
\end{enumerate}
\end{lemma}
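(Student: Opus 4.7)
The plan is to establish the determinant and inverse formulas separately, with the determinant handled first by a recurrence and the inverse by the classical formula for tridiagonal inverses.

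First I would derive a recurrence for $\det Q_k(\lambda)$. Cofactor expansion along the first row of the tridiagonal matrix \eqref{eq13587itu1v6rtkuyrj6ytdcs} yields
\begin{equation*}
\det Q_k = (\lambda-2)\det Q_{k-1} - \det Q_{k-2},
\end{equation*}
with initial conditions $\det Q_0 := 1$ and $\det Q_1 = \lambda - 2$. The characteristic polynomial $x^2 - (\lambda-2)x + 1 = 0$ has roots $x_\pm$ whose form depends on whether $(\lambda-2)^2$ exceeds $4$. For $\lambda > 4$, the substitution $\lambda - 2 = 2\cosh\theta$ with $\theta > 0$ gives roots $e^{\pm \theta}$; plugging the ansatz $\det Q_k = \sinh((k+1)\theta)/\sinh\theta$ into the recurrence and using the identity $2\cosh\theta \sinh(k\theta) = \sinh((k+1)\theta) + \sinh((k-1)\theta)$ verifies both the recurrence and the initial values, proving (i). For the degenerate case $\lambda = 4$ in (ii), the characteristic polynomial has the double root $x = 1$, so the general solution of the recurrence is $A + Bk$; the initial conditions force $A = B = 1$, giving $\det Q_k = k+1$. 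Alternatively, (ii) follows from (i) by taking $\theta \to 0^+$.

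For the inverse formula (iii), the key input is the classical representation of the inverse of a symmetric tridiagonal matrix in terms of its leading and trailing principal minors (see, e.g., Usmani's formula). Since $\sinh((k+1)\theta)/\sinh\theta > 0$ for $\theta > 0$ and $k+1 > 0$ at $\lambda = 4$, the matrix $Q_k$ is invertible for every $\lambda \geq 4$. With unit sub- and super-diagonals, the formula reads, for $i \leq j$,
\begin{equation*}
R_{ij} = (-1)^{i+j}\,\frac{\det Q_{i-1}\cdot \det Q_{k-j}}{\det Q_k},
\end{equation*}
(and symmetrically for $i > j$), where $\det Q_0 := 1$. Substituting the closed form from Step 1 gives
\begin{equation*}
R_{ij} = (-1)^{i+j}\,\frac{\sinh(i\theta)\,\sinh((k-j+1)\theta)}{\sinh\theta\,\sinh((k+1)\theta)},
\end{equation*}
and the product-to-sum identity $2\sinh A \sinh B = \cosh(A-B) - \cosh(A+B)$, applied with $A = i\theta$ and $B = (k-j+1)\theta$, converts the numerator into $\cosh((k+1-|i-j|)\theta) - \cosh((k+1-i-j)\theta)$ for $i \leq j$ (using $|i-j| = j-i$). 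The symmetric case $i > j$ follows by interchanging the roles of $i$ and $j$, since both the given formula and the tridiagonal inverse are symmetric in $(i,j)$.

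The main technical obstacle is not the hyperbolic manipulations, which are routine, but rather two bookkeeping points: first, getting the signs right in Usmani's formula (the factor $(-1)^{i+j}$ arises because the off-diagonals of $Q_k$ are $+1$, so the relevant cofactors pick up alternating signs); and second, handling the boundary case $\lambda = 4$ in the inverse formula, where both numerator and denominator of the claimed expression vanish. This is resolved by noting that $R_{ij}(\lambda)$ is a continuous function of $\lambda$ on $[4,\infty)$ (the matrix entries are polynomial in $\lambda$ and the determinant is nonzero), so the $\lambda=4$ value is the limit $\theta \to 0^+$ of the $\lambda > 4$ formula, which by L'Hôpital or Taylor expansion yields a finite expression consistent with inverting the $\lambda=4$ matrix directly.
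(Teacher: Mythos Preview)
Your argument is correct and self-contained. One small slip: the product-to-sum identity should read $2\sinh A\sinh B = \cosh(A+B) - \cosh(A-B)$, not $\cosh(A-B)-\cosh(A+B)$; with $A=i\theta$ and $B=(k-j+1)\theta$ (so $A+B=(k+1-|i-j|)\theta$ and $A-B=-(k+1-i-j)\theta$) the correct version is exactly what delivers the numerator you claim, so the conclusion is unaffected.

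As for comparison with the paper: the paper does not actually prove this lemma. It simply refers the reader to \cite{hu1996analytical} for both Lemmas~\ref{Lem092uy7tcrmyftdrrdfdfdfw} and~\ref{Lem092uy7tcriugygeyavdjghesvfw} (with a footnote that some formulas there required correction). Your proof via the three-term recurrence for $\det Q_k$ and Usmani's formula for the tridiagonal inverse is the standard direct route and supplies what the paper outsources; in that sense you have done more than the paper, not less.
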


Recall \(\alpha_{l}\) defined by (\ref{eq942hk2tyaurg4frgaeyra}). By Lemmas \ref{Lem092uy7tcrmyftdrrdfdfdfw} and \ref{Lem092uy7tcriugygeyavdjghesvfw}, and a straightforward calculation, we have
\begin{lemma}\label{Lem42kutkbqifop23ifbwe}
The matrix \(Q_{n-1}(\lambda)\) is invertible if and only if \(\lambda \neq \alpha_{l}\) for \(l = 1,\ldots, n-1\).
\end{lemma}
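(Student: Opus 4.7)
The plan is to recognize that $\det(Q_{n-1}(\lambda))$, viewed as a function of $\lambda$, is a polynomial of degree $n-1$ (this follows from cofactor expansion of the tridiagonal matrix in (\ref{eq13587itu1v6rtkuyrj6ytdcs})), so it has at most $n-1$ roots counted with multiplicity. Consequently, the entire claim reduces to locating exactly $n-1$ distinct real values of $\lambda$ at which $\det(Q_{n-1}(\lambda))=0$, and showing that these are precisely $\alpha_{1},\ldots,\alpha_{n-1}$. Since $\alpha_{l}=2(1-\cos(l\pi/n))$ is strictly increasing in $l$ on $\{0,1,\ldots,n\}$, the candidate values $\alpha_{1},\ldots,\alpha_{n-1}$ are automatically distinct and all lie in the open interval $(0,4)$.

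The main work is then to apply Lemma \ref{Lem092uy7tcrmyftdrrdfdfdfw}(i) on the interval $\lambda\in(0,4)$. For such $\lambda$, set $\theta=\pi-\cos^{-1}((\lambda-2)/2)\in(0,\pi)$, so that
\begin{equation*}
\det Q_{n-1}(\lambda)=\frac{(-1)^{n-1}\sin(n\theta)}{\sin\theta}.
\end{equation*}
Since $\sin\theta\neq 0$ on $(0,\pi)$, this determinant vanishes if and only if $\sin(n\theta)=0$, i.e.\ $\theta=l\pi/n$ for some integer $l$, and the constraint $\theta\in(0,\pi)$ forces $l\in\{1,\ldots,n-1\}$. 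Inverting the substitution gives $(\lambda-2)/2=\cos(\pi-l\pi/n)=-\cos(l\pi/n)$, so $\lambda=2(1-\cos(l\pi/n))=\alpha_{l}$. This already exhibits $n-1$ distinct roots of the degree-$(n-1)$ polynomial $\det Q_{n-1}(\lambda)$, and hence accounts for all of them.

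To finish, I would briefly dispose of the remaining values of $\lambda$ to confirm that no spurious singularities arise outside $(0,4)$: for $\lambda=4$, Lemma \ref{Lem092uy7tcriugygeyavdjghesvfw}(ii) gives $\det Q_{n-1}(4)=n\neq 0$; for $\lambda>4$, Lemma \ref{Lem092uy7tcriugygeyavdjghesvfw}(i) gives $\det Q_{n-1}(\lambda)=\sinh(n\theta)/\sinh\theta>0$ with $\theta>0$; and for $\lambda\leq 0$ one can observe directly that $Q_{n-1}(\lambda)=(\lambda-2)I+T$, where $T$ is the path-graph adjacency matrix with eigenvalues $2\cos(j\pi/n)$, so the eigenvalues of $Q_{n-1}(\lambda)$ are $\lambda-\alpha_{j}$ for $j=1,\ldots,n-1$, all strictly negative when $\lambda\leq 0$, hence nonzero.

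I do not expect any genuine obstacle here: the essential content is packaged inside Lemmas \ref{Lem092uy7tcrmyftdrrdfdfdfw} and \ref{Lem092uy7tcriugygeyavdjghesvfw}. The only mildly delicate point is the change of variables $\lambda\leftrightarrow\theta$, where one must keep track of the shift by $\pi$ in the definition of $\theta$ so that the roots $\theta=l\pi/n$ of $\sin(n\theta)$ translate cleanly to $\alpha_{l}=2(1-\cos(l\pi/n))$ rather than to $2(1+\cos(l\pi/n))$.
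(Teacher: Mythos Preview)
Your proof is correct and follows essentially the same approach as the paper, which simply cites Lemmas~\ref{Lem092uy7tcrmyftdrrdfdfdfw} and~\ref{Lem092uy7tcriugygeyavdjghesvfw} and calls the rest a straightforward calculation. You have supplied exactly that calculation in full detail; the degree-count argument and the eigenvalue observation $Q_{n-1}(\lambda)=(\lambda-2)I+T$ are nice redundancies, either of which would already suffice on its own.
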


\subsection{Our approach for investigating the spectrum of the modified Laplacian $L_{G_{p}}$}\label{Sec6o87tg2eigwiugd}

In this section, we discuss the method we use to investigate the spectrum of the modified Laplacian \(L_{G_p}\). We directly apply this method to study model III and then use the results to investigate models I and II.

Recall that the modified Laplacian of model III is given by
\begin{equation}\label{eq8i6vitkur5r4efnhr5e}
L_{G_{p}} = \left(\begin{array}{cc}
L_{1} & -\Delta\\ -C & L_2
\end{array}\right),
\end{equation}
where \(L_{1}\) and \(L_{2}\) are as in Notation \ref{Notation94kyvtw6r4eu6t3r}, and the matrices \(C\) and \(\Delta\) are given by (\ref{eq801yjhyjfdufe87t32pg1eo30uhus}) and
(\ref{eq72r887tiygrjfcyr4}), respectively. Our study of the eigenvalues of \(L_{G_{p}}\) is based on the following lemma.

\begin{lemma}\label{Lem4401oihe8uoihh29290ihbcd}
Consider the modified Laplacian \(L_{G_p}\) given by (\ref{eq8i6vitkur5r4efnhr5e}). For \(\lambda \in \mathbb{R}\), we have
\begin{enumerate}[(i)]
\item if \(\lambda \notin \sigma(L_{1})\), then \(\mathrm{det}\left(L_{G_p} -\lambda I\right) = \mathrm{det}\left(L_{1} -\lambda I\right) \cdot P_{1}\left(\lambda\right)\), where \(P_{1}(\lambda) = \mathrm{det}(M_{1})\), for \(M_{1} = M_{1}(\lambda) = L_{2}-\lambda I - C \left(L_{1} -\lambda I\right)^{-1} \Delta\).
\item if \(\lambda\notin \sigma(L_{2})\), then \(\mathrm{det}\left(L_{G_p} -\lambda I\right) = \mathrm{det}\left(L_{2} - \lambda I\right) \cdot P_{2}\left(\lambda\right)\), where \(P_{2}(\lambda) = \mathrm{\det}(M_2)\), for \(M_2 = M_{2}(\lambda) = L_{1} -\lambda I - \Delta \left(L_{2} -\lambda I\right)^{-1} C\).
\item for \(i=1,2\), we have that \(\lambda_{0}\notin \sigma(L_i)\) is an eigenvalue of \(L_{G_p}\) with algebraic multiplicity \(k\), if and only if \(P_{i}(\lambda_{0}) = P^{\prime}_{i}(\lambda_{0}) = \cdots = \frac{d^{k-1} P_{i}}{d \lambda^{k-1}}(\lambda_0) = 0\), and  \(\frac{d^{k} P_{i}}{d \lambda^{k}}(\lambda_0)\neq 0\).
\end{enumerate}
\end{lemma}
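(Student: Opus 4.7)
Parts (i) and (ii) are applications of the Schur complement determinant formula, and part (iii) is a routine argument about orders of vanishing of analytic functions, using (i) or (ii) as a factorization.

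For (i), my plan is to invoke the standard identity
\[
\det\begin{pmatrix} A & B \\ E & D \end{pmatrix} = \det(A)\,\det\bigl(D - E A^{-1} B\bigr),
\]
valid whenever $A$ is invertible. Setting $A = L_{1} - \lambda I$, $B = -\Delta$, $E = -C$, and $D = L_{2} - \lambda I$, the hypothesis $\lambda\notin\sigma(L_{1})$ is exactly what makes $A$ invertible. The cancellation of the two minus signs turns the Schur complement into $L_{2}-\lambda I - C(L_{1}-\lambda I)^{-1}\Delta = M_{1}(\lambda)$, giving the claimed factorization $\det(L_{G_{p}}-\lambda I) = \det(L_{1}-\lambda I)\cdot P_{1}(\lambda)$. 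Part (ii) is symmetric: use the dual Schur formula
\[
\det\begin{pmatrix} A & B \\ E & D \end{pmatrix} = \det(D)\,\det\bigl(A - B D^{-1} E\bigr),
\]
with $D=L_{2}-\lambda I$ invertible by the hypothesis $\lambda\notin\sigma(L_{2})$.

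For (iii), the plan is to compare orders of vanishing at $\lambda_{0}$. Fix $i\in\{1,2\}$ and assume $\lambda_{0}\notin\sigma(L_{i})$. By continuity, $\det(L_{i}-\lambda I)$ is nonzero on an open neighborhood $U$ of $\lambda_{0}$, and $(L_{i}-\lambda I)^{-1}$ depends analytically on $\lambda$ there by Cramer's rule. Consequently $P_{i}(\lambda)$ is analytic (in fact rational) on $U$, and the factorization from (i) or (ii) gives
\[
\det(L_{G_{p}}-\lambda I) = \det(L_{i}-\lambda I)\,P_{i}(\lambda)\qquad (\lambda\in U),
\]
with the prefactor $\det(L_{i}-\lambda I)$ nonzero at $\lambda_{0}$. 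Since the algebraic multiplicity of $\lambda_{0}$ as an eigenvalue of $L_{G_{p}}$ is, by definition, the order of the zero of the characteristic polynomial $\det(L_{G_{p}}-\lambda I)$ at $\lambda_{0}$, and since multiplication by a nonvanishing analytic factor does not change the order of a zero, this multiplicity equals the order of vanishing of $P_{i}$ at $\lambda_{0}$. The analytic characterization of order-$k$ zeros via Taylor coefficients then gives the stated criterion $P_{i}(\lambda_{0})=P_{i}'(\lambda_{0})=\cdots=P_{i}^{(k-1)}(\lambda_{0})=0$ with $P_{i}^{(k)}(\lambda_{0})\neq 0$.

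\textbf{Main obstacle.} There is no serious obstacle; the only point requiring care is the smoothness of $P_{i}$ at $\lambda_{0}$. This is handled by the fact that $(L_{i}-\lambda I)^{-1}$ is analytic away from $\sigma(L_{i})$, so the determinant $P_{i}(\lambda) = \det(M_{i}(\lambda))$ is analytic on a neighborhood of $\lambda_{0}$ where the derivatives appearing in (iii) are unambiguously defined.
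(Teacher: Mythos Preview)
Your proposal is correct and matches the paper's approach: the paper does not give an explicit proof of this lemma, but it records the Schur complement identity as Lemma~\ref{lem11y78ty998796f56drytr} in the appendix, which is precisely the tool you invoke for parts (i) and (ii), and part (iii) is indeed a routine order-of-vanishing argument that the paper leaves implicit.
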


\begin{remark}\label{Rem5092hgyfjngctdhgcgcs}
Lemma \ref{Lem4401oihe8uoihh29290ihbcd} allows us to count the multiplicity of \(\lambda_{0}\in \sigma(L_{G_p})\) when \(\lambda_{0}\notin \sigma(L_1) \cap \sigma(L_{2})\). However, this lemma may give information about the multiplicity of \(\lambda_{0}\) when \(\lambda_{0}\in \sigma(L_1) \cap \sigma(L_{2})\) as well. This is important for us since we have such eigenvalues in our models. Let \(\lambda_{0}\) be such an eigenvalue. Since \(\lambda_{0}\in \sigma(L_1)\), the matrix \((L_1 - \lambda_{0}I)^{-1}\) does not exist. However, depending on the matrices \(C\) and \(\Delta\), the expression \(\lim_{\lambda \rightarrow \lambda_{0}} Y(\lambda)\), where \(Y(\lambda):= C(L_1 - \lambda_{0}I)^{-1}\Delta\), may exist. This allows us to define \(M_{1}\) and \(P_{1}\) at \(\lambda = \lambda_{0}\) by taking the limit \(\lambda\rightarrow \lambda_{0}\). Now, if \(Y(\lambda)\) at \(\lambda = \lambda_{0}\) is smooth enough, then the multiplicity of \(\lambda_{0}\) as an eigenvalue of \(L_{G_p}\) is \(l + k\), where \(l\) is the multiplicity of \(\lambda_{0}\) as an eigenvalue of \(L_{1}\) and \(k\) is the integer that satisfies \(P_{1}(\lambda_{0}) = P^{\prime}_{1}(\lambda_{0}) = \cdots = \frac{d^{k-1} P_{1}}{d \lambda^{k-1}}(\lambda_0) = 0\), and \(\frac{d^{k} P_{1}}{d \lambda^{k}}(\lambda_0)\neq 0\). Analogous holds when \(\lambda_{0}\in \sigma(L_2)\) but \(\Delta \left(L_{2} -\lambda I\right)^{-1} C\) is well-defined and smooth enough at \(\lambda = \lambda_{0}\).
\end{remark}

According to Lemma \ref{Lem4401oihe8uoihh29290ihbcd}, an eigenvalue \(\lambda\) of \(L_{G_p}\) that is not in \(\sigma(L_1) \cap \sigma(L_{2})\) must satisfy \(P_{1}(\lambda) = 0\) or \(P_{2}(\lambda) = 0\). The proofs of our results are based on the analysis of these two equations. Sections \ref{Section2i7tiug7tyge} and \ref{Section1782kli0jyf37uyf} are dedicated to this analysis.

Before we proceed further, let us show that \(\lambda = 1\) is an eigenvalue of \(L_{G_{p}}\) for any arbitrary \(\overline{\delta}\).
\begin{lemma}\label{Lem36iyguyfhctgczhvjgwi}
For arbitrary \(\overline{\delta}\), we have \(1\in\sigma(L_{G_{p}})\). Moreover, the (algebraic and geometric) multiplicity of \(1\) is at least \(m-2\).
\end{lemma}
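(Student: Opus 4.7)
My plan is to establish $\dim\ker(L_{G_p} - I) \geq m - 2$ directly; the algebraic multiplicity bound then follows because algebraic multiplicity always dominates geometric, and $1\in\sigma(L_{G_p})$ follows because $m\ge 4$ makes this dimension nonzero.

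First I would unpack the eigenvector equation $(L_{G_p} - I)(u;v) = 0$ into the pair $(L_1 - I)u = \Delta v$ and $(L_2 - I)v = Cu$, using the notation of Section~\ref{Sec902u3h2be}. Because $L_2 - I$ has vanishing diagonal on its last $m-1$ rows with $-1$ in column $0$, those rows of the second equation force $v_0 = 0$. Since $\Delta$ and $C$ are rank one with images along the first cycle and first star basis vectors respectively, the first equation reduces to $(L_1 - I)u = c\,\mathbf{e}_1$ with $c := \sum_{i=1}^{m-1}\delta_i v_i$ and $\mathbf{e}_1$ the first cycle basis vector, while row $0$ of the second reduces to the single scalar compatibility $\sum_{i=1}^{m-1} v_i + w^\top u = 0$.

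In the generic case $1\notin \sigma(L_1)$, set $\xi = (L_1 - I)^{-1}\mathbf{e}_1$ and $g = w^\top \xi$. Then $u = c\xi$, and the free parameters $(v_1,\ldots,v_{m-1})$ are subject to the single linear equation $\sum_{i=1}^{m-1}(1+g\,\delta_i)v_i = 0$, giving an eigenspace of dimension at least $m-2$. In the degenerate case $1\in\sigma(L_1)$, let $N := \ker(L_1 - I)$; since $L_1 = L_{C_n} + D_\Delta$ is symmetric, $\mathrm{range}(L_1-I) = N^\perp$. If $\mathbf{e}_1 \perp N$, then $(L_1 - I)u = c\mathbf{e}_1$ is solvable for every $c$ with $u$ determined only modulo $N$, so the augmented parameters $(v_1,\ldots,v_{m-1},\nu\in N)$ satisfy one compatibility equation and give eigenspace dimension $m-2+\dim N$. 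If $\mathbf{e}_1 \not\perp N$, solvability forces $c=0$ and $u\in N$, after which $(v_1,\ldots,v_{m-1},\nu\in N)$ satisfies two independent equations, yielding dimension $m-3+\dim N \geq m-2$ since $\dim N \geq 1$.

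The main subtlety is the case split when $1\in\sigma(L_1)$, which the naive parametrization of the first case cannot handle because $\xi$ does not exist. It is resolved cleanly by the self-adjointness of $L_1$: it identifies $\mathrm{range}(L_1-I)$ with $N^\perp$, reducing solvability of $(L_1-I)u = c\mathbf{e}_1$ to a single orthogonality test, and any degrees of freedom lost in fixing $c$ are immediately recovered in the freedom to choose $\nu\in N$. Combining the three cases yields $\dim\ker(L_{G_p}-I)\ge m-2$ for every $\overline{\delta}$, which completes the proof.
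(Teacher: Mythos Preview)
Your argument is correct, but the paper takes a genuinely different and much shorter route: it works with \emph{left} eigenvectors rather than right ones. Concretely, the paper observes that the left $1$-eigenspace of $L_2$ is $(m-2)$-dimensional and every vector $v$ in it has $v_0=0$; since $C$ has nonzero entries only in its first row, any such $v$ satisfies $v^\top C=0$, and hence $(0_{1\times n},v^\top)L_{G_p}=(0_{1\times n},v^\top)$. This produces $m-2$ independent left eigenvectors of $L_{G_p}$ in one line, with no case analysis.

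The asymmetry between the two approaches is structural. For left eigenvectors supported on the star block, the coupling term $v^\top C$ vanishes automatically once $v_0=0$, which is already forced by $v^\top(L_2-I)=0$; no extra constraint appears. For right eigenvectors supported on the star block, the analogous coupling term $\Delta v$ equals $\bigl(\sum_i\delta_i v_i\bigr)e_1$, which is an \emph{additional} constraint beyond the right $1$-eigenspace conditions of $L_2$, so you only get $m-3$ eigenvectors this way. Your proof recovers the missing dimension by allowing $u\neq 0$, at the cost of having to solve $(L_1-I)u=ce_1$ and hence split on whether $1\in\sigma(L_1)$ and whether $e_1\perp\ker(L_1-I)$. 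All of that is handled correctly (the symmetry of $L_1$ is the key tool, as you note), but it is work that the left-eigenvector argument avoids entirely.
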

\begin{proof}
Recall that \(L_{G_{p}} = \left(\begin{smallmatrix} L_{1} & -\Delta\\
-C & L_2\end{smallmatrix}\right)\). It follows from the proof of Lemma \ref{Lem391q0kieb6fb2j2iwbjyfu5ewwe} (see relation (\ref{eq80279867tuyfwytchgcss})) that there exist \(m-2\) linearly independent left eigenvectors \(v\) such that \(v^{\top} L_{2} = v^{\top}\). Moreover, any such a vector \(v\) is of the form \(v = (0, v_{1}, \cdots, v_{m-1})\in \mathbb{R}^{m}\) (the first entry is zero). Consider the vector \(u:= (0, v)\in\mathbb{R}^{n+m}\). Taking into account that, except for the first row, all the entries of \(C\) are zero (see (\ref{eq801yjhyjfdufe87t32pg1eo30uhus})), we obtain
\begin{equation*}
u^{\top} L_{G_{p}} = \left(0_{1\times n}, v^{\top}\right) \left(\begin{array}{cc} L_{1} & -\Delta\\ -C & L_2\end{array}\right) = \left(0_{1\times n}, v^{\top} L_{2}\right) = u^{\top}.
\end{equation*}
This means that for such \(v\)s, the corresponding vectors \(u\) are left eigenvectors of \(L_{G_{p}}\) associated with the eigenvalue \(1\). This proves the lemma.
\end{proof}

\subsubsection{Analysis of $P_{2}$}\label{Section2i7tiug7tyge}

In this section, we investigate the matrix \(M_{2}(\lambda)\) and the function \(P_{2}(\lambda):= \mathrm{det}(M_{2}(\lambda))\) introduced in Lemma \ref{Lem4401oihe8uoihh29290ihbcd} for model III. We first need to analyze the matrix \(L_{2} - \lambda I\) and its inverse:
\begin{lemma}\label{Lem52o08y8y54667tytuyfd}
Recall \(\mu\) from (\ref{eq88yzz8qr7t4rygfea}). We have
\begin{enumerate}[(i)]
\item \label{Item9893r8i} the function \(\mu\) is well-defined at \(\lambda\notin\lbrace \beta^{-}_{m, w}, \beta^{+}_{m, w}\rbrace\).
\item \label{Item9893r8ii} for \(\lambda\in\mathbb{R}\setminus\sigma(L_{2}) = \lbrace \beta^{-}_{m, w}, 1, \beta^{+}_{m, w}\rbrace\), we have
\begin{equation}\label{eq118y8qbt3447rg4tgt}
\left(L_{2} - \lambda I\right)^{-1} = 
\left(\begin{array}{cc}
m -1 + w - \lambda & -\mathbf{1}^{\top}\\
-\mathbf{1} & \left(1-\lambda\right) I
\end{array}\right)^{-1} = \left(\begin{array}{cc}
\mu & \frac{\mu}{1-\lambda}\mathbf{1}^{\top}\\
\frac{\mu}{1-\lambda}\mathbf{1} & \frac{1}{1-\lambda} I + \frac{\mu}{\left(1-\lambda\right)^{2}}\mathbf{1}\mathbf{1}^{\top}
\end{array}\right).
\end{equation}
\end{enumerate}
\end{lemma}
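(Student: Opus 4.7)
The statement is essentially a direct computation: it asserts a closed-form expression for $(L_2 - \lambda I)^{-1}$ based on the explicit block structure of $L_2 = L_{S_m} + D_C$. My plan is to verify the block form of $L_2 - \lambda I$ and then apply the Schur-complement formula for block matrix inversion.

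Part (i) is immediate from the definition: by construction in Definition~\ref{Defn75u6ru6ru452e5}, the numbers $\beta^{\pm}_{m,w}$ are precisely the roots of the quadratic $\lambda^2 - (m+w)\lambda + w$ appearing in the denominator of $\mu(\lambda)$, so $\mu$ is well-defined (and finite) exactly off that set.

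For part (ii), I will first substitute the explicit formulas for $L_{S_m}$ and $D_C$ given in (\ref{eq801yjhyjfdufe87t32pg1eo30uhus}) into $L_2 = L_{S_m} + D_C$. A direct calculation shows
\begin{equation*}
L_{2} - \lambda I = \begin{pmatrix} m-1+w-\lambda & -\mathbf{1}^{\top}\\ -\mathbf{1} & (1-\lambda)I \end{pmatrix},
\end{equation*}
which matches the matrix to be inverted. Next I will apply the block inversion formula: writing this as $\left(\begin{smallmatrix} A & B \\ C' & D \end{smallmatrix}\right)$ with $A = m-1+w-\lambda$, $B = -\mathbf{1}^\top$, $C' = -\mathbf{1}$, $D = (1-\lambda)I$, the block $D$ is invertible whenever $\lambda \neq 1$, with $D^{-1} = (1-\lambda)^{-1}I$. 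The Schur complement is
\begin{equation*}
A - B D^{-1} C' \;=\; (m-1+w-\lambda) - \frac{m-1}{1-\lambda} \;=\; \frac{\lambda^{2} - (m+w)\lambda + w}{1-\lambda} \;=\; \frac{1}{\mu(\lambda)},
\end{equation*}
after clearing denominators. This quantity is nonzero precisely when $\lambda \neq 1$ (so the numerator is nonzero) and finite precisely when $\lambda \notin \{\beta^{-}_{m,w}, \beta^{+}_{m,w}\}$; combined with the requirement $\lambda \neq 1$, this is exactly the condition $\lambda \notin \sigma(L_2)$ identified in part (i) of the earlier Lemma.

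Finally, I will plug this Schur complement into the standard block inverse formula
\begin{equation*}
\begin{pmatrix} A & B \\ C' & D \end{pmatrix}^{-1} = \begin{pmatrix} (A-BD^{-1}C')^{-1} & -(A-BD^{-1}C')^{-1}BD^{-1} \\ -D^{-1}C'(A-BD^{-1}C')^{-1} & D^{-1} + D^{-1}C'(A-BD^{-1}C')^{-1}BD^{-1} \end{pmatrix},
\end{equation*}
and read off each block using $B = -\mathbf{1}^\top$, $C' = -\mathbf{1}$, $D^{-1} = (1-\lambda)^{-1}I$, and $(A-BD^{-1}C')^{-1} = \mu$. The $(1,1)$ block is $\mu$; the $(1,2)$ and $(2,1)$ blocks yield $\frac{\mu}{1-\lambda}\mathbf{1}^\top$ and $\frac{\mu}{1-\lambda}\mathbf{1}$ respectively; and the $(2,2)$ block becomes $\frac{1}{1-\lambda}I + \frac{\mu}{(1-\lambda)^2}\mathbf{1}\mathbf{1}^\top$, matching (\ref{eq118y8qbt3447rg4tgt}) exactly.

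There is no real obstacle here; the lemma is a bookkeeping identity. The only point to be careful about is ensuring the exclusion set is sharp — one must track both the invertibility of $D$ (giving $\lambda \neq 1$) and the invertibility of the Schur complement (giving $\lambda \neq \beta^{\pm}_{m,w}$), which together account for the full spectrum $\sigma(L_2) = \{\beta^-_{m,w}, 1, \beta^+_{m,w}\}$ already computed in the preceding lemma.
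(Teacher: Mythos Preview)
Your approach is correct and is exactly the paper's: it invokes the Schur-complement block inversion formula (Lemma~\ref{lem11y78ty998796f56drytr}) with $D=(1-\lambda)I$ and reads off the four blocks. One small slip: in your sentence following the Schur-complement computation you have the ``nonzero'' and ``finite'' conditions swapped (the Schur complement $\tfrac{\lambda^{2}-(m+w)\lambda+w}{1-\lambda}$ vanishes at $\beta^{\pm}_{m,w}$ and blows up at $\lambda=1$, not the other way around), but your final exclusion set $\{\beta^{-}_{m,w},1,\beta^{+}_{m,w}\}$ and the inverse itself are correct.
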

\begin{proof}
Item (\ref{Item9893r8i}) is straightforward. Item (\ref{Item9893r8ii}) follows from Lemma \ref{lem11y78ty998796f56drytr}.
\end{proof}

We now start to calculate \(M_2 = M_{2}(\lambda) = L_{1} -\lambda I - \Delta \left(L_{2} -\lambda I\right)^{-1} C\). The expression \(\left(L_{2} - \lambda I\right)^{-1}\) is well-defined at \(\lambda\notin\sigma(L_{2}) = \lbrace \beta^{-}_{m, w}, 1, \beta^{+}_{m, w}\rbrace\). By a straightforward calculation and using relation (\ref{eq118y8qbt3447rg4tgt}), for \(\lambda\notin\sigma(L_{2})\), we have \(\Delta \left(L_{2} -\lambda I\right)^{-1} C = y C\), where \(y = y(\lambda)\) is given by (\ref{e2e1q7983q75c6r3ytfurwrks}). Note that \(y\), and therefore \(yC\), is well-defined and smooth at \(\lambda = 1\). In other words, although \(\left(L_{2} -\lambda I\right)^{-1}\) is not defined at \(\lambda = 1\) (because \(1\in\sigma(L_{2})\)), the expression \(\Delta \left(L_{2} -\lambda I\right)^{-1} C\) can be defined at \(\lambda = 1\), and so do the matrix \(M_{2}\) and the function \(P_{2}\). This was discussed earlier in Remark \ref{Rem5092hgyfjngctdhgcgcs}. We give the following lemma to emphasize this property.

\begin{lemma}\label{Lemy9yq7g3ryg4frewiyjfre}
The function \(P_{2}(\lambda) = \mathrm{det}(M_{2})\) is well-defined and smooth at \(\lambda\notin\lbrace \beta^{-}_{m, w}, \beta^{+}_{m, w}\rbrace\).
\end{lemma}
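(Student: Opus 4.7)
The plan is to show that the only apparent singularity of $\Delta (L_2 - \lambda I)^{-1} C$ at $\lambda = 1 \in \sigma(L_2)$ is removable, so that $M_2$, and hence $P_2 = \det M_2$, extends smoothly across $\lambda = 1$, with the only remaining singularities at $\beta^{\pm}_{m,w}$. The mechanism is a rank-one cancellation: the only nonzero rows of $\Delta$ and $C$ are the first rows, and the $(1,1)$-structure of $(L_2 - \lambda I)^{-1}$ provides an exact algebraic cancellation of the pole at $\lambda = 1$.

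First I would exploit the rank-one structure of the relevant matrices. Using (\ref{eq72r887tiygrjfcyr4}) and (\ref{eq801yjhyjfdufe87t32pg1eo30uhus}), write $\Delta = e_1^{(n)} d^{\top}$ with $d = (\delta_{0}, \delta_{1}, \ldots, \delta_{m-1})^{\top} \in \mathbb{R}^{m}$ and $C = e_{1}^{(m)} c^{\top}$ with $c = (w_{0}, w_{1}, \ldots, w_{n-1})^{\top} \in \mathbb{R}^{n}$, where $e_{1}^{(k)}$ is the first standard basis vector of $\mathbb{R}^{k}$. Then for $\lambda \notin \sigma(L_{2}) = \{\beta^{-}_{m,w}, 1, \beta^{+}_{m,w}\}$,
\begin{equation*}
\Delta\left(L_{2} - \lambda I\right)^{-1} C = \Bigl( d^{\top} \left(L_{2}-\lambda I\right)^{-1} e_{1}^{(m)} \Bigr) \, e_{1}^{(n)} c^{\top},
\end{equation*}
so the entire $\lambda$-dependence of this product sits in a scalar, namely the first column of $(L_{2} - \lambda I)^{-1}$ paired with $d$.

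Next I would compute this scalar explicitly from Lemma \ref{Lem52o08y8y54667tytuyfd}. By (\ref{eq118y8qbt3447rg4tgt}) the first column of $(L_{2}-\lambda I)^{-1}$ is $\bigl(\mu,\, \tfrac{\mu}{1-\lambda}\mathbf{1}^{\top}\bigr)^{\top}$, whence
\begin{equation*}
d^{\top}\left(L_{2}-\lambda I\right)^{-1} e_{1}^{(m)} = \delta_{0}\mu + (\delta - \delta_{0})\frac{\mu}{1-\lambda} = \frac{\mu(\delta - \delta_{0}\lambda)}{1 - \lambda}.
\end{equation*}
Substituting $\mu = (1-\lambda)/\bigl(\lambda^{2} - (m+w)\lambda + w\bigr)$ from (\ref{eq88yzz8qr7t4rygfea}), the $(1-\lambda)$ factors cancel to yield exactly $y(\lambda)$ as defined in (\ref{e2e1q7983q75c6r3ytfurwrks}). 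Thus on $\mathbb{R}\setminus\sigma(L_{2})$ one has
\begin{equation*}
\Delta\left(L_{2}-\lambda I\right)^{-1} C = y(\lambda) \, e_{1}^{(n)} c^{\top}.
\end{equation*}

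Finally I would observe that the right-hand side is manifestly smooth wherever $y$ is. Since $y$ is rational with denominator $\lambda^{2} - (m+w)\lambda + w = (\lambda - \beta^{-}_{m,w})(\lambda - \beta^{+}_{m,w})$, it extends analytically (in particular smoothly) to $\mathbb{R} \setminus \{\beta^{-}_{m,w}, \beta^{+}_{m,w}\}$; in particular the apparent pole at $\lambda = 1$ is not a pole of the product. Therefore $M_{2}(\lambda) = L_{1} - \lambda I - y(\lambda)\, e_{1}^{(n)} c^{\top}$ is a smooth matrix-valued function on $\mathbb{R}\setminus\{\beta^{-}_{m,w}, \beta^{+}_{m,w}\}$, and so is its determinant $P_{2}(\lambda)$, completing the proof. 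There is no real obstacle here beyond verifying the algebraic cancellation at $\lambda = 1$; the step deserving care is to recognize that the rank-one support of $\Delta$ and $C$ on the hub coordinate is precisely what picks out the $(1,1)$-block of $(L_{2}-\lambda I)^{-1}$ (the scalar $\mu$) and the couplings $\mu/(1-\lambda)$ in a way that annihilates the $1-\lambda$ denominator.
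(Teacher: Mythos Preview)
Your proof is correct and follows the same approach the paper takes in the text immediately preceding the lemma: compute $\Delta(L_{2}-\lambda I)^{-1}C$ explicitly, observe that the factor $1-\lambda$ in $\mu$ cancels so that the product equals $y(\lambda)$ times a constant rank-one matrix, and conclude smoothness from the fact that $y$ is rational with poles only at $\beta^{\pm}_{m,w}$. Your write-up simply makes the ``straightforward calculation'' the paper alludes to fully explicit via the rank-one factorizations $\Delta = e_{1}^{(n)} d^{\top}$ and $C = e_{1}^{(m)} c^{\top}$.
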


Having \(\Delta \left(L_{2} -\lambda I\right)^{-1} C = y C\), we obtain
\begin{equation}\label{eq9u8u0huqeiuwgewa218}
M_2 = M_2\left(\lambda\right) = \left(\begin{array}{c|c}
2-\lambda + \delta - w_{0}y &
\begin{array}{ccccc}
-1-w_{1}y & -w_{2}y & \cdots & w_{n-2}y & -1-w_{n-1}y
\end{array}
\\
\hline
\begin{array}{c}
-1\\ 0_{(n-3)\times 1} \\ -1
\end{array} & -Q_{n-1}
\end{array}\right),
\end{equation}
where \(Q_{n-1} = Q_{n-1}(\lambda)\) is the symmetric tridiagonal matrix given by (\ref{eq13587itu1v6rtkuyrj6ytdcs}). Applying Lemma \ref{lem11y78ty998796f56drytr} on this matrix, for \(\lambda\notin\lbrace\beta^{-}_{m, w}, \beta^{+}_{m, w}\rbrace\) such that \(Q_{n-1}(\lambda)\) is invertible (recall that, by Lemma \ref{Lem42kutkbqifop23ifbwe}, the matrix \(Q_{n-1}(\lambda)\) is invertible if and only if \(\lambda \neq \alpha_{l}\) for \(l = 1,\ldots, n-1\)), we obtain
\begin{equation}\label{eq6545eytd7wicq67rewq}
P_{2}\left(\lambda\right) = \mathrm{det}\left(M_2\right) = \left(-1\right)^{n-1} \mathrm{det}\left(Q_{n-1}\right) \left[\xi\left(\delta, \lambda\right) - y \psi\left(w,\lambda\right)\right],
\end{equation}
where
\begin{equation}\label{eqyb2ydvpoetvefigye2wyf2ew}
\xi\left(\delta, \lambda\right) = 2-\lambda + \delta + R_{11} + R_{1\, n-1} + R_{n-1\, 1} + R_{n-1\, n-1},
\end{equation}
for which \(R = \left(R_{ij}\right)_{1\leq i, j\leq n-1}\) is the inverse of \(Q_{n-1}\), and
\begin{equation}\label{eq04gaigf83f42dhgfa7973}
\psi = \psi\left(\overline{w}, \lambda\right) = w_{0} - \sum_{i=1}^{n-1} w_{i}\left[R_{i1}+ R_{i\, n-1}\right].
\end{equation}
Lemmas \ref{Lem092uy7tcrmyftdrrdfdfdfw} and \ref{Lem092uy7tcriugygeyavdjghesvfw} give some formulas for \(R=Q_{n-1}^{-1}\). Substituting these formulas in (\ref{eqyb2ydvpoetvefigye2wyf2ew}) and (\ref{eq04gaigf83f42dhgfa7973}) gives
\begin{lemma}\label{Lem8tvi7tvli7ut6rdkawtudfkd}
For the functions \(\xi\left(\delta, \lambda\right)\) and \(\psi(\overline{w}, \lambda)\), we have
\begin{equation*}
\xi = \xi\left(\delta, \lambda\right) = \left\lbrace\begin{array}{ll}
\delta -2\sin\theta \tan\frac{n\theta}{2}, & 0 < \lambda < 4 \mathrm{\,\, and\,\, } \theta = \pi - \cos^{-1}(\frac{\lambda -2}{2}),\\
\delta + \frac{2}{n}\cdot\left[\left(-1\right)^{n} - 1\right], & \lambda = 4, \\
\delta + \frac{2\sinh\theta}{\sinh n\theta}\cdot\left[\left(-1\right)^{n} - \cosh n\theta\right],\quad & \lambda >4 \mathrm{\,\, and \,\,} \theta = \cosh^{-1}(\frac{\lambda -2}{2}),
\end{array}\right.
\end{equation*}
and
\begin{equation*}
\psi = \psi\left(\overline{w}, \lambda\right) = \left\lbrace\begin{array}{ll}
\frac{1}{\cos\frac{n\theta}{2}}\sum_{i=0}^{n-1} w_{i}\cos\left(\frac{n}{2} - i\right)\theta, & 0 < \lambda < 4 \mathrm{\,\, and\,\, } \theta = \pi - \cos^{-1}(\frac{\lambda -2}{2}),\\
\sum_{i=0}^{n-1} \left(-1\right)^{i} w_{i}, & n \mathrm{\,\, is\,\, even, \,\,}\lambda = 4, \\
\sum_{i=0}^{n-1} \left(-1\right)^{i} w_{i}\left[1-\frac{2i}{n}\right], & n \mathrm{\,\, is\,\, odd, \,\,}\lambda = 4, \\
\frac{1}{\cosh\frac{n\theta}{2}}\sum_{i=0}^{n-1} \left(-1\right)^{i} w_{i}\cosh\left(\frac{n}{2} - i\right)\theta,\quad & n \mathrm{\,\, is\,\, even, \,\,}\lambda >4 \mathrm{\,\, and \,\,} \theta = \cosh^{-1}(\frac{\lambda -2}{2}), \\
\frac{1}{\sinh\frac{n\theta}{2}}\sum_{i=0}^{n-1} \left(-1\right)^{i} w_{i}\sinh\left(\frac{n}{2} - i\right)\theta,\quad & n \mathrm{\,\, is\,\, odd, \,\,}\lambda >4 \mathrm{\,\, and \,\,} \theta = \cosh^{-1}(\frac{\lambda -2}{2}).
\end{array}\right.
\end{equation*}
\end{lemma}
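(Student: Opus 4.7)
\textbf{Proof plan for Lemma \ref{Lem8tvi7tvli7ut6rdkawtudfkd}.} The plan is a direct computation: substitute the explicit formulas for $R = Q_{n-1}^{-1}$ provided by Lemmas \ref{Lem092uy7tcrmyftdrrdfdfdfw} and \ref{Lem092uy7tcriugygeyavdjghesvfw} into the definitions (\ref{eqyb2ydvpoetvefigye2wyf2ew}) and (\ref{eq04gaigf83f42dhgfa7973}) of $\xi$ and $\psi$, and then collapse the resulting trigonometric (or hyperbolic) sums using standard identities. Throughout I will use $k+1 = n$ and the relation $2-\lambda = 2\cos\theta$ when $0 < \lambda < 4$ (with $\theta = \pi - \cos^{-1}((\lambda-2)/2)$), and $2-\lambda = -2\cosh\theta$ when $\lambda > 4$.

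\emph{Step 1: the formula for $\xi$ with $0 < \lambda < 4$.} By symmetry of $R$ we have $R_{11} = R_{n-1\,n-1}$ and $R_{1\,n-1} = R_{n-1\,1}$. Applying the formula (\ref{eq49nhuotqi3f2q3dut3drt11}) and the identity $\cos A - \cos B = -2\sin\frac{A+B}{2}\sin\frac{A-B}{2}$, I get $R_{11} = -\sin(n-1)\theta/\sin n\theta$ and $R_{1\,n-1} = -\sin\theta/\sin n\theta$. Summing and applying $\sin A + \sin B = 2\sin\frac{A+B}{2}\cos\frac{A-B}{2}$ together with $\sin n\theta = 2\sin\frac{n\theta}{2}\cos\frac{n\theta}{2}$ gives
\begin{equation*}
R_{11} + R_{1\,n-1} + R_{n-1\,1} + R_{n-1\,n-1} = -\frac{2\cos\frac{(n-2)\theta}{2}}{\cos\frac{n\theta}{2}}.
\end{equation*}
Expanding $\cos\frac{(n-2)\theta}{2} = \cos\frac{n\theta}{2}\cos\theta + \sin\frac{n\theta}{2}\sin\theta$ and combining with $2-\lambda = 2\cos\theta$ yields $\xi = \delta - 2\sin\theta\tan\frac{n\theta}{2}$, as asserted.

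\emph{Step 2: the formula for $\psi$ with $0 < \lambda < 4$.} Using (\ref{eq49nhuotqi3f2q3dut3drt11}) and the same product-to-sum identity, I obtain $R_{i1} = -\sin(n-i)\theta/\sin n\theta$ and $R_{i\,n-1} = -\sin i\theta/\sin n\theta$ for $1 \le i \le n-1$. The identity $\sin(n-i)\theta + \sin i\theta = 2\sin\frac{n\theta}{2}\cos(\frac{n}{2}-i)\theta$ then gives $R_{i1} + R_{i\,n-1} = -\cos(\frac{n}{2}-i)\theta/\cos\frac{n\theta}{2}$. Substituting into (\ref{eq04gaigf83f42dhgfa7973}) and noting that the $i=0$ term $w_0$ equals $w_0\cos\frac{n\theta}{2}/\cos\frac{n\theta}{2}$ packages the sum into the claimed expression $\psi = \frac{1}{\cos\frac{n\theta}{2}}\sum_{i=0}^{n-1} w_i\cos(\frac{n}{2}-i)\theta$.

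\emph{Step 3: the hyperbolic regime $\lambda > 4$ and the endpoint $\lambda = 4$.} The argument is parallel, using (\ref{eq49nhuotqi3f2q3dut3drt22}) and the hyperbolic identities $\cosh A - \cosh B = 2\sinh\frac{A+B}{2}\sinh\frac{A-B}{2}$ and $\cosh(i+1)\theta - \cosh(i-1)\theta = 2\sinh i\theta\sinh\theta$, together with $2-\lambda = -2\cosh\theta$. Here the sign $(-1)^{i+j}$ in (\ref{eq49nhuotqi3f2q3dut3drt22}) combines with $(-1)^{n-2}$ when summing $R_{i1}+R_{i\,n-1}$, which is precisely why the formula bifurcates on the parity of $n$: for $n$ even one gets $\sinh(n-i)\theta + \sinh i\theta = 2\sinh\frac{n\theta}{2}\cosh(\frac{n}{2}-i)\theta$, and for $n$ odd one gets $\sinh(n-i)\theta - \sinh i\theta = 2\cosh\frac{n\theta}{2}\sinh(\frac{n}{2}-i)\theta$. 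The case $\lambda = 4$ follows by taking the limit $\theta \to 0^{+}$ (or equivalently $\theta \to \pi^{-}$ in the trigonometric case); for $\psi$ at $\lambda = 4$ with $n$ odd, applying L'Hôpital to $\sinh(\frac{n}{2}-i)\theta/\sinh\frac{n\theta}{2}$ produces the factor $1 - \frac{2i}{n}$ stated in the lemma.

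\emph{Main obstacle.} The computation itself is routine trigonometry, but careful bookkeeping of signs and parity is required, particularly when matching the $(-1)^{i+j}$ factor in the hyperbolic inverse against the telescoping sum, and when justifying the endpoint $\lambda = 4$ by the limit procedure (verifying that the singularities of $\tan\frac{n\theta}{2}$ for odd $n$ cancel against $\sin\theta$ to yield the finite value $\frac{2}{n}[(-1)^n - 1]$). Once the sign accounting is done correctly, all five cases collapse to the claimed closed forms.
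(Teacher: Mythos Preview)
Your proposal is correct and follows exactly the paper's own approach: the paper's proof simply states that for $\lambda \neq 4$ one substitutes formulas (\ref{eq49nhuotqi3f2q3dut3drt11}) and (\ref{eq49nhuotqi3f2q3dut3drt22}) into (\ref{eqyb2ydvpoetvefigye2wyf2ew}) and (\ref{eq04gaigf83f42dhgfa7973}), and for $\lambda = 4$ one takes limits using L'H\^opital's rule. Your write-up is a faithful (and more detailed) execution of precisely this computation, with the trigonometric and hyperbolic identities and the parity bookkeeping all handled correctly.
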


\begin{proof}
For \(\lambda \neq 4\), the proof is a straightforward calculation by substituting (\ref{eq49nhuotqi3f2q3dut3drt11}) and (\ref{eq49nhuotqi3f2q3dut3drt22}) into (\ref{eqyb2ydvpoetvefigye2wyf2ew}) and (\ref{eq04gaigf83f42dhgfa7973}). For the case of \(\lambda = 4\), the proof follows from taking the limit of the formulas for the cases \(\lambda \neq 4\) as \(\lambda\rightarrow 4\) and using L'H\^opital's rule.
\end{proof}

This lemma together with relation (\ref{eq6545eytd7wicq67rewq}), gives some formulas for \(P_{2}(\lambda)\) when \(P_{2}\) is well-defined (\(\lambda\notin\lbrace \beta^{-}_{m, w}, \beta^{+}_{m, w}\rbrace\)) and \(Q_{n-1}(\lambda)\) is invertible, i.e. \(\lambda \neq \alpha_{l}\) for \(l = 1,\ldots, n-1\). However, we can use (\ref{eq6545eytd7wicq67rewq}) to calculate \(P_2\) at \(\lambda = \alpha_{l}\) by taking \(\lim P_{2}(\lambda)\) as \(\lambda \rightarrow \alpha_{l}\). By this trick, we have that (\ref{eq6545eytd7wicq67rewq}) is well-defined and smooth at every real \(\lambda\notin\lbrace \beta^{-}_{m, w}, \beta^{+}_{m, w}\rbrace\).

To make the analysis of \(P_{2}\) simpler, we consider two different cases of \(0 \leq \lambda < 4\) and \(\lambda\geq 4\). For the first case, let \(\theta = \pi - \cos^{-1}(\frac{\lambda -2}{2})\), and define \(p(\theta) := P_{2}(\lambda(\theta)) = P_{2}(2[1-\cos \theta])\). For \(0 < \theta < \pi\) such that \(2[1-\cos \theta]\notin\lbrace \beta^{-}_{m, w}, \beta^{+}_{m, w}\rbrace\), this gives
\begin{equation}\label{eq7guyf7t2fuyf6f2ut23sh}
p\left(\theta\right) = 2\left[\cos n\theta - 1\right] + \delta\cdot\frac{\sin n\theta}{\sin\theta} - 2y\cdot\frac{\sin \frac{n\theta}{2}}{\sin\theta}\sum_{i=0}^{n-1} w_{i}\cos\left(\frac{n}{2} - i\right)\theta.
\end{equation}
Observe that \(p\left(0\right) = \lim_{\theta\rightarrow 0^{+}} p(\theta) = 0\). With a straightforward calculation, we can also obtain:
\begin{lemma}\label{Lem526ri76er6cp95c2l7xfwsw1}
Recall \(\alpha_{l}\) given by (\ref{eq942hk2tyaurg4frgaeyra}) and assume \(\alpha_{l} = 2\left(1-\cos\frac{l\pi}{n}\right) \notin \lbrace\beta^{-}_{m, w}, \beta^{+}_{m, w}\rbrace\), where \(l\in\mathbb{Z}\) is as specified below. Then
\begin{enumerate}[(i)]
\item for even \(0 \leq l \leq n-1\), we have \(p(\frac{l\pi}{n}) = 0\).
\item for odd \(1 \leq l \leq n-1\), we have \(p(\frac{l\pi}{n}) = -4 - \frac{2y}{\sin\frac{l\pi}{n}}\sum_{i=0}^{n-1} w_{i} \sin\frac{il\pi}{n}\).
\item for even \(1 \leq l \leq n-1\), we have
\begin{equation}\label{eq800ug4tht9q438tqherug}
p^{\prime}\left(\frac{l\pi}{n}\right) = \frac{n}{\sin \frac{l\pi}{n}}\left[\delta - y\sum_{i=0}^{n-1} w_{i}\cos\frac{il\pi}{n}\right].
\end{equation}
\end{enumerate}
\end{lemma}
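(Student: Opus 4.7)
The plan is to read off all three claims directly from the closed-form expression (\ref{eq7guyf7t2fuyf6f2ut23sh}) for $p(\theta)$, which is valid for $0<\theta<\pi$ with $\lambda(\theta)\notin\{\beta^-_{m,w},\beta^+_{m,w}\}$; under the hypothesis $\alpha_l\notin\{\beta^\pm_{m,w}\}$ the point $\theta=l\pi/n$ lies in this domain, and $p$ is smooth there (even though $Q_{n-1}^{-1}$ would blow up, the earlier discussion of (\ref{eq6545eytd7wicq67rewq}) shows that $p$ extends smoothly across the zeros of $\det Q_{n-1}$). So the proof is just a careful evaluation and differentiation of a three-term expression at $\theta=l\pi/n$.

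For (i) and (ii), I would substitute $\theta=l\pi/n$ into (\ref{eq7guyf7t2fuyf6f2ut23sh}) and use the identities $\cos n\theta=(-1)^l$, $\sin n\theta=0$, and $\sin(n\theta/2)=\sin(l\pi/2)$. For even $l$ all three summands vanish (the first because $(-1)^l=1$, the second directly, the third because $\sin(l\pi/2)=0$), giving (i). For odd $l$, the first term contributes $-4$, the second term still vanishes, and in the third term I would expand
\[
\cos\!\left(\tfrac{n}{2}-i\right)\!\theta=\cos\tfrac{l\pi}{2}\cos\tfrac{il\pi}{n}+\sin\tfrac{l\pi}{2}\sin\tfrac{il\pi}{n}.
\]
Since $\cos(l\pi/2)=0$ and $\sin^2(l\pi/2)=1$ for odd $l$, the third term collapses to $-\frac{2y}{\sin(l\pi/n)}\sum_i w_i\sin(il\pi/n)$, yielding (ii).

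For (iii) I would differentiate term by term. The first derivative $-2n\sin n\theta$ vanishes at $\theta=l\pi/n$. For the middle term, $\sin n\theta/\sin\theta$ has an indeterminate form that is resolved by the quotient rule: at $\theta=l\pi/n$ with $\cos n\theta=1$ (even $l$), one obtains $n/\sin(l\pi/n)$, so this term contributes $n\delta/\sin(l\pi/n)$. For the third term, write it as $-2y(\theta)\cdot\sin(n\theta/2)\cdot h(\theta)$ with $h(\theta)=\sin^{-1}\theta\sum_i w_i\cos(n/2-i)\theta$; since $\sin(l\pi/2)=0$ for even $l$, the summand containing $y'$ and the one containing $h'$ both vanish, and only the derivative of $\sin(n\theta/2)$ survives, giving $-2y\cdot(n/2)\cos(l\pi/2)\cdot h(l\pi/n)$. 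Using $\cos(l\pi/2)=(-1)^{l/2}$ and the expansion $\cos(l\pi/2-il\pi/n)=(-1)^{l/2}\cos(il\pi/n)$ (because $\sin(l\pi/2)=0$), the two signs cancel, giving $-\frac{ny}{\sin(l\pi/n)}\sum_i w_i\cos(il\pi/n)$. Summing the three contributions yields (\ref{eq800ug4tht9q438tqherug}).

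There is no serious obstacle; everything is elementary trigonometry. The only mildly delicate point is justifying that $p$ (and hence its derivative) is well-defined at $\theta=l\pi/n$ even when $Q_{n-1}(\alpha_l)$ is singular — this is covered by the smoothness remark made right after (\ref{eq6545eytd7wicq67rewq}), so I would simply cite that observation and then proceed with the direct calculation above.
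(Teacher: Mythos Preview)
Your proposal is correct and follows exactly the approach the paper has in mind: the paper merely says ``With a straightforward calculation, we can also obtain'' the lemma from (\ref{eq7guyf7t2fuyf6f2ut23sh}), and you have written out precisely that calculation. One small wording quibble: at $\theta=l\pi/n$ with $1\le l\le n-1$ the quotient $\sin n\theta/\sin\theta$ is not indeterminate (the denominator is nonzero), so you can simply apply the quotient rule directly; the rest is fine.
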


\subsubsection{Analysis of $P_1$}\label{Section1782kli0jyf37uyf}

In this section, we investigate the matrix \(M_{1}\) and the function \(P_{1}(\lambda) = \mathrm{det}(M_{1}(\lambda))\) introduced in Lemma \ref{Lem4401oihe8uoihh29290ihbcd} for model III. We start with analyzing the matrix \(L_{1} - \lambda I\) and its inverse. Note that
\begin{equation}\label{eq4230yl8blatv7tlujygvdaydvz}
L_{1} - \lambda I = \left(\begin{array}{c|c}
2+\delta-\lambda & \begin{array}{ccc}
-1 & 0_{1\times \left(n-3\right)} & -1
\end{array}\\
\hline
\begin{array}{c}
-1\\
0_{\left(n-3\right)\times 1}\\
-1
\end{array} & -Q_{n-1}
\end{array}\right).
\end{equation}

\begin{lemma}\label{Lem87taor64wyfro4ypoiy}
Let \(\lambda > 0\) be real. Then, the matrix \(L_{1} -\lambda I\) is invertible if and only if \(\lambda \neq \alpha_{l}\) and \(\xi(\delta, \lambda) \neq 0\), where \(1 < l \leq n-1\) is even and \(\xi(\delta, \lambda)\) is given by (\ref{eqyb2ydvpoetvefigye2wyf2ew}).
\end{lemma}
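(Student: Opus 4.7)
The plan is to apply the Schur-complement identity of Lemma \ref{lem11y78ty998796f56drytr} to the $1{+}(n-1)$ block decomposition of $L_{1} - \lambda I$ displayed in (\ref{eq4230yl8blatv7tlujygvdaydvz}), whose lower-right $(n-1)\times(n-1)$ block is $-Q_{n-1}(\lambda)$. By Lemma \ref{Lem42kutkbqifop23ifbwe}, this block is invertible precisely when $\lambda\notin\{\alpha_{1},\ldots,\alpha_{n-1}\}$. On that generic set, expanding the scalar Schur complement with $R = Q_{n-1}^{-1}$ yields
\begin{equation*}
\det\bigl(L_{1}-\lambda I\bigr) = (-1)^{n-1}\det\bigl(Q_{n-1}(\lambda)\bigr)\,\xi(\delta,\lambda),
\end{equation*}
since the only nonzero entries of the off-diagonal blocks lie in positions $1$ and $n-1$, producing exactly the four corner entries of $R$ that define $\xi$ in (\ref{eqyb2ydvpoetvefigye2wyf2ew}). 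Hence, away from $\{\alpha_{1},\ldots,\alpha_{n-1}\}$, invertibility of $L_{1}-\lambda I$ is equivalent to $\xi(\delta,\lambda)\neq 0$, and the only outstanding issue is that neither factor on the right is well-defined at the exceptional points $\lambda=\alpha_{l}$.

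\emph{The main obstacle} is precisely the removal of these coinciding zeros and poles: at each $\lambda = \alpha_{l}$ one has $\det(Q_{n-1}) = 0$ while $\xi$ carries an offsetting pole, so the product formula is a $0\cdot\infty$ indeterminacy that cannot be read off directly. The remedy is to substitute the closed forms of Lemmas \ref{Lem092uy7tcrmyftdrrdfdfdfw} and \ref{Lem8tvi7tvli7ut6rdkawtudfkd} and simplify. A short trigonometric calculation collapses the product to
\begin{equation*}
\det\bigl(L_{1}-\lambda I\bigr) = \frac{\delta\sin n\theta}{\sin\theta} - 2\bigl(1-\cos n\theta\bigr),\qquad \lambda = 2\bigl(1-\cos\theta\bigr),\ 0<\lambda<4,
\end{equation*}
with analogous $\sinh/\cosh$ expressions for $\lambda\geq 4$ via Lemma \ref{Lem092uy7tcriugygeyavdjghesvfw}. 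These right-hand sides are polynomials in $\lambda$ (Chebyshev combinations), so they define $\det(L_{1}-\lambda I)$ for every $\lambda>0$ by continuous extension across the apparent singularities.

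With this closed form, evaluation at $\theta = l\pi/n$ is immediate: $\sin n\theta = 0$ and $\cos n\theta = (-1)^{l}$ give $\det(L_{1}-\alpha_{l}I) = 0$ when $l$ is even and $\det(L_{1}-\alpha_{l}I) = -4$ when $l$ is odd. Thus only the even-indexed $\alpha_{l}$ with $1<l\leq n-1$ are genuine eigenvalues of $L_{1}$; at odd-indexed $\alpha_{l}$ the matrix is invertible, which is consistent with the claim because $\xi$ has a pole rather than a zero there, so $\xi\neq 0$ holds as an extended-value condition. The boundary case $\lambda=4=\alpha_{n}$ with $n$ even is handled identically and yields $\det(L_{1}-4I)=-n\delta\neq 0$, correctly captured by $\xi(\delta,4)=\delta\neq 0$. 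Combining the three regimes produces the stated equivalence: $L_{1}-\lambda I$ fails to be invertible exactly when $\lambda=\alpha_{l}$ for some even $l\in(1,n-1]$, or $\lambda$ lies outside that exceptional set and $\xi(\delta,\lambda)=0$.
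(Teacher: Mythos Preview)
Your proof is correct and follows essentially the same route as the paper's: apply the Schur-complement determinant identity to the $1+(n-1)$ block decomposition (\ref{eq4230yl8blatv7tlujygvdaydvz}) to obtain $\det(L_{1}-\lambda I)=(-1)^{n-1}\det(Q_{n-1})\,\xi(\delta,\lambda)$ on the set where $Q_{n-1}$ is invertible, and then resolve the exceptional points $\lambda=\alpha_{l}$ by passing to the explicit trigonometric form. The only cosmetic difference is that you first simplify the product to the pole-free expression $\dfrac{\delta\sin n\theta}{\sin\theta}-2(1-\cos n\theta)$ and then evaluate at $\theta=l\pi/n$, whereas the paper leaves the product in the form $\dfrac{\sin n\theta}{\sin\theta}\bigl(\delta-2\sin\theta\tan\tfrac{n\theta}{2}\bigr)$ and computes the limit $\theta\to\theta_{0}$ directly; both yield $2(\cos n\theta_{0}-1)$, hence $0$ for even $l$ and $-4$ for odd $l$.
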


\begin{proof}
Assume \(Q_{n-1}\) is invertible. Applying Lemma \ref{lem11y78ty998796f56drytr} on matrix (\ref{eq4230yl8blatv7tlujygvdaydvz}) gives
\begin{equation*}
\mathrm{det}\left(L_{1} -\lambda I\right) = \left(-1\right)^{n-1}\mathrm{det}\left(Q_{n-1}\right) \xi\left(\delta, \lambda\right).
\end{equation*}
This proves the lemma for the case that \(Q_{n-1}\) is invertible.

Now, we consider the case that \(Q_{n-1}\) is singular. It follows from Lemma \ref{Lem42kutkbqifop23ifbwe} that \(Q_{n-1}(\lambda)\) is invertible if and only if \(\lambda \neq \alpha_{l}\) for \(l = 1,\ldots, n-1\). Equivalently \(Q_{n-1}\) is singular if and only if \(0 < \lambda = 2\left[1-\cos\theta_{0}\right] < 4\) and \(\sin n\theta_{0} = 0\) (see also Lemmas \ref{Lem092uy7tcrmyftdrrdfdfdfw} and \ref{Lem092uy7tcriugygeyavdjghesvfw}). By virtue of Lemma \ref{Lem8tvi7tvli7ut6rdkawtudfkd}, for \(\lambda = 2\left[1-\cos\theta_{0}\right]\), we obtain
\begin{equation*}
\mathrm{det}\left(L_{1} -\lambda I\right) =\lim_{\theta\rightarrow \theta_{0}} \mathrm{det}\left(L_{1} -\lambda I\right) = \lim_{\theta\rightarrow \theta_{0}} \frac{\sin n\theta}{\sin \theta}\cdot \left(\delta -2\sin\theta \tan\frac{n\theta}{2}\right) = 2\left(\cos n\theta_{0} - 1\right).
\end{equation*}
Thus, when \(Q_{n-1}\) is singular, \(L_{1} -\lambda I\) is invertible if and only if \(\cos n\theta_{0} \neq 1\), i.e. \(\lambda \neq \alpha_{l}\) where \(2 \leq l \leq n-1\) is even. This ends the proof.
\end{proof}

For real \(\lambda > 0\), assume \(\xi\left(\delta, \lambda\right) \neq 0\) and consider the case that \(R = Q_{n-1}^{-1}\) exists. Then, \(L_{1} -\lambda I\) is invertible, and by Lemma \ref{lem11y78ty998796f56drytr}, we have
\begin{equation*}
\left(L_{1} - \lambda I\right)^{-1} = \left(\begin{array}{cc}
\xi^{-1} & -\xi^{-1} r^{\top}\\
-\xi^{-1} r & -R + \xi^{-1} r r^{\top}
\end{array}\right),
\end{equation*}
where \(r = (R_{1\, 1} + R_{1\, n-1}, R_{2\, 1} + R_{2\, n-1}, \ldots, R_{n-1\, 1} + R_{n-1\, n-1})^{\top}\). This gives
\begin{equation*}
M_1 = M_1\left(\lambda\right) = \left(\begin{array}{c|c}
m - 1 + w -\lambda  - \frac{\delta_{0} \psi}{\xi} &
\begin{array}{cccc}
-1-\frac{\delta_{1} \psi}{\xi} & -1-\frac{\delta_{2} \psi}{\xi} & \cdots & -1-\frac{\delta_{m-1} \psi}{\xi}
\end{array}
\\
\hline
-\mathbf{1}_{m-1} & \left(1-\lambda\right)I
\end{array}\right),
\end{equation*}
where \(\psi = \psi\left(w, \lambda\right)\) is given by (\ref{eq04gaigf83f42dhgfa7973}). Then, by virtue of Lemma \ref{Lem391q0kieb6fb2j2iwbjyfu5ewwe}, we have
\begin{lemma}\label{Lem54i767k57uktyhftdsi2}
For real \(\lambda > 0\), assume \(R = Q_{n-1}^{-1}\) exists, and \(\xi\left(\delta,\lambda\right) \neq 0\). Then, \(\lambda \neq 1\) is an eigenvalue of \(L_{G_{p}}\) if and only if
\begin{equation}\label{eq6875i7uytetck7rlkycfkuyfs}
\lambda^{2} - \left[m + w - \frac{\delta_{0} \psi}{\xi}\right]\lambda + w - \frac{\delta \psi}{\xi} = 0,
\end{equation}
or equivalently, one of the following holds:
\begin{equation}\label{eq02ut6rkuylwr68wrewufal}
\lambda = \frac{1}{2}\left(m + w - \sqrt{\left(m + w\right)^2 - 4w + \frac{4\left(\delta - \delta_{0}\lambda\right) \psi}{\xi}}\right)
\end{equation}
or
\begin{equation*}
\lambda = \frac{1}{2}\left(m + w + \sqrt{\left(m + w\right)^2 - 4w + \frac{4\left(\delta - \delta_{0}\lambda\right) \psi}{\xi}}\right).
\end{equation*}
\end{lemma}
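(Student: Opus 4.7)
The plan is to compute $P_1(\lambda)=\det M_1(\lambda)$ in closed form and match with the quadratic displayed in the statement. The assumptions ``$R=Q_{n-1}^{-1}$ exists and $\xi(\delta,\lambda)\ne 0$'' are exactly the conditions under which Lemma \ref{Lem87taor64wyfro4ypoiy} guarantees $\lambda\notin\sigma(L_1)$, so Lemma \ref{Lem4401oihe8uoihh29290ihbcd}(i) applies and a real $\lambda$ is an eigenvalue of $L_{G_p}$ if and only if $P_1(\lambda)=0$. The task therefore reduces to extracting an explicit formula for $P_1$.

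First I would record the arrowhead shape of the $m\times m$ matrix $M_1$ displayed just before the lemma: scalar head $A=m-1+w-\lambda-\delta_0\psi/\xi$, first-row tail $B=(-1-\delta_1\psi/\xi,\ldots,-1-\delta_{m-1}\psi/\xi)$, first-column tail $-\mathbf 1_{m-1}$, and lower-right block $(1-\lambda)I_{m-1}$. Since $\lambda\ne 1$ the block $(1-\lambda)I_{m-1}$ is invertible, so the Schur-complement identity (which is precisely the content of Lemma \ref{Lem391q0kieb6fb2j2iwbjyfu5ewwe}, invoked earlier in the proof of Lemma \ref{Lem87taor64wyfro4ypoiy} on exactly the same type of matrix) yields
$$
P_1(\lambda)=(1-\lambda)^{m-1}\Bigl[A+\tfrac{1}{1-\lambda}\,B\mathbf 1_{m-1}\Bigr]=(1-\lambda)^{m-2}\bigl[(1-\lambda)A+B\mathbf 1_{m-1}\bigr].
$$

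Second, a direct expansion, using $\delta_1+\cdots+\delta_{m-1}=\delta-\delta_0$ from Definition \ref{Defn76u7ku6rjtykjyts}, gives $B\mathbf 1_{m-1}=-(m-1)-(\delta-\delta_0)\psi/\xi$ and $(1-\lambda)(m-1+w-\lambda)=\lambda^2-(m+w)\lambda+(m-1+w)$. The key cancellation is that the constant $(m-1+w)$ loses its $(m-1)$ contribution against $-(m-1)$ from $B\mathbf 1_{m-1}$, leaving $w$; meanwhile the $\psi/\xi$ terms combine as $\delta_0(1-\lambda)+(\delta-\delta_0)=\delta-\delta_0\lambda$. The result is
$$
P_1(\lambda)=(1-\lambda)^{m-2}\!\left[\lambda^{2}-\Bigl(m+w-\frac{\delta_0\psi}{\xi}\Bigr)\lambda+\Bigl(w-\frac{\delta\psi}{\xi}\Bigr)\right].
$$
Excluding $\lambda=1$, the vanishing of $P_1$ is equivalent to the quadratic factor being zero, which is equation (\ref{eq6875i7uytetck7rlkycfkuyfs}); the two branches in (\ref{eq02ut6rkuylwr68wrewufal}) are then simply the two roots given by the quadratic formula.

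The only real obstacle is bookkeeping: one must keep the hub contribution $\delta_0$ separated from the leaf contribution $\delta-\delta_0$ throughout, so that the coefficients $(m+w-\delta_0\psi/\xi)$ and $(w-\delta\psi/\xi)$ emerge with the correct split. A useful sanity check for the factor $(1-\lambda)^{m-2}$ is Lemma \ref{Lem36iyguyfhctgczhvjgwi}, which already exhibits $m-2$ independent eigenvectors of $L_{G_p}$ at $\lambda=1$; this matches the multiplicity produced above but is not required for the ``if and only if'' claim itself.
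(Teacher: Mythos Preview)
Your proof is correct and follows essentially the same approach as the paper: the paper simply invokes Lemma~\ref{Lem391q0kieb6fb2j2iwbjyfu5ewwe} on the displayed $M_1$, and your explicit Schur-complement computation is precisely what that lemma packages. One small citation slip: the Schur-complement determinant identity you use is Lemma~\ref{lem11y78ty998796f56drytr}, not Lemma~\ref{Lem391q0kieb6fb2j2iwbjyfu5ewwe} (and it is Lemma~\ref{lem11y78ty998796f56drytr} that is invoked in the proof of Lemma~\ref{Lem87taor64wyfro4ypoiy}); Lemma~\ref{Lem391q0kieb6fb2j2iwbjyfu5ewwe} is the specialized kernel computation for this arrowhead form, which amounts to the same calculation you carry out by hand.
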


\begin{remark}
If \(\lambda \notin \lbrace \beta^{-}_{m, w}, \beta^{+}_{m, w}\rbrace\), then relation (\ref{eq6875i7uytetck7rlkycfkuyfs}) can be derived from the equation \(\xi - y\psi = 0\) (see (\ref{eq6545eytd7wicq67rewq})), and vice versa. In other words, if \(\lambda\neq\beta^{\pm}_{m, w}\), then relation (\ref{eq6875i7uytetck7rlkycfkuyfs}) does not give any further information about the eigenvalue \(\lambda\) other than what \(P_{2} = 0\) gives, where \(P_{2}\) is given by (\ref{eq6545eytd7wicq67rewq}). However, since \(P_{2}\) is not defined at \(\beta^{\pm}_{m, w}\) (because \(\mu\) is not defined at these points), we still require (\ref{eq6875i7uytetck7rlkycfkuyfs}) to analyze \(\lambda = \beta^{\pm}_{m, w}\).
\end{remark}

\subsection{Proof of Theorem \ref{Theorem18na75t87ckq7tyrm3wa}}\label{Sec378uty7tdguw}

In this section, we prove Theorem \ref{Theorem18na75t87ckq7tyrm3wa}. Throughout this section, we assume that \(w_{0} = 1\) and \(w_{i} = 0\), where \(1\leq i \leq n-1\). Moreover, we have that \(\delta \geq \delta_{0}\geq 0\). Note that, to adapt this proof for the case of Theorem \ref{Thm838o8rauger}, it is sufficient to assume \(\delta = \delta_{0}\). We start with the following definition.
\begin{definition}\label{Defno8ylbywl83v6cu4rty4etr}
Recall Definition \ref{Defn75u6ru6ru452e5}. Assume \(\beta_{m, 1}^{-}\notin\{\alpha_{l}:\, 0\leq l\leq n\}\) and let \(\kappa \geq 2\) be the even integer such that \(\beta_{m, 1}^{-} \in \left(\alpha_{\kappa -2}, \alpha_{\kappa}\right)\).
\begin{enumerate}[(i)]
\item Define \(J_{\beta^{-}} := \left(\alpha_{\kappa -2}, \alpha_{\kappa}\right)\).
\item Let \(2 \leq l \leq n-2\) be even. We define
\begin{equation*}
J_{l} = \left\lbrace\begin{array}{ll}
\left(\alpha_{l-1}, \alpha_{l}\right), \quad & \mathrm{ if }\,\, 2 \leq l < \kappa,\\
\left(\alpha_{l}, \alpha_{l+1}\right), \quad & \mathrm{ if }\,\, \kappa\leq l \leq n-2.
\end{array}\right.
\end{equation*}
\item Define \(J_{\beta^{+}} := \left(\alpha_{n-1}, \infty\right)\).
\item For the sake of convenience, we define the set of indices \(\mathcal{I} := \{\beta^{-}, \beta^{+}\} \cup \{l:\, 0< l < n \mathrm{\,\, and\,\, } l \mathrm{\,\, is\,\, even}\}\).
\end{enumerate}
\end{definition}

\begin{remark}
Note that when \(\kappa = 2\), there does not exists \(J_l\) for \(2\leq l < \kappa\).
\end{remark}

\begin{remark}
Notice that \(\beta_{m,1}^{+} > m \geq 4\), and so \(\beta_{m,1}^{+} \in J_{\beta^{+}}\).
\end{remark}

Considering eigenvalues with their multiplicities, the modified Laplacian \(L_{G_{p}}\) has \(n+m\) eigenvalues. The next lemma describes where these \(n+m\) eigenvalues are located.
\begin{lemma}\label{Lem75u6urcu64u6r6ej5ywr75}
Let \(\overline{\delta}\neq 0\) be an arbitrary modification that satisfies \(\delta < \delta_{0}\beta^{+}_{m,1}\). Then, all the \(n+m\) eigenvalues of the modified Laplacian \(L_{G_p}\) of model II are real and given by the union of the following four disjoint groups (see also Remark \ref{Rem7i7qr5iuwtq7ti37t9irty}).
\begin{enumerate}[(i)]
\item\label{Item859su6tdutiyarr8eylcme} \(L_{G_{p}}\) has \(\lfloor \frac{n-1}{2}\rfloor + 1\) real eigenvalues given by \(\{\alpha_{l}: \mathrm{\,\, where \,\,} 0\leq l \leq n-1 \mathrm{\,\, and \,\,} l \mathrm{\,\, is\,\, even}\}\).
\item\label{Item637t64r3864y3fydfeq2} \(L_{G_{p}}\) has \(m-2\) of repeated eigenvalue \(\lambda = 1\).
\item\label{Item847ti7tiyftdyrgouza2re} Recall the set \(\mathcal{I}\). Each interval \(J_{\gamma}\) for \(\gamma\in\mathcal{I}\) and \(\gamma \neq \beta^{+}\) contains exactly one real eigenvalue of \(L_{G_{p}}\) (except possibly for the \(m-2\) eigenvalues \(1\) counted in item (\ref{Item637t64r3864y3fydfeq2})). We have \(\lfloor \frac{n}{2}\rfloor\) of these intervals, and so \(L_{G_{p}}\) has \(\lfloor \frac{n}{2}\rfloor\) real eigenvalues given by these intervals.
\item\label{Item86i75tyfjtqdu3oug4iuy} The interval \(J_{\beta^{+}}\) contains two real eigenvalues of the modified Laplacian \(L_{G_{p}}\). Thus, \(L_{G_{p}}\) has \(2\) eigenvalues given by \(J_{\beta^{+}}\).
\end{enumerate}
\end{lemma}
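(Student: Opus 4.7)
The plan is to partition the $n+m$ eigenvalues of $L_{G_p}$ into the four groups of the lemma: two via explicit eigenvectors (items (\ref{Item859su6tdutiyarr8eylcme}) and (\ref{Item637t64r3864y3fydfeq2})) and two via a root count for the secular polynomial (items (\ref{Item847ti7tiyftdyrgouza2re}) and (\ref{Item86i75tyfjtqdu3oug4iuy})).

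First I would dispose of items (\ref{Item859su6tdutiyarr8eylcme}) and (\ref{Item637t64r3864y3fydfeq2}) by exhibiting eigenvectors. Item (\ref{Item637t64r3864y3fydfeq2}) is already in Lemma \ref{Lem36iyguyfhctgczhvjgwi}. For item (\ref{Item859su6tdutiyarr8eylcme}) the key structural observation is that in model II the nonzero rows of $C$ and $\Delta$, as well as the single nonzero diagonal entry of $D_\Delta$, are all supported on node $0$ of the cycle. Hence, for every even $l$ with $0<l<n$, the sine-mode eigenvector $s_l=(\sin(2\pi lj/n))_{j=0}^{n-1}$ of $L_{C_n}$ at $\alpha_l$---which exists because $\alpha_l<4$ forces a two-dimensional eigenspace by Lemma \ref{Lem616uihugygjyhvkjgoigsaougsiky}, and whose first coordinate vanishes---extends by zeros on the star side to give $v=(s_l,0)\in\mathbb{R}^{n+m}$ satisfying $L_{G_p}v=\alpha_l v$; the all-ones vector handles $\alpha_0=0$. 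This produces the $\lfloor(n-1)/2\rfloor+1$ real eigenvalues of item (\ref{Item859su6tdutiyarr8eylcme}).

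For the remaining $\lfloor n/2\rfloor+2$ eigenvalues, Lemma \ref{Lem4401oihe8uoihh29290ihbcd}(ii), together with the fact that $\sigma(L_2)=\{1,\beta^-_{m,1},\beta^+_{m,1}\}$ and that $1$ has multiplicity $m-2$ in $L_2$, yields the factorisation $\det(L_{G_p}-\lambda I)=(1-\lambda)^{m-2}N(\lambda)$ with $N(\lambda):=(\lambda-\beta^-_{m,1})(\lambda-\beta^+_{m,1})P_2(\lambda)$ a polynomial of degree $n+2$ whose roots exhaust the non-$1$ eigenvalues of $L_{G_p}$. Using $\psi\equiv 1$ for model II (since $w_0=1$ and $w_i=0$ for $i>0$), Lemma \ref{Lem526ri76er6cp95c2l7xfwsw1} and formula (\ref{eq6545eytd7wicq67rewq}) give $N(\alpha_l)=0$ for even $l$ with $0\leq l\leq n-1$, $N(\alpha_l)=-4(\alpha_l-\beta^-_{m,1})(\alpha_l-\beta^+_{m,1})$ (whose sign equals $\operatorname{sign}(\alpha_l-\beta^-_{m,1})$ since $\alpha_l<\beta^+_{m,1}$) for odd $l$ with $1\leq l\leq n-1$, and $N(\beta^\pm_{m,1})=(-1)^n\det(Q_{n-1})(\beta^\pm_{m,1})(\delta-\delta_0\beta^\pm_{m,1})$, together with the asymptotic $N(\lambda)\sim(-1)^n\lambda^{n+2}$. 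On each interval $J_\gamma$ with $\gamma\in\mathcal{I}\setminus\{\beta^\pm\}$, the sign of $N'$ at the even endpoint, computed from Lemma \ref{Lem526ri76er6cp95c2l7xfwsw1}(iii), combined with intermediate value arguments produces exactly one interior real root; on $J_{\beta^-}$ the evaluation of $N$ at $\beta^-_{m,1}$ supplies the required single sign change; on $J_{\beta^+}=(\alpha_{n-1},\infty)$ the sign sequence at $\alpha_{n-1}$, $\beta^+_{m,1}$, $+\infty$ alternates twice exactly when $\delta<\delta_0\beta^+_{m,1}$, producing two real roots. Summing, $N$ has $\lfloor(n-1)/2\rfloor+1+\lfloor n/2\rfloor+2=n+2=\deg N$ real roots, so $N$ has no complex roots, and together with the $m-2$ eigenvalues at $1$ this recovers all $n+m$ eigenvalues of $L_{G_p}$ as real and distributed as claimed.

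The main difficulty will be the sign book-keeping needed to carry out these IVT arguments on every interval simultaneously, and especially the role of the hypothesis $\delta<\delta_0\beta^+_{m,1}$ on $J_{\beta^+}$: this inequality is precisely what pins the sign of $N(\beta^+_{m,1})$ opposite to both $N(\alpha_{n-1})$ and the asymptotic sign of $N$, so that both required sign changes in $J_{\beta^+}$ actually occur. Remove this hypothesis and the two real roots in $J_{\beta^+}$ are free to collide and leave the real line as a complex conjugate pair, breaking both reality of the spectrum and the count.
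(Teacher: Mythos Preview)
Your proposal is correct and follows the same overall strategy as the paper—an intermediate-value count on a secular function, combined with the total degree to force exactness—but your execution differs in a few ways that are worth noting.

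For item (\ref{Item859su6tdutiyarr8eylcme}) you produce explicit eigenvectors (the sine modes of $L_{C_n}$ vanishing at node $0$, extended by zeros), whereas the paper deduces $\alpha_l\in\sigma(L_{G_p})$ from $p(l\pi/n)=0$ via Lemma \ref{Lem92i755ku75kuy37it454y}; your argument is more direct and structural. For items (\ref{Item847ti7tiyftdyrgouza2re}) and (\ref{Item86i75tyfjtqdu3oug4iuy}) the paper works with $P_2$ (and its reparametrisation $p$), which has poles at $\beta^\pm_{m,1}$, and therefore handles $J_{\beta^-}$ by a continuity-in-$\overline{\delta}$ argument (the eigenvalue $\beta^-_{m,1}$ of $L_G$ cannot escape the interval as $\overline{\delta}$ varies) and $J_{\beta^+}$ by first treating small $\overline{\delta}$ and then arguing that the two real roots cannot leave; you instead clear the poles by passing to the polynomial $N(\lambda)=(\lambda-\beta^-_{m,1})(\lambda-\beta^+_{m,1})P_2(\lambda)$ and evaluate $N$ directly at $\beta^\pm_{m,1}$, which makes the sign-change bookkeeping uniform across all intervals and avoids any deformation argument. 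This is a cleaner packaging of the same idea. One small point to tighten: when $n$ is odd, $\alpha_{n-1}$ is an even-index root of $N$, so your ``sign sequence at $\alpha_{n-1}$'' must be read as the sign of $N'(\alpha_{n-1})$ (equivalently, the sign of $N$ just to the right), exactly as the paper does via $p'((n-1)\pi/n)>0$; and your phrase ``exactly one interior real root'' on each $J_\gamma$ should really be ``at least one'', with exactness following only after the global count $\deg N=n+2$ at the end—which you do carry out.
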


\begin{remark}
Observe that \(\left(\lfloor \frac{n-1}{2}\rfloor + 1\right) + \left(m-2\right) + \lfloor \frac{n}{2}\rfloor + 2 = n+m\).
\end{remark}

\begin{remark}\label{Rem7i7qr5iuwtq7ti37t9irty}
The sets of the eigenvalues given by items (\ref{Item859su6tdutiyarr8eylcme}) and (\ref{Item637t64r3864y3fydfeq2}) might not be disjoint, i.e. \(\alpha_{l} = 1\) for some even \(l\). The same may happen for (\ref{Item637t64r3864y3fydfeq2}) and (\ref{Item847ti7tiyftdyrgouza2re}), i.e. the eigenvalue in \(J_{\gamma}\) given by item (\ref{Item847ti7tiyftdyrgouza2re}) equals to \(1\). The eigenvalue \(1\) in such scenarios are counted separately from the \(m-2\) eigenvalues \(1\) given in item (\ref{Item637t64r3864y3fydfeq2}). In such scenarios, the multiplicity of eigenvalue \(1\) is \(m-1\).
\end{remark}

The proof of Lemma \ref{Lem75u6urcu64u6r6ej5ywr75} is postponed to Section \ref{Sec53452y75u46rjy6e52jd2jy}. We now prove Theorem \ref{Theorem18na75t87ckq7tyrm3wa}. 

Part (\ref{Item46gktcytrxjt00}) of Theorem \ref{Theorem18na75t87ckq7tyrm3wa} follows from Theorem \ref{Thm274i7tigfdyf} which is proved later in Section \ref{Sec12384o8y8agrit4}. Here, we show that part (\ref{Item46gktcytrxjt00}) of Theorem \ref{Theorem18na75t87ckq7tyrm3wa} satisfies the corresponding assumptions of Theorem \ref{Thm274i7tigfdyf}. Recall \(S\) given by (\ref{eq285tguiuugftabigsipewsf}). Setting \(w_{0} = 1\) and \(w_{i} = 0\), where \(1\leq i \leq n-1\), gives
\begin{equation}
S = \frac{\alpha_{2}\left[\delta \left(\alpha_{2} - m\right) + \delta_{0} - \delta\right]}{\left(\alpha_{2} -\beta^{-}_{m,1}\right) \left(\alpha_{2} -\beta^{+}_{m,1}\right)}
\end{equation}
and
\begin{equation}
\sum_{i=0}^{n-1} w_{i}\cos\left(\frac{n}{2} - i\right)\theta = \cos\frac{n\theta}{2}, \qquad \mathrm{where}\qquad \theta = \pi - \cos^{-1}\left(\frac{\beta^{-}_{m,1} -2}{2}\right).
\end{equation}
Take into account that \(\delta_{0} \leq \delta\) and \(\alpha_{2} < 4\). When \(\alpha_{2} < \beta^{-}_{m,1}\), we have \(S < 0\). On the other hand, when \(\alpha_{1} < \beta^{-}_{m,1} < \alpha_{2}\), we have \(\frac{\pi}{n}< \theta < \frac{2\pi}{n}\) and so \(\cos\frac{n\theta}{2} < 0\), where \(\theta\) is as above. Therefore, part (\ref{Item46gktcytrxjt11}) of Theorem \ref{Theorem18na75t87ckq7tyrm3wa} follows from parts (\ref{Item42iyo8yaor4gtwf1}) and (\ref{Item75988tugqyefriyer333}) of Theorem \ref{Thm274i7tigfdyf}, and part (\ref{Item46gktcytrxjt33}) of Theorem \ref{Theorem18na75t87ckq7tyrm3wa} follows directly from part (\ref{Item75988tugqyefriyer111}) of Theorem \ref{Thm274i7tigfdyf}.

Part (\ref{Item46gktcytrxjt55}) of Theorem \ref{Theorem18na75t87ckq7tyrm3wa} is a consequence of part (\ref{Item538youkugaliyfyejr11}) of Theorem \ref{Thm274i7tigfdyf}, since, as mentioned above, when \(\alpha_{2} < \beta^{-}_{m,1}\), we have \(S < 0\).

Let us now prove part (\ref{Item46gktcytrxjt22}) of Theorem \ref{Theorem18na75t87ckq7tyrm3wa}. First, assume \(\alpha_{2} < \beta^{-}_{m,1}\). This implies \(\kappa > 2\). Thus, by Lemma \ref{Lem75u6urcu64u6r6ej5ywr75}, \(L_{G_{p}}\) has a unique eigenvalue in the interval \(J_{2} = \left(\alpha_{1}, \alpha_{2}\right)\) which is indeed the spectral gap of \(L_{G_{p}}\). Denote it by \(\lambda_{2}\left(L_{G_p}\right)\). Since the spectral gap of the unmodified Laplacian \(L_{G}\) is \(\alpha_{2}\), we have \(\lambda_{2}\left(L_{G_p}\right) < \lambda_{2}\left(L_{G}\right)\). This shows that, in the case \(\alpha_{2} < \beta^{-}_{m,1}\), the statement of part (\ref{Item46gktcytrxjt11}) of Theorem \ref{Theorem18na75t87ckq7tyrm3wa} holds for arbitrary modification \(\overline{\delta}\) that satisfies \(\delta < \delta_{0} \beta^{+}_{m,1}\).

Now, assume \(\beta^{-}_{m,1} < \alpha_{2}\). This implies \(\kappa = 2\), i.e. \(J_{\beta^{-}} = \left(0, \alpha_{2}\right)\). According to Lemma \ref{Lem75u6urcu64u6r6ej5ywr75}, \(L_{G_{p}}\) has a unique eigenvalue in the interval \(J_{\beta^{-}} = \left(0, \alpha_{2}\right)\) which is indeed the spectral gap of \(L_{G_{p}}\). Denote it by \(\lambda_{2}\left(\overline{\delta}\right)\). Note that the spectral gap of the unmodified graph \(L_{G}\) is \(\lambda_{2}\left(0\right) = \beta^{-}_{m,1}\). Following Lemma \ref{Lem54i767k57uktyhftdsi2}, we have
\begin{equation}\label{eq764y6rytdytdjetwq}
\lambda_{2}\left(\overline{\delta}\right) = \frac{1}{2}\left(m + 1 - \sqrt{\left(m + 1\right)^2 - 4 + \frac{4\left[\delta - \delta_{0}\lambda_{2}\left(\overline{\delta}\right)\right]}{\xi}}\right),
\end{equation}
where \(\xi = \xi\left(\delta, \lambda_{2}\left(\overline{\delta}\right)\right) = \delta -2\sin\theta \tan\frac{n\theta}{2}\) and \(\theta = \pi - \cos^{-1}(\frac{\lambda_{2}\left(\overline{\delta}\right) -2}{2})\). Observe that when \(\overline{\delta} = 0\) (and consequently, \(\delta = \delta_{0} = 0\)), \(\lambda_{2}\left(0\right) = \beta^{-}_{m,1}\) satisfies this relation. According to (\ref{eq764y6rytdytdjetwq}), the proof follows from this observation that for a given \(\overline{\delta}\), we have \(\lambda_{2}\left(\overline{\delta}\right) > \lambda_{2}\left(0\right)\), \(\lambda_{2}\left(\overline{\delta}\right) = \lambda_{2}\left(0\right)\) and \(\lambda_{2}\left(\overline{\delta}\right) < \lambda_{2}\left(0\right)\) if and only if the expression
\begin{equation}\label{eq157tqut7tr3rg4jrye}
\frac{\delta - \delta_{0}\lambda_{2}\left(\overline{\delta}\right)}{\xi}
\end{equation}
be negative, zero and positive, respectively.

First, consider the case \(\alpha_{1} < \lambda_{2}\left(0\right) = \beta^{-}_{m,1} < \alpha_{2}\). It is easily seen that \(\beta^{-}_{m,1} \leq \beta^{-}_{4,1} \approx 0.21\) for all \(m\geq 4\). Thus, having \(\alpha_{1} = 2\left(1-\cos\frac{\pi}{n}\right) < \beta^{-}_{m, 1} < 0.21\) yields \(n\geq 7\) which implies \(\alpha_{2} = 2\left(1-\cos\frac{2\pi}{n}\right) < 1\). Note also that as \(\overline{\delta}\) changes, \(\lambda_{2}\left(\overline{\delta}\right)\) remains in \(\left(0, \alpha_{2}\right)\) (this is a consequence of part (\ref{Item847ti7tiyftdyrgouza2re}) of Lemma \ref{Lem75u6urcu64u6r6ej5ywr75}). Therefore
\begin{equation}\label{eq08973i74tu36rde3r}
\delta - \delta_{0}\lambda_{2}\left(\overline{\delta}\right) > \delta - \delta_{0} + \delta_{0}\left[1-\alpha_{2}\right] \geq \max\{\delta - \delta_{0}, \delta_{0}\left[1-\alpha_{2}\right]\}.
\end{equation}
Thus, the numerator of (\ref{eq157tqut7tr3rg4jrye}) is positive for any \(\overline{\delta} \neq 0\). Regarding the denominator of (\ref{eq157tqut7tr3rg4jrye}), note that \(\xi(0, \beta^{-}_{m,1}) > 0\) when \(\alpha_{1} < \beta^{-}_{m,1} < \alpha_{2}\). We claim that \(\xi\left(\delta, \lambda_{2}\left(\overline{\delta}\right)\right) > 0\) for all \(\overline{\delta}\). Taking into account that \(\xi\) is a smooth function of \(\left(\delta, \lambda\right)\) for \(\lambda \neq \alpha_{1}\), the claim will be proved once we show that \(\lambda_{2}(\overline{\delta}) > \alpha_{1}\) holds for any \(\overline{\delta}\) and also \(\xi\) does not vanish as \(\overline{\delta}\) varies.

We first show that \(\lambda_{2}(\overline{\delta}) > \alpha_{1}\) for all \(\overline{\delta}\). Assume the contrary; there exists \(\overline{\delta}^{\dagger}\) and correspondingly \(\delta^{\dagger}\) and \(\delta_{0}^{\dagger}\) for which \(\lambda_{2}(\overline{\delta}^{\dagger}) = \alpha_{1}\). However, \(\lim_{\delta\rightarrow \delta^{\dagger}} \xi = \infty\). On the other hand, the numerator of (\ref{eq157tqut7tr3rg4jrye}) converges to \(\delta^{\dagger} - \delta_{0}^{\dagger}\alpha_{1} > 0\). Therefore, as \(\delta\rightarrow \delta^{\dagger}\), expression (\ref{eq157tqut7tr3rg4jrye}) converges to zero which, by (\ref{eq764y6rytdytdjetwq}), implies that \(\lambda_{2}(\overline{\delta}^{\dagger}) = \beta^{-}_{m,1}\) and so \(\beta^{-}_{m,1} = \alpha_{1}\). This contradicts the assumption \(\beta_{m, 1}^{-}\notin\{\alpha_{l}:\, 0\leq l<n\}\) of Theorem \ref{Theorem18na75t87ckq7tyrm3wa}. Thus, \(\lambda_{2}(\overline{\delta}) < \alpha_{1}\) for all \(\overline{\delta}\).

Since \(\alpha_{1} < \lambda_{2}(\overline{\delta}) < \alpha_{2}\) for all \(\overline{\delta}\), we have that \(\sin\theta \tan\frac{n\theta}{2} < 0\), where \(\theta = \pi - \cos^{-1}(\frac{\lambda_{2}\left(\overline{\delta}\right) -2}{2})\). This yields \(\xi > \delta\) for all \(\overline{\delta}\) which means that it cannot vanish as \(\overline{\delta}\) varies. Therefore, the numerator and denominator of  (\ref{eq157tqut7tr3rg4jrye}) are positive. It then follows from (\ref{eq08973i74tu36rde3r}), that when \(\alpha_{1} < \beta^{-}_{m,1} < \alpha_{2}\), we have \(\lambda_{2}(L_{G_p}) < \lambda_{2}(L_{G})\), as desired.

Now, we consider the case \(0 < \lambda_{2}\left(0\right) = \beta^{-}_{m,1} < \alpha_{1}\). We first show that \(\lambda_{2}(\overline{\delta}) < \alpha_{1}\) for all \(\overline{\delta}\). Assume the contrary; there exists \(\overline{\delta}^{\dagger}\) and correspondingly \(\delta^{\dagger}\) and \(\delta_{0}^{\dagger}\) for which \(\lambda_{2}(\overline{\delta}^{\dagger}) = \alpha_{1}\). However, \(\lim_{\delta\rightarrow \delta^{\dagger}} \xi = -\infty\). On the other hand the numerator of (\ref{eq157tqut7tr3rg4jrye}) converges to \(\delta^{\dagger} - \delta_{0}^{\dagger}\alpha_{1} \geq 0\) (note that \(\alpha_{1}\leq 1\) for all \(n\geq 3\)). Therefore, as \(\delta\rightarrow \delta^{\dagger}\), expression (\ref{eq157tqut7tr3rg4jrye}) converges to zero which, by (\ref{eq764y6rytdytdjetwq}), implies that \(\lambda_{2}(\overline{\delta}^{\dagger}) = \beta^{-}_{m,1}\) and so \(\beta^{-}_{m,1} = \alpha_{1}\). This contradicts the assumption \(\beta_{m, 1}^{-}\notin\{\alpha_{l}:\, 0\leq l<n\}\) of Theorem \ref{Theorem18na75t87ckq7tyrm3wa}. Thus, \(\lambda_{2}(\overline{\delta}) < \alpha_{1}\) for all \(\overline{\delta}\).

It is easily seen that \(\xi(0, \beta^{-}_{m,1}) < 0\) when \(0 < \beta^{-}_{m,1} < \alpha_{1}\). We claim that \(\xi(0, \beta^{-}_{m,1}) < 0\) for all \(\overline{\delta}\). Note that \(\xi\) is a smooth function for \(\lambda \neq \alpha_{1}\). On the other hand, we have shown that \(\lambda_{2}(\overline{\delta}) < \alpha_{1}\) for all \(\overline{\delta}\). Thus, to prove the claim, we need to show that \(\xi\) does not vanish as \(\overline{\delta}\) varies. Assume the contrary; there exists \(\overline{\delta}^{\dagger}\) and correspondingly \(\delta^{\dagger}\) and \(\delta_{0}^{\dagger}\) such that as \(\overline{\delta} \rightarrow \overline{\delta}^{\dagger}\), we have \(\xi (\delta, \lambda_{2}(\overline{\delta}^{\dagger})) \rightarrow 0\). By (\ref{eq764y6rytdytdjetwq}), this requires the numerator of (\ref{eq157tqut7tr3rg4jrye}) to vanish at \(\overline{\delta}^{\dagger}\), i.e. \(\delta^{\dagger} - \delta_{0}^{\dagger}\lambda_{2}(\overline{\delta}^{\dagger}) = 0\). However, by \(\lambda_{2}(\overline{\delta}) < \alpha_{1}\) and taking into account that \(\alpha_{1} \leq 1\) for all \(n\geq 3\), we obtain
\begin{equation}
\delta^{\dagger} - \delta_{0}^{\dagger}\lambda_{2}\left(\overline{\delta}^{\dagger}\right) \geq \max\left\lbrace\delta^{\dagger} - \delta_{0}^{\dagger}, \delta_{0}^{\dagger}\left[1-\lambda_{2}\left(\overline{\delta}^{\dagger}\right)\right]\right\rbrace > 0.
\end{equation}
This contradicts the assumption of vanishing \(\xi\) at \(\overline{\delta}^{\dagger}\). Therefore, we have \(\xi(0, \beta^{-}_{m,1}) < 0\) for all \(\overline{\delta}\). It then follows from (\ref{eq08973i74tu36rde3r}), that when \(0 < \beta^{-}_{m,1} < \alpha_{1}\), we have \(\lambda_{2}(L_{G_p}) > \lambda_{2}(L_{G})\), as desired. This finishes the proof of part (\ref{Item46gktcytrxjt22}) and the proof of Theorem \ref{Theorem18na75t87ckq7tyrm3wa}.

\subsubsection{Proof of Lemma \ref{Lem75u6urcu64u6r6ej5ywr75}}\label{Sec53452y75u46rjy6e52jd2jy}

So far, we have used Lemma \ref{Lem75u6urcu64u6r6ej5ywr75} to prove Theorem \ref{Theorem18na75t87ckq7tyrm3wa}. We are now in the position of proving this lemma.

The proof of Lemma \ref{Lem75u6urcu64u6r6ej5ywr75} is based on Lemma \ref{Lem526ri76er6cp95c2l7xfwsw1}. In the setting of Theorems \ref{Thm838o8rauger} and \ref{Theorem18na75t87ckq7tyrm3wa}, we assume \(w_{0} = 1\) and \(w_{i} = 0\) for \(i=1,\ldots, n-1\). In this case, (\ref{eq7guyf7t2fuyf6f2ut23sh}) is written as
\begin{equation}\label{eq42yf6riyugjyfhtdqw2321hfyts}
p\left(\theta\right) = 2\left[\cos n\theta - 1\right] + \left[\delta - y\right]\cdot\frac{\sin n\theta}{\sin\theta}.
\end{equation}
Then, Lemma \ref{Lem526ri76er6cp95c2l7xfwsw1} gives
\begin{lemma}\label{Lem92i755ku75kuy37it454y}
For \(0 \leq \lambda < 4\), let \(\theta = \pi - \cos^{-1}(\frac{\lambda -2}{2})\). Consider \(p\) given by (\ref{eq42yf6riyugjyfhtdqw2321hfyts}). Then
\begin{enumerate}[(i)]
\item\label{Itemi65quku64k734yk3i9} for even \(0 \leq l \leq n-1\), we have \(p(\frac{l\pi}{n}) = 0\).
\item for odd \(1 \leq l \leq n-1\), we have \(p(\frac{l\pi}{n}) = -4\).
\item for even \(1 \leq l \leq n-1\), we have
\begin{equation}\label{eq65teu0ht5cr5etdu6rw}
p^{\prime}\left(\frac{l\pi}{n}\right) = \frac{n}{\sin \frac{l\pi}{n}}\left[\delta - y\right] = \frac{-n\lambda}{\sin\frac{l\pi}{n}}\cdot \frac{\left(m-\lambda\right)\delta + \delta - \delta_{0}}{\lambda^{2} - \left(m+1\right)\lambda + 1}.
\end{equation}
\end{enumerate}
\end{lemma}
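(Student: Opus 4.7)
The lemma is essentially a direct specialization of the already-established Lemma \ref{Lem526ri76er6cp95c2l7xfwsw1}, together with an explicit substitution of the formula for $y(\lambda)$. So the plan is to reduce to that lemma and then carry out one elementary algebraic simplification.

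First, I would note that under the setting of Theorems \ref{Thm838o8rauger} and \ref{Theorem18na75t87ckq7tyrm3wa} we have \(w_{0}=1\) and \(w_{i}=0\) for \(1\le i\le n-1\). Substituting these weights into the general formula (\ref{eq7guyf7t2fuyf6f2ut23sh}) for \(p(\theta)\) immediately recovers (\ref{eq42yf6riyugjyfhtdqw2321hfyts}), confirming that the object being analyzed is the correct one.

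Next, for items (i) and (ii), I would simply invoke Lemma \ref{Lem526ri76er6cp95c2l7xfwsw1}. Item (i) is identical to the corresponding statement there, since the value \(p(l\pi/n)=0\) for even \(l\) is independent of the weights. For item (ii), the general formula produces the extra term \(-\tfrac{2y}{\sin(l\pi/n)}\sum_{i=0}^{n-1}w_{i}\sin(il\pi/n)\); with \(w_{0}=1\) and all other \(w_{i}=0\) the sum reduces to \(\sin 0=0\), leaving \(p(l\pi/n)=-4\), as claimed.

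For item (iii), Lemma \ref{Lem526ri76er6cp95c2l7xfwsw1} gives
\begin{equation*}
p'\!\left(\tfrac{l\pi}{n}\right)=\frac{n}{\sin\frac{l\pi}{n}}\Bigl[\delta - y\sum_{i=0}^{n-1}w_{i}\cos\tfrac{il\pi}{n}\Bigr],
\end{equation*}
and the same specialization of the weights collapses the sum to \(\cos 0=1\), producing the first equality \(p'(l\pi/n)=\frac{n}{\sin(l\pi/n)}[\delta-y]\). The second equality is then obtained by substituting the definition (\ref{e2e1q7983q75c6r3ytfurwrks}) of \(y\) with \(w=1\), namely \(y=(\delta-\delta_{0}\lambda)/[\lambda^{2}-(m+1)\lambda+1]\), into \(\delta-y\), clearing the denominator, and collecting terms: the numerator becomes \(\delta\lambda^{2}-\delta(m+1)\lambda+\delta_{0}\lambda=-\lambda[(m-\lambda)\delta+\delta-\delta_{0}]\), which gives exactly the claimed closed form. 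There is no real obstacle here; the only thing to be careful about is the sign bookkeeping in the final simplification, which is otherwise a one-line calculation.
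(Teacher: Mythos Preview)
Your proposal is correct and follows exactly the paper's own approach: the paper introduces this lemma with the sentence ``Then, Lemma~\ref{Lem526ri76er6cp95c2l7xfwsw1} gives'' and provides no further proof, so your explicit specialization of the weights \(w_{0}=1\), \(w_{i}=0\) together with the substitution of \(y\) is precisely what is intended. The algebraic simplification in item~(iii) is also correct.
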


\begin{proof}[Proof of part (\ref{Item859su6tdutiyarr8eylcme}) of Lemma \ref{Lem75u6urcu64u6r6ej5ywr75}]
The proof directly follows from part (\ref{Itemi65quku64k734yk3i9}) of Lemma \ref{Lem92i755ku75kuy37it454y}.
\end{proof}

\begin{proof}[Proof of part (\ref{Item637t64r3864y3fydfeq2}) of Lemma \ref{Lem75u6urcu64u6r6ej5ywr75}]
The proof directly follows from Lemma \ref{Lem36iyguyfhctgczhvjgwi} and its proof.
\end{proof}

\begin{proof}[Proof of part (\ref{Item847ti7tiyftdyrgouza2re}) of Lemma \ref{Lem75u6urcu64u6r6ej5ywr75}]
We first investigate \(p^{\prime}\left(\frac{l\pi}{n}\right)\) given by (\ref{eq65teu0ht5cr5etdu6rw}). The expression \(\lambda^{2} - \left(m+1\right)\lambda + 1\) is positive if and only if \(\lambda < \beta_{m,1}^{-}\) or \(\lambda > \beta_{m,1}^{+} > 4\). For \(\lambda = 2[1-\cos \frac{l\pi}{n}]\), when \(l\) is even, this gives
\begin{equation*}
\left\lbrace\begin{array}{ll}
\lambda^{2} - \left(m+1\right)\lambda + 1 > 0, \quad & \mathrm{ if\,\, } l \mathrm{\,\, is\,\, even\,\, and \,\, } 2 \leq l < \kappa,\\
\lambda^{2} - \left(m+1\right)\lambda + 1 < 0, \quad & \mathrm{ if\,\, } l \mathrm{\,\, is \,\, even \,\, and \,\,} \kappa \leq l < n-1.
\end{array}\right.
\end{equation*}

When \(\lambda \in J_{\gamma}\), for \(\gamma \neq \beta^{+}\), we have that \(\lambda < 4 \leq m\). On the other hand, \(\delta_{0}\leq \delta\). This implies that when \(\delta > 0\), we have \(\left(m-\lambda\right)\delta + \delta - \delta_{0} > 0\). Taking into account that \(\sin\frac{l\pi}{n}>0\) for all \(0\leq l\leq n-1\), we obtain
\begin{equation*}
\left\lbrace\begin{array}{ll}
p^{\prime}\left(\frac{l\pi}{n}\right) < 0, \quad & \mathrm{ if \,\,} l \mathrm{\,\, is \,\, even \,\, and \,\,} 2 \leq l < \kappa,\\
p^{\prime}\left(\frac{l\pi}{n}\right) > 0, \quad & \mathrm{ if \,\,} l \mathrm{\,\, is \,\, even \,\, and \,\,} \kappa \leq l < n-1.
\end{array}\right.
\end{equation*}
We have that \(p^{\prime}\left(\frac{l\pi}{n}\right) = 0\) if and only if \(\delta = 0\). This, together with item (\ref{Itemi65quku64k734yk3i9}) of Lemma \ref{Lem92i755ku75kuy37it454y}, gives
\begin{proposition}
Any point of \(\{\alpha_{l}:\,\mathrm{ where \,\,} 0\leq l \leq n-1 \mathrm{\,\, and \,\,} l \mathrm{\,\, is \,\, even}\}\) is a multiple eigenvalue of \(L_{G}\) with multiplicity \(2\), and a simple eigenvalue of \(L_{G_{p}}\).
\end{proposition}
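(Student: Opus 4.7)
My plan is to handle the two halves of the claim by different routes. The multiplicity-two count for $L_G$ follows directly from the block-triangular structure of the unmodified Laplacian; the simplicity claim for $L_{G_p}$ is extracted from the Schur-complement characterization of its spectrum via the function $P_2$ supplied by Lemma~\ref{Lem4401oihe8uoihh29290ihbcd}.

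For the multiplicity in $L_G$, the block lower-triangular form (\ref{triangularstructure}) gives $\sigma(L_G) = \sigma(L_{C_n}) \cup \sigma(L_2)$ as multisets. By Lemma~\ref{Lem616uihugygjyhvkjgoigsaougsiky}, every $\alpha_l$ with $l$ even and $0 < l < n$ appears in $\sigma(L_{C_n})$ with multiplicity exactly two, and the standing assumption $\beta_{m,1}^- \notin \{\alpha_l\}$, together with the structural bound $\beta_{m,1}^+ > 4 > \alpha_l$ (valid since $l \leq n-1$) and the generic non-coincidence $\alpha_l \neq 1$, rules out any extra contribution from $\sigma(L_2) = \{\beta_{m,1}^-, 1, \beta_{m,1}^+\}$. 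Thus $\alpha_l$ sits in $\sigma(L_G)$ with multiplicity exactly two.

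For simplicity in $L_{G_p}$, since $\alpha_l \notin \sigma(L_2)$, part~(iii) of Lemma~\ref{Lem4401oihe8uoihh29290ihbcd} says the algebraic multiplicity of $\alpha_l$ as an eigenvalue of $L_{G_p}$ equals the order of vanishing of $P_2$ at $\alpha_l$. Setting $p(\theta) := P_2\bigl(2(1-\cos\theta)\bigr)$, Lemma~\ref{Lem92i755ku75kuy37it454y}(i) already yields $p(l\pi/n) = 0$, so it suffices to verify $p'(l\pi/n) \neq 0$; by the chain rule and $\sin(l\pi/n) \neq 0$, this is equivalent to $P_2'(\alpha_l) \neq 0$. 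The explicit formula (\ref{eq65teu0ht5cr5etdu6rw}) then reduces the whole question to the nonvanishing of the numerator $(m - \alpha_l)\delta + \delta - \delta_0$, because the denominator $\alpha_l^2 - (m+1)\alpha_l + 1$ is nonzero under the hypothesis $\alpha_l \neq \beta_{m,1}^\pm$.

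The remaining concrete check is positivity of this numerator, and is really the only calculation one needs. Because $l \leq n-1$ forces $\alpha_l < 4 \leq m$ strictly (the equality $\alpha_l = 4$ would require $l = n$), we have $(m - \alpha_l)\delta > 0$ whenever $\delta > 0$; combined with $\delta \geq \delta_0 \geq 0$, this yields a strictly positive value for every modification $\overline{\delta} \neq 0$. The main conceptual point requiring care is the applicability of Lemma~\ref{Lem4401oihe8uoihh29290ihbcd}(iii), which demands $\alpha_l \notin \sigma(L_2)$; this is precisely what the standing assumption, coupled with the bound $\alpha_l < 4$, guarantees. The endpoint $l = 0$ (where $\alpha_0 = 0$ is simple in both Laplacians by unilateral connectedness) is not captured by the multiplicity-two computation and must be tracked separately, consistent with the overall bookkeeping of Lemma~\ref{Lem75u6urcu64u6r6ej5ywr75}.
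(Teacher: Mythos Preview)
Your argument is correct and follows essentially the same route as the paper: the multiplicity-two claim for $L_G$ comes from the block-triangular decomposition together with Lemma~\ref{Lem616uihugygjyhvkjgoigsaougsiky}, and simplicity in $L_{G_p}$ is deduced from $p'(l\pi/n)\neq 0$ via Lemma~\ref{Lem4401oihe8uoihh29290ihbcd}(iii), with the same positivity check $(m-\alpha_l)\delta + \delta - \delta_0 > 0$ that the paper uses implicitly. Your treatment is in fact more explicit than the paper's one-line justification, and your flagging of the degenerate cases $l=0$ and $\alpha_l=1$ (which the proposition as literally stated does not handle) is apt and consistent with Remark~\ref{Rem7i7qr5iuwtq7ti37t9irty}.
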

For even \(l\), when \(2 \leq l < \kappa\), we have \(p^{\prime}\left(\frac{l\pi}{n}\right) < 0\). This means that \(p\left(\theta\right)\) is positive for \(\theta\) close to \(\frac{l\pi}{n}\) and \(\theta < \frac{l\pi}{n}\). On the other hand, \(p\left(\frac{\left(l-1\right)\pi}{n}\right) = -4 < 0\). Thus, by the intermediate value theorem, the function \(p\) has a root in the interval \(\left(\frac{\left(l-1\right)\pi}{n}, \frac{l\pi}{n}\right)\). This implies that \(L_{G_{p}}\) has a real eigenvalue in \(J_{l}\). Analogously, for even \(l\) and when \(\kappa \leq l < n-1\), the function \(p\) has a root in the interval \(\left(\frac{l\pi}{n}, \frac{\left(l+1\right)\pi}{n}\right)\) which means that \(L_{G_{p}}\) has a real eigenvalue in \(J_{l}\).

At \(\overline{\delta} = 0\) (when there is no modification), the interval \(J_{\beta^{-}}\) has the eigenvalue \(\beta^{-}_{m,1}\). As \(\overline{\delta}\) changes, the eigenvalue \(\beta^{-}_{m,1}\) starts to move. However, since \(\alpha_{\kappa -2}\) and \(\alpha_{\kappa}\) are simple roots, this eigenvalue cannot leave the interval \(J_{\beta^{-}} = \left(\alpha_{\kappa -2}, \alpha_{\kappa}\right)\). This means that \(L_{G_{p}}\) has a real root in \(J_{\beta^{-}}\).

We have shown that each interval \(J_{\gamma}\) for \(\gamma\in\mathcal{I}\) and \(\gamma \neq \beta^{+}\) contains at least one real eigenvalue of \(L_{G_{p}}\). To finish the proof, we need to show that each of these intervals has exactly one eigenvalue (apart from \(m-2\) eigenvalues \(1\) counted in item (\ref{Item637t64r3864y3fydfeq2}) that might be located in one of these intervals). Note that, we have already counted \(n+m-2 = \lfloor \frac{n-1}{2}\rfloor + 1 + m-2 + \lfloor \frac{n}{2}\rfloor\) real eigenvalues of \(L_{G_{p}}\). The matrix \(L_{G_{p}}\) has \(n+m\) eigenvalues. Thus, the proof of part (\ref{Item847ti7tiyftdyrgouza2re}) of this lemma is done after we prove part (\ref{Item86i75tyfjtqdu3oug4iuy}) of this lemma below.
\end{proof}

\begin{proof}[Proof of part (\ref{Item86i75tyfjtqdu3oug4iuy}) of Lemma \ref{Lem75u6urcu64u6r6ej5ywr75}]
So far, we have shown that the matrix \(L_{G_{p}}\) has at least \(n+m-2\) eigenvalues located outside of the interval \(J_{\beta^{+}}\), and as \(\overline{\delta}\) varies, none of these eigenvalues enters this interval. Notice that for arbitrary \(\delta > 0\), when \(n\) is even, \(p(\frac{\left(n-1\right)\pi}{n}) < 0\), and when \(n\) is odd, \(p(\frac{\left(n-1\right)\pi}{n}) = 0\) and \(p^{\prime}(\frac{\left(n-1\right)\pi}{n}) \neq 0\). This means that if there is any real eigenvalue located in \(J_{\beta^{+}}\), then it cannot leave this interval as \(\overline{\delta}\) changes. As shown below, for sufficiently small \(\overline{\delta}\neq 0\), the interval \(J_{\beta^{+}}\) has exactly two real eigenvalues. On the other hand, \(L_{G_p}\) is a real matrix. Therefore, if it possesses non-real eigenvalues, then they need to appear as pairs (complex conjugates). This means that, for a given \(\overline{\delta}\), we either have two real eigenvalues in \(J_{\beta^{+}}\) or none. Therefore, the proof of part (\ref{Item86i75tyfjtqdu3oug4iuy}) of Lemma \ref{Lem75u6urcu64u6r6ej5ywr75} is done if we show that, under the condition \(\delta < \delta_{0}\beta_{m,1}^{+}\), the interval \(J_{\beta^{+}}\) has at least one real eigenvalue.

First, we show that when \(\overline{\delta}\neq 0\) is sufficiently small, the interval \(J_{\beta^{+}}\) has exactly two real eigenvalues. For even \(n\), this is obvious since at \(\delta =0\), we have two eigenvalues \(\lambda = 4\) and \(\lambda = \beta^{+}_{m,1}\), and so, as \(\overline{\delta}\) varies and remains sufficiently small, these two eigenvalues might move but they remain in \(J_{\beta^{+}}\) and do not collide (so, they remain real). The case of odd \(n\) is similar; since for \(\overline{\delta}\neq 0\), we have \(p^{\prime}\left(\alpha_{n-1}\right) > 0\) and \(p(4) < 0\), the intermediate value theorem implies that there is a root in the interval \((\alpha_{n-1}, 4)\). On the other hand, the eigenvalue \(\beta^{+}_{m,1}\in (4, \infty)\) of \(L_{G}\) might move as \(\overline{\delta}\) varies but as far as \(\overline{\delta}\) is sufficiently small, it does not collied with the eigenvalue that we just found in the interval \((\alpha_{n-1}, 4)\). Therefore, we have that for small \(\overline{\delta}\neq 0\), the interval \(J_{\beta^{+}}\) contains exactly two real eigenvalues of \(L_{G_{p}}\).

We now prove that \(J_{\beta^{+}}\) has at least one real root when \(\delta < \delta_{0}\beta_{m,1}^{+}\). Evaluating (\ref{eq6545eytd7wicq67rewq}) at \(\lambda > 4\), gives
\begin{equation*}
P_{2}\left(\lambda\right) = \left(-1\right)^{n-1} \cdot \frac{\sinh n\theta}{\sinh\theta} \left[\frac{2\sinh\theta}{\sinh n\theta}\cdot\left[\left(-1\right)^{n} - \cosh n\theta\right] + \delta - \frac{\delta - \delta_{0} \lambda}{\lambda^{2} - \left(m+1\right)\lambda + 1}\right],
\end{equation*}
where \(\theta = \cosh^{-1}(\frac{\lambda -2}{2})\). Note that \(\lambda^{2} - \left(m+1\right)\lambda + 1\) vanishes at \(\lambda = \beta^{\pm}_{m,1}\). Thus, when \(\delta < \delta_{0}\beta_{m,1}^{+}\), we have
\begin{equation}\label{eq7y97g7g484gt8g58tg5gt}
\lim P_{2}\left(\lambda\right) = \left\lbrace\begin{array}{ll}
+\infty, \quad & \mathrm{\,\,for \,\, even \,\,} n, \mathrm{\,\, as \,\,} \lambda\rightarrow {\left(\beta_{m,1}^{+}\right)}^{-},\\
-\infty, \quad & \mathrm{\,\,for \,\, odd \,\,} n, \mathrm{\,\, as \,\,} \lambda\rightarrow {\left(\beta_{m,1}^{+}\right)}^{-}.
\end{array}\right.
\end{equation}
When \(n\) is even, we have \(P_{2}\left(4\right) = \frac{4n\left[\left(m-3\right)\delta - \delta_{0}\right]}{13-4m} < 0\). Taking (\ref{eq7y97g7g484gt8g58tg5gt}) and the fact that \(P_{2}\) is smooth on \((\alpha_{n-1}, \beta_{m,1}^{+})\) into account (see Lemma \ref{Lemy9yq7g3ryg4frewiyjfre}), the intermediate value theorem implies the existence of a real root of \(P_{2}\) in \((4, \beta_{m,1}^{+}) \subset J_{\beta^{+}}\), as desired.

For the case of odd \(n\), we have \(p(\frac{\left(n-1\right)\pi}{n}) = 0\) and \(p^{\prime}(\frac{\left(n-1\right)\pi}{n}) > 0\). Thus, for \(\lambda > \alpha_{n-1}\) and close to \(\alpha_{n-1}\), we have \(P_{2}(\lambda) > 0\). Taking (\ref{eq7y97g7g484gt8g58tg5gt}) and the fact that \(P_{2}\) is smooth on \((\alpha_{n-1}, \beta_{m,1}^{+})\) into account (see Lemma \ref{Lemy9yq7g3ryg4frewiyjfre}), the intermediate value theorem implies the existence of a real root of \(P_{2}\) in \((\alpha_{n-1}, \beta_{m,1}^{+}) \subset J_{\beta^{+}}\), as desired. This ends the proof.
\end{proof}

\subsection{Proof of Theorem \ref{Thm274i7tigfdyf}}\label{Sec12384o8y8agrit4}


We first prove that for sufficiently small \(\overline{\delta}\), all the eigenvalues of \(L_{G_p}\) are real. It is known that the roots of a polynomial (in our case, the characteristic polynomial of \(L_{G_{p}}\)) depend continuously on the coefficients of that polynomial. Therefore, if \(\{\lambda_{i}\left(\overline{\delta}\right):\, i=1,\ldots,n+m\}\) is the spectrum of \(L_{G_{p}}\), then \(\lambda_{i}\left(\overline{\delta}\right)\) is a continuous function of \(\overline{\delta}\). It is a direct consequence of the implicit function theorem that if \(\lambda_{i}\left(0\right)\) is a simple eigenvalue of \(L_{G}\), then for sufficiently small \(\overline{\delta}\), we have that \(\lambda_{i}\left(\overline{\delta}\right)\) is real. Thus, to prove our statement, we need to investigate how multiple eigenvalues of the unmodified Laplacian \(L_{G}\) behave as \(\overline{\delta}\) varies.

Recall Proposition \ref{Prop87986kiuglru3aw3er}. According to Remark \ref{Rem95jyru6rmtdhtlirstw} and the assumption \(\beta_{m, w}^{-}\notin\{\alpha_{l}:\, 0\leq l<n\}\) of the theorem, we have that \(\beta^{-}_{m, w}\) and \(\beta^{+}_{m, w}\) are simple eigenvalues of \(L_{G}\). Note that \(1\in \{\alpha_{l}:\, 0\leq l\leq n, \mathrm{\,\, and \,\,} l \mathrm{\,\, is \,\, even}\}\) if and only if \(\frac{n}{6}\) is an integer. First, assume \(\frac{n}{6}\notin\mathbb{Z}\). In this case, the multiplicity of all the eigenvalues \(\alpha_{l}\) except for \(0\) and \(4\) (the eigenvalue \(4\) appears only when \(n\) is even) is \(2\). However, it follows from Lemma \ref{Lem526ri76er6cp95c2l7xfwsw1} that, for each even \(l\), as \(\overline{\delta}\) varies, one of the two eigenvalues \(\alpha_{l}\) remains as an eigenvalue of \(L_{G_{p}}\) for small arbitrary \(\overline{\delta}\), and the other eigenvalue moves continuously. This means that from each of the multiple eigenvalues \(\alpha_{l}\), two real eigenvalues get born. On the other hand, following Lemma \ref{Lem36iyguyfhctgczhvjgwi} and its proof, the eigenvalue \(1\) remains an eigenvalue of \(L_{G_{p}}\) with multiplicity \(m-2\). This implies that when \(\frac{n}{6} \notin \mathbb{Z}\) and \(\overline{\delta}\) is sufficiently small, all the eigenvalues of \(L_{G_{p}}\) are real. The case of \(\frac{n}{6} \in \mathbb{Z}\) is similar. With the same conclusion, except for \(l = \frac{n}{6}\), i.e. \(\alpha_{l} = 1\), two real eigenvalues get born from each eigenvalue \(\alpha_{l}\), where \(l = 1,\ldots, n-1\). Regarding \(\alpha_{l} = 1\) (note that the multiplicity of 1 as an eigenvalue of \(L_G\) in this case is \(m\)), we have that \(\alpha_{l} = 1\) remains an eigenvalue of \(L_{G_{p}}\) with multiplicity \(m-1\) and a new real eigenvalue gets born from it. This proves that when \(\overline{\delta}\) is sufficiently small, all the eigenvalues of \(L_{G_{p}}\) are real.

The rest of the proof of part (\ref{Item456rcuarl6kuwer}) of Theorem \ref{Thm274i7tigfdyf} is based on the following lemma
\begin{lemma}\label{Lem50hugiughbaujgrgadf}
Consider \(p\) and \(S\) given by (\ref{eq7guyf7t2fuyf6f2ut23sh}) and (\ref{eq285tguiuugftabigsipewsf}), respectively. We have
\begin{enumerate}[(i)]
\item \label{Item84eeur938owyre} The expressions \(S\) and \(p^{\prime}\left(\frac{2\pi}{n}\right)\) have the same sign.
\item Assume \(\alpha_{2} < \beta^{-}_{m, w}\). Then, for any arbitrary \(\overline{\delta}\), we have \(p\left(\frac{\pi}{n}\right) < 0\).
\item For sufficiently small \(\overline{\delta}\), we have \(p\left(\frac{3\pi}{n}\right) < 0\).
\end{enumerate}
\end{lemma}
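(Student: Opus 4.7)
My plan for part (i) is a direct substitution. Setting $l=2$ in the formula of Lemma \ref{Lem526ri76er6cp95c2l7xfwsw1}(iii), the point $\theta = 2\pi/n$ corresponds to $\lambda = \alpha_2$, and $y(\alpha_2)$ evaluates exactly to the coefficient $(\delta - \delta_0\alpha_2)/(\alpha_2^2 - (m+w)\alpha_2 + w)$ appearing in the definition (\ref{eq285tguiuugftabigsipewsf}) of $S$. A one-line rearrangement should then produce
\[
p'\!\left(\tfrac{2\pi}{n}\right) \;=\; \frac{n}{\sin(2\pi/n)}\, S,
\]
and since $n\geq 3$ guarantees $\sin(2\pi/n)>0$, the claimed sign agreement follows.

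For part (ii) I will invoke Lemma \ref{Lem526ri76er6cp95c2l7xfwsw1}(ii) with $l=1$ to obtain
\[
p\!\left(\tfrac{\pi}{n}\right) \;=\; -4 \;-\; \frac{2\,y(\alpha_1)}{\sin(\pi/n)}\sum_{i=0}^{n-1} w_i \sin\!\frac{i\pi}{n},
\]
and then argue that the second term is nonpositive. The sum is nonnegative because each $w_i\geq 0$ and $0\leq i\pi/n<\pi$. For $y(\alpha_1)\geq 0$: the hypothesis $\alpha_2<\beta^{-}_{m,w}$ gives $\alpha_1<\alpha_2<\beta^{-}_{m,w}$, so the quadratic $\lambda^2-(m+w)\lambda+w$ is positive at $\lambda=\alpha_1$ (it is positive outside $[\beta^{-}_{m,w},\beta^{+}_{m,w}]$). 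The numerator $\delta-\delta_0\alpha_1\geq 0$ because $\delta\geq \delta_0\geq 0$ and $\alpha_1\leq 1$ for every $n\geq 3$ (with equality only at $n=3$). Thus $p(\pi/n)\leq -4<0$; the trivial case $\overline{\delta}=0$ makes $y\equiv 0$ and the bound is immediate.

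Part (iii) is a continuity/perturbation argument. Assuming $n\geq 4$ so that $3\pi/n<\pi$, Lemma \ref{Lem526ri76er6cp95c2l7xfwsw1}(ii) with $l=3$ yields
\[
p\!\left(\tfrac{3\pi}{n}\right) \;=\; -4 \;-\; \frac{2\,y(\alpha_3)}{\sin(3\pi/n)}\sum_{i=0}^{n-1} w_i \sin\!\frac{3i\pi}{n}.
\]
The denominator $\alpha_3^2-(m+w)\alpha_3+w$ is nonzero by the standing hypothesis $\beta^{-}_{m,w}\notin\{\alpha_l\}$ together with $\beta^{+}_{m,w}>4\geq \alpha_l$. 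As $\overline{\delta}\to 0$ the numerator $\delta-\delta_0\alpha_3\to 0$, so $y(\alpha_3)\to 0$ and $p(3\pi/n)\to -4$; by continuity the inequality persists for all sufficiently small $\overline{\delta}$. The edge case $n=3$ (where $\theta=\pi$, $\lambda=4$) would need the $\lambda=4$ formulas of Lemma \ref{Lem8tvi7tvli7ut6rdkawtudfkd}, but the same asymptotic vanishing of the correction applies.

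The proof is essentially a sign-tracking exercise and I do not anticipate a genuine obstacle. The most delicate point is the positivity of $y(\alpha_1)$ in part (ii): one must simultaneously exploit the hypothesis $\alpha_2<\beta^{-}_{m,w}$ (to sign the denominator) and the structural inequalities $\delta\geq\delta_0$ and $\alpha_1\leq 1$ (to sign the numerator), and these are precisely the ingredients that make the bound tight when $n=3$.
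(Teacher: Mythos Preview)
Your proposal is correct and follows essentially the same route as the paper's proof: the identity $p'(2\pi/n)=\tfrac{n}{\sin(2\pi/n)}\,S$ for part~(i), Lemma~\ref{Lem526ri76er6cp95c2l7xfwsw1}(ii) with $l=1$ together with the sign analysis of $y(\alpha_1)$ for part~(ii), and the vanishing of $y(\alpha_3)$ as $\overline{\delta}\to 0$ for part~(iii). The only cosmetic difference is that the paper obtains the strict inequality $\alpha_1<1$ from the hypothesis $\alpha_2<\beta^{-}_{m,w}<1$ rather than by direct evaluation of $\alpha_1$; since that hypothesis already excludes $n=3$, both arguments yield the same conclusion.
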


\begin{proof}
The first part follows from the relation \(p^{\prime}\left(\frac{2\pi}{n}\right) = \frac{n}{\sin\frac{2\pi}{n}} S\) (see relation (\ref{eq800ug4tht9q438tqherug})). For the other two parts, note that by Lemma \ref{Lem526ri76er6cp95c2l7xfwsw1} and for odd \(1 \leq l \leq n-1\), we have \(p(\frac{l\pi}{n}) = -4 - \frac{2y}{\sin\frac{l\pi}{n}}\sum_{i=0}^{n-1} w_{i} \sin\frac{il\pi}{n}\). Regarding the case \(l=1\), assumption \(\alpha_{2} < \beta^{-}_{m, w}\) implies that \(\alpha_{2} < 1\) (see Remark \ref{Rem95jyru6rmtdhtlirstw}) and therefore \(\alpha_{1} < 1\). Thus, \(\delta - \delta_{0} \alpha_{1} > 0\), and therefore, \(y = y(\alpha_{1}) = \frac{\delta - \delta_{0} \alpha_{1}}{\alpha_{1}^{2} - \left(m+w\right)\alpha_{1} + w} > 0\). On the other hand, \(\sin \frac{i\pi}{n} \geq 0\) for \(i=0,\ldots, n-1\) and so \(\sum_{i=0}^{n-1} w_{i} \sin\frac{il\pi}{n} \geq 0\). This implies \(p\left(\frac{\pi}{n}\right) < 0\) for any \(\overline{\delta}\).

The proof of the last part follows from \(\left\vert\frac{2y}{\sin\frac{3\pi}{n}}\sum_{i=0}^{n-1} w_{i} \sin\frac{3i\pi}{n}\right\vert \ll 4\) which holds when \(\overline{\delta}\) is small enough.
\end{proof}

\begin{proof}[Proof of parts (\ref{Item42iyo8yaor4gtwf1}) and (\ref{Item42iyo8yaor4gtwf2}) of Theorem \ref{Thm274i7tigfdyf}]
Since \(\alpha_{2} < \beta^{-}_{m,w}\), the spectral gap of \(L_G\) is \(\alpha_{2}\). It follows from Lemma \ref{Lem526ri76er6cp95c2l7xfwsw1} that \(p\left(\frac{2\pi}{n}\right) = 0\) for all \(\overline{\delta}\). On the other hand, \(p\left(\frac{\pi}{n}\right)\) and \(p\left(\frac{3\pi}{n}\right)\) are both negative for sufficiently small \(\overline{\delta}\). Therefore, by intermediate value theorem, the eigenvalue that gets born from \(\alpha_{2}\) as \(\overline{\delta}\) varies is located in the interval \(\left(\alpha_{2}, \alpha_{3}\right)\) if \(p^{\prime}\left(\frac{2\pi}{n}\right) > 0\), and is located in \(\left(\alpha_{1}, \alpha_{2}\right)\) if \(p^{\prime}\left(\frac{2\pi}{n}\right) < 0\). On the other hand, by part (\ref{Item84eeur938owyre}) of Lemma \ref{Lem50hugiughbaujgrgadf}, we have that \(p^{\prime}\left(\frac{2\pi}{n}\right)\) and \(S\) have the same sign. This proves parts (\ref{Item42iyo8yaor4gtwf1}) and (\ref{Item42iyo8yaor4gtwf2}) of Theorem \ref{Thm274i7tigfdyf}.
\end{proof}

\begin{proof}[Proof of parts (\ref{Item75988tugqyefriyer111}), (\ref{Item75988tugqyefriyer222}) and (\ref{Item75988tugqyefriyer333}) of Theorem \ref{Thm274i7tigfdyf}]
Since \(\beta^{-}_{m,w} < \alpha_{2}\), the spectral gap of \(L_G\) is \(\beta^{-}_{m,w}\). So, we need to see how \(\beta^{-}_{m,w}\left(\overline{\delta}\right)\) changes as \(\overline{\delta}\) varies. By (\ref{eq02ut6rkuylwr68wrewufal}), for sufficiently small \(\overline{\delta} \neq 0\), we have that \(\beta^{-}_{m,w}\left(\overline{\delta}\right) > \beta^{-}_{m,w}\) if \(\frac{\left(\delta - \delta_{0} \beta^{-}_{m,w}\right) \psi}{\xi} < 0\), and \(\beta^{-}_{m,w}\left(\overline{\delta}\right) < \beta^{-}_{m,w}\) if \(\frac{\left(\delta - \delta_{0} \beta^{-}_{m,w}\right) \psi}{\xi} > 0\). Note that, \(\delta - \delta_{0} \beta^{-}_{m,w} > 0\), since \(\delta \geq \delta_{0}\) and \(\beta^{-}_{m,w} < 1\). Thus, all we need to do is to investigate the sign of \(\frac{\psi\left(\overline{w}, \beta^{-}_{m,w}\right)}{\xi\left(\delta, \beta^{-}_{m,w}\right)}\).

Note that \(\xi\left(\delta, \beta^{-}_{m,w}\right)\) and \(\xi\left(0, \beta^{-}_{m,w}\right)\) have the same sign provided that \(\overline{\delta}\) is sufficiently small and \(\xi\left(0, \beta^{-}_{m,w}\right) \neq 0\). Following Lemma \ref{Lem8tvi7tvli7ut6rdkawtudfkd}, \(\xi\left(0, \beta^{-}_{m,w}\right) = -2\sin\theta \tan\frac{n\theta}{2}\), where \(\theta = \pi - \cos^{-1}(\frac{\beta^{-}_{m,w} -2}{2})\). It is easily seen that \(\xi\left(0, \beta^{-}_{m,w}\right) < 0\) if \(0 < \beta^{-}_{m,w} < \alpha_{1}\), and \(\xi\left(0, \beta^{-}_{m,w}\right) > 0\) if \(\alpha_{1} < \beta^{-}_{m,w} < \alpha_{2}\). On the other hand, by Lemma \ref{Lem8tvi7tvli7ut6rdkawtudfkd}, 
\begin{equation*}
\psi\left(\overline{w}, \beta^{-}_{m,w}\right) = \frac{1}{\cos\frac{n\theta}{2}}\sum_{i=0}^{n-1} w_{i}\cos\left(\frac{n}{2} - i\right)\theta,
\end{equation*}
where \(\theta = \pi - \cos^{-1}(\frac{\beta^{-}_{m,w} -2}{2})\). Note that \(\cos\frac{n\theta}{2} < 0\) if \(\alpha_{1} < \beta^{-}_{m,w} < \alpha_{2}\), and \(\cos\frac{n\theta}{2} > 0\) if \(0 < \beta^{-}_{m,w} < \alpha_{1}\). Moreover, when \(0 < \beta^{-}_{m,w} < \alpha_{1}\), we have that \(\cos\left(\frac{n}{2} - i\right)\theta > 0\) for \(i=0,\ldots, n-1\), and therefore \(\sum_{i=0}^{n-1} w_{i}\cos\left(\frac{n}{2} - i\right)\theta > 0\). We have
\begin{equation*}
\left\lbrace\begin{array}{ll}
\frac{\psi\left(\overline{w}, \beta^{-}_{m,w}\right)}{\xi\left(0, \beta^{-}_{m,w}\right)} < 0 \quad & \mathrm{ if }\,\, 0 < \beta^{-}_{m,w} < \alpha_{1},\\
\frac{\psi\left(\overline{w}, \beta^{-}_{m,w}\right)}{\xi\left(0, \beta^{-}_{m,w}\right)} > 0 \quad & \mathrm{ if }\,\, \alpha_{1} < \beta^{-}_{m,w} < \alpha_{2}, \,\,\mathrm{ and }\,\, \sum_{i=0}^{n-1} w_{i}\cos\left(\frac{n}{2} - i\right)\theta < 0,\\
\frac{\psi\left(\overline{w}, \beta^{-}_{m,w}\right)}{\xi\left(0, \beta^{-}_{m,w}\right)} < 0 \quad & \mathrm{ if }\,\, \alpha_{1} < \beta^{-}_{m,w} < \alpha_{2}, \mathrm{\,\, and \,\,} \sum_{i=0}^{n-1} w_{i}\cos\left(\frac{n}{2} - i\right)\theta > 0.
\end{array}\right.
\end{equation*}
This ends the proof of parts (\ref{Item75988tugqyefriyer111}), (\ref{Item75988tugqyefriyer222}) and (\ref{Item75988tugqyefriyer333}) of Theorem \ref{Thm274i7tigfdyf}.
\end{proof}

\begin{proof}[Proof of part (\ref{Item89a8grye8gfaergeww}) of Theorem \ref{Thm274i7tigfdyf}]
Since \(\alpha_{2} < \beta^{-}_{m, w}\), we have \(\lambda_{2}\left(L_{G}\right) = \alpha_{2}\) (see Proposition \ref{Prop87986kiuglru3aw3er}). On the other hand, by Lemma \ref{Lem50hugiughbaujgrgadf}, \(S\) and \(p^{\prime}\left(\frac{2\pi}{n}\right)\) have the same sign. If \(S < 0\), since \(p\left(\frac{\pi}{n}\right) < 0\) (see Lemma \ref{Lem50hugiughbaujgrgadf}), the intermediate value theorem implies that \(L_{G_p}\) has an eigenvalue (\(p\) has a root) smaller than \(\alpha_{2}\). Denote this eigenvalue by \(\lambda\left(\overline{\delta}\right)\). When the modification is small, this eigenvalue is indeed the spectral gap of \(L_{G_{p}}\), as discussed in the proof of part (\ref{Item42iyo8yaor4gtwf1}). However, for large modification, there is the possibility of the emergence of non-real eigenvalues of \(L_{G_{p}}\). In such a scenario, there might be complex conjugates eigenvalues of \(L_{G_p}\) whose real part decreases and becomes smaller than \(\lambda\left(\overline{\delta}\right)\). This means that the spectral gap is not necessarily a real number, however, since \(\lambda\left(\overline{\delta}\right) < \alpha_2\), we always have \(\mathrm{Re}\left(\lambda_{2}\left(L_{G_{p}}\right)\right) < \lambda_{2}\left(L_{G}\right)\). This proves part (\ref{Item538youkugaliyfyejr11}) of the theorem.

The proof of part (\ref{Item538youkugaliyfyejr11}) is similar. For small modifications, as discussed in the proof of part (\ref{Item42iyo8yaor4gtwf2}) of the theorem, we have that \(\lambda_{2}\left(L_{G_{p}}\right) = \alpha_{2}\). In fact, \(\alpha_{2}\) is an eigenvalue of \(L_{G_p}\) for arbitrary \(\overline{\delta}\). However, as we discussed above, there is a possibility of the emergence of non-real eigenvalues for \(L_{G_{p}}\) when the modification \(\overline{\delta}\) is large. Thus, this might be the case that the real parts of these non-real eigenvalues reduce, and they become the spectral gap of \(L_{G_p}\). In any case, the property \(\mathrm{Re}\left(\lambda_{2}\left(L_{G_{p}}\right)\right) \leq \lambda_{2}\left(L_{G}\right)\) always holds. This proves part (\ref{Item538youkugaliyfyejr22}) of the theorem.
\end{proof}

\section{Conclusions}

This paper investigates how modifying a network affects the Laplacian spectral gap and, in turn, the collective dynamics (synchronization). We considered modifications of three networks with master-slave topology, where the master is a cycle, and the slave is a star. The cycle and star were chosen since they are common motifs in real-world networks. The considered modifications are of arbitrary size and not necessarily (sufficiently) small. Our investigation was based on the spectral analysis of the Laplacian matrices of these networks. Our results are rigorous and accompanied by simulations of networks of coupled Lorenz oscillators that admit our mathematical results.

One particular interest of this paper was a paradoxical scenario known as Braess's paradox in which improving network connectivity leads to functional failure, such as synchronization loss. We explored and classified such scenarios in these three network models. We have shown that this counter-intuitive scenario in our models is not rare at all. For instance, a critical value exists (proportional to the square root of the size of the star) for which Braess's paradox happens in models I and II if the cycle size exceeds this value.

\section*{Acknowledgements}

SB, NK, and DE were supported by TUBITAK Grant No. 119F125. DE was supported by the BAGEP Award of the Science Academy, Turkey. TP was supported by a Newton Advanced Fellowship of the Royal Society NAF\textbackslash R1\textbackslash 180236, by Serrapilheira Institute (Grant No. Serra- 1709-16124), and FAPESP (grant 2013/07375-0). SB and TP acknowledge the support of FAPESP (grant no. 2023/04294-0). SB acknowledges the support of Leverhulme Trust grant RPG-2021-072.

\section*{Data Availability}

The datasets generated during and/or analyzed during the current study are available from the authors at reasonable request.

\section*{Declarations}

\textbf{Conflict of interest} The authors declare that there is no conflict of interest.

\setcounter{theorem}{0}
\renewcommand{\thetheorem}{\Alph{section}.\arabic{theorem}} 

\appendix

\section{Synchronization of coupled R{\"o}ssler oscillators}\label{syncRoss}

We consider the network given model II, with an isolated isolated dynamics $f:\mathbb{R}^3 \rightarrow \mathbb{R}^3$ as the R{\"o}ssler oscillator, described as
\begin{equation}
\begin{split}
    \dot{x}&=-y-z\\
    \dot{y}&=x+0.2y\\
    \dot{z}&=0.2+z(x-5.7).
    \end{split}
\end{equation}
Its dynamics is also chaotic like the Lorenz attractor.  

\subsection{Hindering synchronization}

At time zero, we consider model II, where the sizes of the cycle and star subgraphs are  $n = m = 15$. They are initially connected via a directed link from the cycle to the star subgraph where \(w_0 = 1\), as shown in the left upper panel of Figure \ref{FigRos}.

We consider $H=\mathbf{I}$ as the coupling function between oscillators. We randomly choose initial conditions from the uniform distribution over \([0.5, 1)\), and integrate the network until time $t=1000s$. In this master-slave configuration, the network converges towards a synchronous motion, where the mean error $\langle E\rangle$ goes to zero.
At time $t=1000s$, we break the master-slave configuration by adding the cutset edges \(\delta_i = 1\), where \(i = 0,1,\hdots,m - 1\). This is represented as the red edges on the right upper panel of Figure \ref{FigRos}. Along with the network modification, we introduce small perturbations to each state randomly selected from the uniform distribution over \([0.01, 0.02)\). As observed, this modification leads to an instability of the synchronous motion. 

We can use Theorem B to understand the hindrance of synchronization. We add the cutset edges \(\delta_i = 1\), where \(i = 0,1,\hdots,m - 1\), which corresponds to the global modification case in Theorem B. We focus to the case \(b\) since \(\alpha_2=0.17, \beta_{15,1}^-=0.06\). We observe that \(\delta<\delta_0 \beta_{15,1}^+\) where \(\delta=15,\) \(\delta_0=1\), \(\beta_{15,1}^+=15.94\). Thus, we can use local modification results from Theorem B. The theorem says \(\lambda_2(L_{G_p})<\lambda_2(L_G)\) since \(\alpha_1<\beta_{15,1}^-\) where \(\alpha_1=0.04\). In this case, we know that \(G\) is more synchronizable than \(G_p\) by Definition 1.

\begin{figure}[h]
\includegraphics[width=16cm]{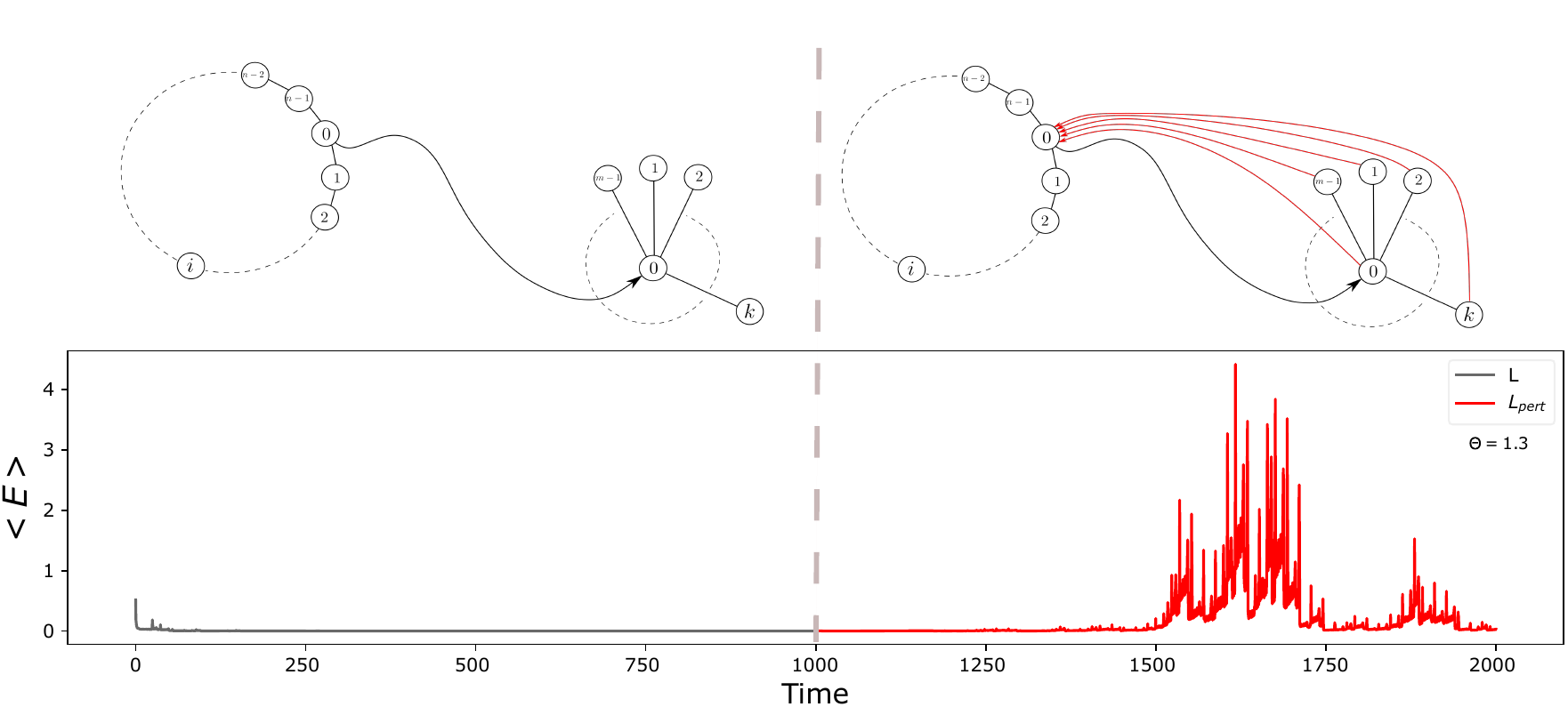} \caption{ Hindrance of synchronization due to link addition in networks of coupled R{\"o}ssler oscillators. The figure shows the synchronization error over time with link addition occurring at time $t=1000s$. The sizes of the cycle and star subgraphs are set to $n = 15$ and $m = 15$, and subgraphs are connected via a directed link from the cycle subgraph to the star subgraph where \(w_0 = 1\). At $t=1000s$, we add the red links to each system with unit weight. After perturbation, the system doesn't return to synchronization.} \label{FigRos}
\end{figure}

\subsection{Enhancing synchronization}

Again, we consider model II with $n = 10$ and $m = 20$ along with  $H=\mathbf{I}$ as the coupling function as before. 


We randomly choose initial conditions from the uniform distribution over \([0.5, 1)\), and integrate the network until time $t=1000s$. In this master-slave configuration, the network does not synchronize, as can be observed in the the mean synchronization error $\langle E\rangle$. At time $t=1000s$, we break the master-slave configuration by adding a cutset and modification edges are \(w_0 = 1\) and \(\delta_i = 1\), where \(i = 0,1,\hdots,m - 1\). This is represented as the red edges on the right upper panel of Figure \ref{FigRos_enhance}. After this modification, the synchronous state becomes stable, and the synchronization error converges to zero. 

\begin{figure}[h]
\includegraphics[width=16cm]{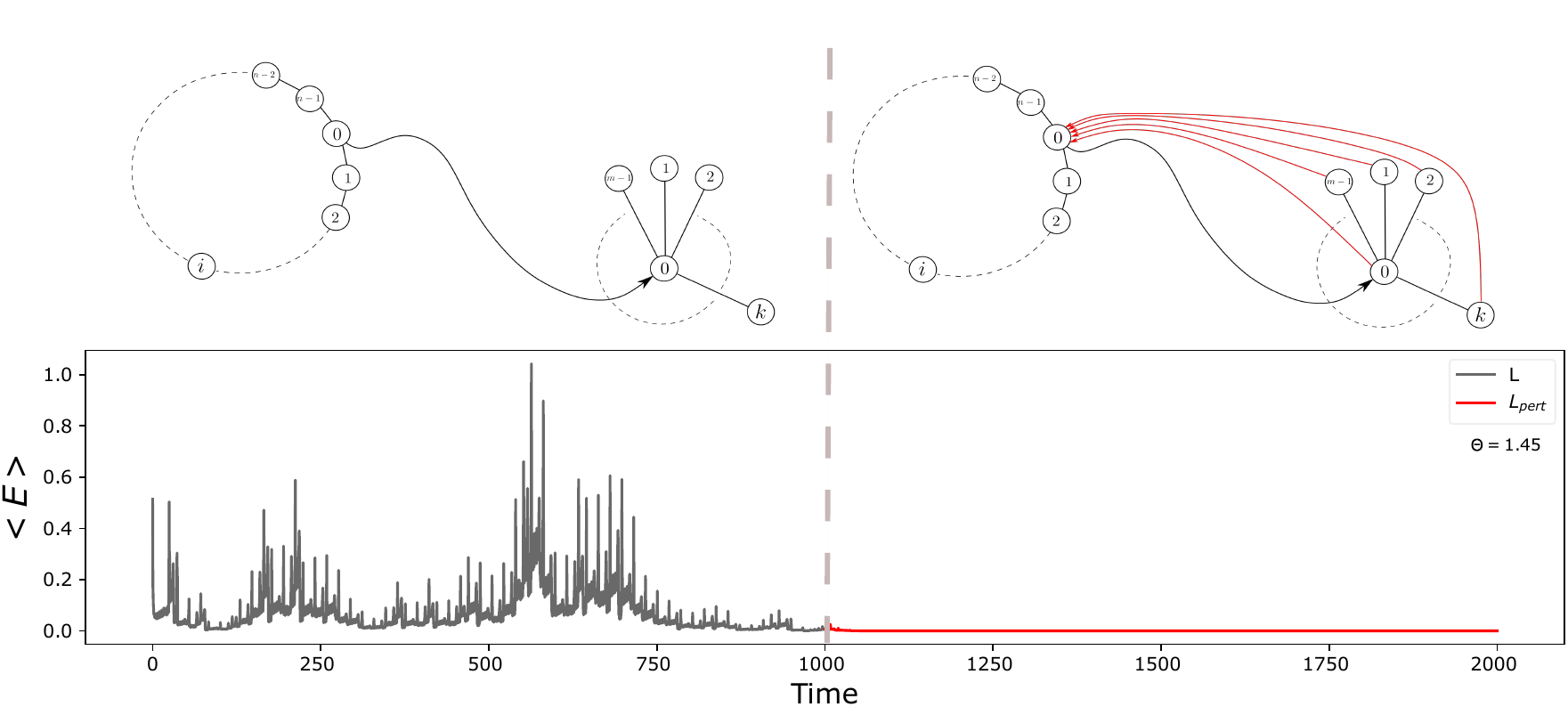}
 \caption{Enhancement of synchronization due to link addition in a network of coupled R{\"o}ssler oscillators.  The sizes of the cycle and star subgraphs are  $n = 10$ and $m = 20$, initially, and subgraphs are connected via a directed link from the cycle to the star where \(w_0 = 1\). We integrate the network until time $t=1000s$ and find the synchronization is unstable.  After the red links are added to the system at $t=1000s$,  the synchronization mean error $\langle E\rangle$ goes to zero, indicating that the modification leads to stable synchronous dynamics.} \label{FigRos_enhance}
\end{figure}

Again, we use Theorem B to deduce the synchronization enhancement. We add the cutset edges \(\delta_i = 1\), where \(i = 0,1,\hdots,m - 1\), which corresponds to the global modification case in Theorem B. We focus to the case \(b\) since \(\alpha_2=0.38, \beta_{20,1}^-=0.05\). We observe that \(\delta<\delta_0 \beta_{20,1}^+\) where \(\delta=20,\) \(\delta_0=1\), \(\beta_{20,1}^+=20.95\). Thus, we can use local modification results. The theorem says \(\lambda_2(L_{G})<\lambda_2(L_{G_p})\) since \(\beta_{20,1}^-<\alpha_1\) where \(\alpha_1=0.1\). In this case, we know that \(G_p\) is more synchronizable than \(G\) by Definition 1.

\section{Technical lemmas}\label{Section4311ohbiwubwavs}

\begin{lemma}\label{lem11y78ty998796f56drytr}
Consider the block matrix \(\left(\begin{smallmatrix}
A & B\\
C & D
\end{smallmatrix}\right)\) and assume \(D\) is invertible. Define \(E: = A - B D^{-1} C\). Then
\begin{enumerate}[(i)]
\item \(\mathrm{det}\left(\begin{smallmatrix}
A & B\\
C & D
\end{smallmatrix}\right) = \mathrm{det}\left(D\right) \times \mathrm{det}\left(E\right)\).
\item Assume \(E\) is invertible. Then, \(\left(\begin{smallmatrix}
A & B\\
C & D
\end{smallmatrix}\right)^{-1} = \left(\begin{smallmatrix}
E^{-1} & - E^{-1} BD^{-1}\\
-D^{-1} C E^{-1} & D^{-1} + D^{-1}C {E}^{-1} BD^{-1}
\end{smallmatrix}\right)\).
\end{enumerate}
\end{lemma}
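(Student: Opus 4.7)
The plan is to establish both claims simultaneously by exhibiting an explicit block LDU factorization of $\left(\begin{smallmatrix} A & B\\ C & D\end{smallmatrix}\right)$ in terms of the Schur complement $E = A - BD^{-1}C$. Since $D$ is invertible, I can verify directly that
\begin{equation*}
\begin{pmatrix} A & B\\ C & D\end{pmatrix}
= \begin{pmatrix} I & BD^{-1}\\ 0 & I\end{pmatrix}
\begin{pmatrix} E & 0\\ 0 & D\end{pmatrix}
\begin{pmatrix} I & 0\\ D^{-1}C & I\end{pmatrix},
\end{equation*}
by multiplying the right-hand side from left to right and reading off that the $(1,1)$-block becomes $E + BD^{-1}C = A$, while the other three blocks trivially match. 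This factorization is the single computational observation driving the whole lemma.

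For part (i), I will take determinants on both sides. The two unit-triangular factors have determinant one, and the middle factor is block-diagonal, so the multiplicativity of the determinant together with the block-diagonal formula $\det\left(\begin{smallmatrix} E & 0\\ 0 & D\end{smallmatrix}\right) = \det(E)\det(D)$ yields the claim. The block-diagonal determinant identity itself can be justified in one line, for instance by expanding cofactors along the last $\dim D$ rows.

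For part (ii), I invert each of the three factors explicitly: the two unit-triangular inverses are obtained by negating the off-diagonal block, and under the assumption that $E$ is invertible the middle factor inverts blockwise to $\left(\begin{smallmatrix} E^{-1} & 0\\ 0 & D^{-1}\end{smallmatrix}\right)$. Then I will multiply the three inverse factors in the reverse order,
\begin{equation*}
\begin{pmatrix} A & B\\ C & D\end{pmatrix}^{-1}
= \begin{pmatrix} I & 0\\ -D^{-1}C & I\end{pmatrix}
\begin{pmatrix} E^{-1} & 0\\ 0 & D^{-1}\end{pmatrix}
\begin{pmatrix} I & -BD^{-1}\\ 0 & I\end{pmatrix},
\end{equation*}
and read off the four blocks: the $(1,1)$-block is $E^{-1}$, the $(1,2)$-block is $-E^{-1}BD^{-1}$, the $(2,1)$-block is $-D^{-1}CE^{-1}$, and the $(2,2)$-block is $D^{-1} + D^{-1}CE^{-1}BD^{-1}$, exactly as claimed. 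As a sanity check I may verify by direct multiplication that this candidate inverse multiplies with $\left(\begin{smallmatrix} A & B\\ C & D\end{smallmatrix}\right)$ to the identity, using only $A = E + BD^{-1}C$.

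There is essentially no obstacle: the statement is a standard Schur-complement identity, and once the factorization is written down the proof reduces to bookkeeping. The only mild subtlety is ensuring the compatibility of block sizes (so that $B D^{-1} C$ makes sense and has the same shape as $A$), which I will note at the outset; the invertibility hypotheses on $D$ (throughout) and on $E$ (in part (ii)) are exactly what is needed to invert the middle block-diagonal factor.
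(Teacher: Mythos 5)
Your proof is correct; the paper does not prove this lemma itself but simply cites a standard reference (Meyer's \emph{Matrix Analysis and Applied Linear Algebra}), and the block LDU factorization via the Schur complement that you write down is exactly the standard argument found there. Both the determinant identity and the blockwise inverse follow from your factorization by the routine bookkeeping you describe, so nothing is missing.
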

\begin{proof}
See, e.g., \cite{meyer2000matrix}.
\end{proof}

\begin{lemma}\label{Lem391q0kieb6fb2j2iwbjyfu5ewwe}
For \(m\geq 4\), let \(X_{0}\), \(X_{1}\),..., \(X_{m-1}\) be real-valued functions defined on a subset of \(\mathbb{R}\), and consider the \(m\times m\) matrix
\begin{equation*}
M = M\left(\lambda\right) = \left(\begin{array}{c|c}
m -\lambda  + X_{0}\left(\lambda\right) &
\begin{array}{cccc}
-1 + X_{1}\left(\lambda\right) & -1 + X_{2}\left(\lambda\right) & \cdots & -1 + X_{m-1}\left(\lambda\right)
\end{array}
\\
\hline
\mathbf{1}_{m-1} & \left(1-\lambda\right) \mathrm{I}
\end{array}\right).
\end{equation*}
Let \(\lambda_0\in\mathbb{R}\), and assume that all the functions \(X_{i}\) are defined at \(\lambda_{0}\). 
\begin{enumerate}[(i)]
\item\label{Itemm892f34rgufgrye9uyd111} There exists \(0\neq v\in \mathbb{R}^m\) such that \(M(\lambda_{0}) v = 0\) if and only if \(\lambda_0 = 1\) or \(\lambda = \lambda_0\) satisfies
\begin{equation}\label{eq52kugyuyfgvhgcutcyrsaalow}
\lambda^{2} - \left[1 + m + X_{0}\left(\lambda\right)\right]\lambda + 1 + \sum_{i=0}^{m-1} X_{i}\left(\lambda\right) = 0.
\end{equation}
\item\label{Itemm892f34rgufgrye9uyd222} Assume that there exists \(1\leq i\leq m-1\) such that \(X_{i}(1) \neq 1\). Then, the vector subspace \(E^{\mathrm{right}}\subset \mathbb{R}^{m}\) (resp. \(E^{\mathrm{left}}\subset \mathbb{R}^{m}\)) of all the solutions \(v\) of the equation \(M(1) v = 0\) (resp. \(v^{\top} M(1) = 0\)) has \(m-2\) dimensions.
\item\label{Itemm892f34rgufgrye9uyd333} Let \(v = (v_0, v_1, \ldots, v_{m-1})\in\mathbb{R}^{m}\). Then, any \(v \in E^{\mathrm{right}}\) satisfies the property \(v_{0} = 0\). Moreover, if there exists \(1\leq i\leq m-1\) such that \(X_{i}(1) \neq 1\), then, any \(v \in E^{\mathrm{left}}\) satisfies the property \(v_{0} = 0\) too.
\end{enumerate}
\end{lemma}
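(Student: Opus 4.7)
The plan is to apply Schur-complement reduction (Lemma \ref{lem11y78ty998796f56drytr}) to $M(\lambda)$ with the natural block decomposition $A = m - \lambda + X_0(\lambda)$ (scalar), $B = (-1 + X_1(\lambda), \ldots, -1 + X_{m-1}(\lambda))$, $C$ the lower-left block column, and $D = (1-\lambda) I_{m-1}$. For $\lambda_0 \neq 1$ the block $D$ is invertible with $D^{-1} = (1-\lambda)^{-1} I$, so the lemma gives $\det M(\lambda_0) = (1-\lambda_0)^{m-1}\bigl(A - BD^{-1}C\bigr)$. The product $BD^{-1}C$ collapses to the scalar $\pm(1-\lambda)^{-1}\sum_{i=1}^{m-1}(-1 + X_i(\lambda))$, and after multiplying through by $(1-\lambda)$ and grouping terms this reduces to exactly the left-hand side of (\ref{eq52kugyuyfgvhgcutcyrsaalow}). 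So for $\lambda_0 \neq 1$, $M(\lambda_0)$ has a nontrivial kernel if and only if $\lambda_0$ satisfies (\ref{eq52kugyuyfgvhgcutcyrsaalow}).

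For $\lambda_0 = 1$, the block $D$ vanishes, so each of the rows $2, \ldots, m$ of $M(1)$ has its only possibly nonzero entry in column $1$ (namely the corresponding $\pm 1$ entry of $C$). All such rows are therefore scalar multiples of $e_1^\top$, giving $\mathrm{rank}\, M(1) \leq 2$ unconditionally and $\dim \ker M(1) \geq m - 2 \geq 2 > 0$; this completes part (\ref{Itemm892f34rgufgrye9uyd111}). Under the hypothesis of part (\ref{Itemm892f34rgufgrye9uyd222}) that some $X_i(1) \neq 1$ with $i \geq 1$, the first row of $M(1)$ has a nonzero entry outside column $1$ and is hence linearly independent of $e_1^\top$, so $\mathrm{rank}\, M(1) = 2$ exactly and $\dim E^{\mathrm{right}} = m - 2$. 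An analogous column analysis---columns $2, \ldots, m$ are each supported only in row $1$ and thus span a line through $e_1$, while column $1$ is independent of $e_1$ because the $\pm 1$ entries of $C$ are nonzero---gives $\mathrm{rank}\, M(1) = 2$ again, hence $\dim E^{\mathrm{left}} = m - 2$.

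Part (\ref{Itemm892f34rgufgrye9uyd333}) then reads off directly from the linear equations. For $v \in E^{\mathrm{right}}$, rows $2, \ldots, m$ of $M(1) v = 0$ each say $(\pm 1)\, v_0 = 0$, so $v_0 = 0$. For $v \in E^{\mathrm{left}}$ under the extra hypothesis, columns $2, \ldots, m$ of $v^\top M(1) = 0$ yield $(-1 + X_i(1))\, v_0 = 0$ for every $1 \leq i \leq m-1$, and choosing an index $i$ with $X_i(1) \neq 1$ forces $v_0 = 0$. I do not expect a deep obstacle: the entire lemma amounts to a Schur-complement computation followed by a rank count of a matrix whose $D$-block collapses to zero at $\lambda = 1$. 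The only place that requires genuine care is the sign bookkeeping in the $BD^{-1}C$ computation---one must track the signs of the $B$ and $C$ blocks precisely to recover (\ref{eq52kugyuyfgvhgcutcyrsaalow}) in the stated form, and to verify that the $\pm 1$ entries of $C$ are indeed nonzero so that the rank-count argument for $E^{\mathrm{left}}$ goes through unconditionally.
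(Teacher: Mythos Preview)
Your proposal is correct and takes essentially the same route as the paper: the Schur-complement determinant computation for $\lambda_0\neq 1$ is precisely the paper's elimination of $v_1,\ldots,v_{m-1}$ from the lower rows and substitution into the first row, and for $\lambda_0=1$ both arguments read off the kernel of $M(1)$ directly (you via a rank count, the paper via the explicit description (\ref{eq80279867tuyfwytchgcss})). Your caveat about sign bookkeeping is apt: with the lower-left block typeset as $\mathbf{1}_{m-1}$, the Schur complement does not reduce to (\ref{eq52kugyuyfgvhgcutcyrsaalow}) verbatim, and the paper's own proof carries a compensating sign slip at the step $v_i = v_0/(1-\lambda)$.
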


\begin{proof}
For \(v = (v_0, v_1, \ldots, v_{m-1})\in\mathbb{R}^{m}\), we have \(M(\lambda) v = 0\) if and only if
\begin{eqnarray}
& \left[m-\lambda + X_{0}\left(\lambda\right)\right] v_{0} + \sum_{i=1}^{m-1} \left[-1 + X_{i}\left(\lambda\right)\right] v_{i} = 0,\quad \mathrm{and}\label{eq02y7tyfvnaokjeyewrwyqqqa}\\
& v_{0} + \left(1-\lambda\right) v_{i} = 0,\quad \mathrm{for\,\, all \,\,} 1 \leq i \leq m-1.\label{eq909798it78r6futftdtejbzjhvs}
\end{eqnarray}
This implies that for \(\lambda = 1\), the equation \(M(1) v = 0\) has a solution \(v\) if and only if \(v\in E^{\mathrm{right}}\) for
\begin{equation}\label{eq80279867tuyfwytchgcss}
E^{\mathrm{right}} :=\lbrace v\in\mathbb{R}^{m} \vert\quad v_{0} = 0 \mathrm{\,\, and \,\,} \langle\left(v_{1}, v_{2}, \cdots, v_{m-1}\right), \left(X_{1}\left(1\right) - 1,\ldots, X_{m-1}\left(1\right)-1\right)\rangle = 0\rbrace,
\end{equation}
where \(\langle \cdot , \cdot \rangle\) is the standard Euclidean inner product in \(\mathbb{R}^{m-1}\).

If \(M(1)v = 0\), for \(\lambda\neq 1\), has a solution \(v\), then it follows from  (\ref{eq909798it78r6futftdtejbzjhvs}) that \(v_{0} \neq 0\) and \(v_{i} = \frac{v_{0}}{1-\lambda}\), for all \(1 \leq i \leq m-1\). Substituting this into (\ref{eq02y7tyfvnaokjeyewrwyqqqa}), and multiplying the derived equation by \(1-\lambda\) give (\ref{eq52kugyuyfgvhgcutcyrsaalow}). This proves part (\ref{Itemm892f34rgufgrye9uyd111}).

Part (\ref{Itemm892f34rgufgrye9uyd222}) follows from (\ref{eq80279867tuyfwytchgcss}). To prove part (\ref{Itemm892f34rgufgrye9uyd333}), let \(v\in E^{\mathrm{left}}\). By \(v^{\top} M(1) = 0\), we have that if there exists \(1\leq i \leq m-1\) such that \(X_{i}(1)\neq 1\), then \(v_{0} = 0\) and \(v_{1} + v_{2} + \cdots + v_{m-1} = 0\). The set of all \(v\)s satisfying these properties is a \((m-2)\)-dimensional subspace of \(\mathbb{R}^{m}\). This finishes the proof of the lemma.
\end{proof}

\bibliography{Myreferences.bib}

\end{document}